\newtheorem{theorem}{Theorem}[section]
\newtheorem{proposition}[theorem]{Proposition}
\newtheorem{lemma}[theorem]{Lemma}
\newtheorem{definition}[theorem]{Definition}
\newtheorem{problem}[theorem]{Problem}
\newtheorem*{remark*}{Remark}
\numberwithin{equation}{section}
\numberwithin{theorem}{section}
\def\th {\vartheta}
\def\ep {\epsilon}
\def\f {\frac}
\def\rd {\partial}
\def\ls {\lesssim}
\def\de {\delta}
\def\Omg{\Omega}
\newcommand{\ud}{\mathrm{d}}
\def\bnab{\overline{\partial}}
\def\srd {\slashed{\partial}}
\def\sg{\slashed{g}}
\newcommand{\pfstep}[1]{\vspace{.5em} {\it \noindent #1.}}
\newcommand{\bea}{\begin{eqnarray}}
\newcommand{\eea}{\end{eqnarray}}
\def\beaa{\begin{eqnarray*}}
\def\eeaa{\end{eqnarray*}}
\newcommand{\tr}{\mathrm{tr}}
\renewcommand{\div}{\mathrm{div}}
\begin{document}

\title[Non-existence of trapped surfaces]{On the non-existence of trapped surfaces \\
under low-regularity bounds}
\author{Jonathan Luk}
\address{Department of Mathematics, Stanford University, 450~Serra~Mall~Building~380,~Stanford~CA~94305-2125,~USA}
\email{jluk@stanford.edu}
\author{Georgios Moschidis}
\address{Department of Mathematics, Princeton University, 509 Fine Hall, Washington Rd, Princeton~NJ~08544,~USA}
\email{gm6@math.princeton.edu}

\maketitle
\vspace{-2ex}
\begin{center}{\it\large Dedicated to Professor Demetrios Christodoulou, with admiration}
\end{center}

\begin{abstract}
    The emergence of trapped surfaces in solutions to the  Einstein field equations is intimately tied to the well-posedness properties of the corresponding Cauchy problem in the low regularity regime.  In this paper, we study the question of existence of trapped surfaces already at the level of the initial hypersurface when the scale invariant size of the Cauchy data is assumed to be bounded. Our main theorem states that no trapped surfaces can exist initially when the Cauchy data are close to the data induced on a spacelike hypersurface of Minkowski spacetime (not necessarily a flat hyperplane) in the Besov $B^{3/2}_{2,1}$ norm. We also discuss the question of extending the above result to the case when merely smallness in $H^{3/2}$ is assumed.
\end{abstract}

\section{Introduction}

The celebrated \emph{incompleteness theorem} by Penrose \cite{rP1965} asserts that, for initial data $(\Sigma, g, k)$ to the Einstein vacuum equations 
$$\mathrm{Ric}(\bar{g}) = 0$$
prescribed on a non-compact initial hypersurface $\Sigma$, the presence of a \emph{trapped surface} in the corresponding maximal development $(\mathcal{M}, \bar{g})$ implies that $(\mathcal{M}, \bar{g})$ is causally geodesically incomplete. As a special case, $(\mathcal{M}, \bar{g})$ is necessarily incomplete if a compact trapped surface is already contained inside the initial hypersurface.

It can be easily verified that Minkowski spacetime $(\mathbb{R}^{3+1}, m)$ is geodesically complete. Thus, in particular, $(\mathbb{R}^{3+1}, m)$ contains no compact trapped surface $S$. The existence of such a surface $S$ inside an initial data set $(\mathbb R^3, g, k)$ can be, therefore, viewed as an indication that $(g,k)$ is far away from the data induced on any spacelike slice of Minkowski spacetime. Indeed, it is easy to check that, when $(g,k)$ is close to the data $(g_0,k_0)$ induced on a spacelike embedding of $\mathbb R^3$ in $(\mathbb R^{3+1},m)$ in a sufficiently strong norm, then $(\mathbb R^3, g,k)$ contains no trapped surfaces. (In the case where $(g,k)$ is close to $(e,0)$, the induced data on the $\{t=0\}$, in a sufficiently strong norm, the global nonlinear stability of Minkowski spacetime, established by Christodoulou--Klainerman \cite{CK}, moreover implies that the corresponding maximal development is also geodesically complete. In principle, one expects that such a global nonlinear stability result also holds for $(g,k)$ close to the induced data on more general spacelike hypersurfaces, for instance by adapting the proof of Lindblad--Rodnianski \cite{hLiR2010}.) \textbf{The purpose of this article is to prove that smallness in a sharp scale-invariant low-regularity norm is already sufficient to rule out the existence of trapped surfaces in the initial hypersurface.} We refer the reader to Section~\ref{sec:background} for connections of this problem to other questions about the Cauchy problem for the Einstein equations in the regime of low regularity.

Our main result is the following:

\begin{theorem}\label{thm:main}
Let $\Omega \subseteq \mathbb R^3$ be a domain and $f:\Omega \rightarrow \mathbb R$ be a smooth function such that the graph $\Sigma_f = \{(t,x^1,x^2,x^3): t=f(x^1,x^2,x^3)\}$ is a uniformly spacelike hypersurface of Minkowski spacetime $(\mathbb R^{3+1},m)$. Then, there exists a constant $\epsilon_0 > 0$ depending only on $\| \partial^2 f\|_{L^\infty (\Omega)}$ and $\inf_\Omega (1-|\rd f|_e)$ such that the following holds: Let $(g,k)$ be a pair of a Riemannian metric and a symmetric covariant $2$-tensor on $\Omega$ satisfying the bound
$$\|g -g_0 \|_{B^{3/2}_{2,1}(\Omega)} + \| k-k_0\|_{B^{1/2}_{2,1}(\Omega)} \leq \ep_0,$$
where $(g_0,k_0)$ are the pullbacks of the Riemannian metric and second fundamental form induced on $\Sigma_f$ by the Minkowski metric $m$. Then, there does not exist a smooth embedded compact trapped $2$-surface $S$ in $(\Omega,g,k)$. 
\end{theorem}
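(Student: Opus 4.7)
The plan is to argue by contradiction, combining the trapped condition with a sharp Willmore-type lower bound on $\int_S H^2$ and a critical trace estimate for $k-k_0$. Suppose $S\subset\Omega$ is a smooth compact embedded trapped 2-surface. With $H$ denoting its mean curvature in $(\Omega,g)$ and $\tr_S k$ the trace of $k$ along $S$, the two null expansions $\theta_{\pm}:=\tr_S k\pm H$ are both strictly negative on $S$, forcing the pointwise bound $H^2<(\tr_S k)^2$ and hence the integrated inequality
\begin{equation*}
\int_S H^2\,dA_g \;<\; \int_S (\tr_S k)^2\,dA_g. \qquad (\star)
\end{equation*}
The aim is to show that $(\star)$ is incompatible with smallness in the norms of the theorem.

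To bound the left-hand side of $(\star)$ from below, I would combine the Riemannian Gauss--Codazzi identity $K_S = K_\Omega(T_S)+\lambda_1\lambda_2$, the pointwise inequality $\lambda_1\lambda_2\le H^2/4$, and the Gauss--Bonnet theorem to obtain
\[
\f{1}{4}\int_S H^2\,dA_g \;\ge\; \int_S K_S^+\,dA_g - \int_S \bigl(K_\Omega(T_S)\bigr)^+\,dA_g.
\]
In flat ambient space, the Willmore--Li--Yau inequality gives $\int_S K_S^+\ge 4\pi$ for any embedded closed surface of any topological type. The ambient-curvature error term depends on two derivatives of $g$, and since $\rd^2(g-g_0)$ only lies in the weak space $B^{-1/2}_{2,1}(\Omega)$, the natural strategy is to integrate by parts along $S$: one rewrites this error as a sum of integrals of $\rd(g-g_0)$ paired against the second fundamental form of $S$ and background quantities, which can then be controlled using the fact that $\rd(g-g_0)\in B^{1/2}_{2,1}(\Omega)$ admits an $L^2(S)$-trace.

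To bound the right-hand side of $(\star)$ from above, I would split $k=k_0+(k-k_0)$. The background piece $\int_S (\tr_S k_0)^2\,dA_g$ is a priori not small (it is only bounded by $\|k_0\|_\infty$ and $\mathrm{Area}(S)$); I would absorb it by applying the same Gauss--Codazzi identity to the Minkowski data $(g_0,k_0)$ and invoking the vacuum constraint equations satisfied by $(g_0,k_0)$. This produces the cancellation needed to reduce the problem effectively to the flat model case $k_0\equiv 0$. The perturbation term $\int_S (\tr_S (k-k_0))^2\,dA_g$ is then handled via the critical Besov trace estimate
\[
\|f|_S\|_{L^2(S)} \;\ls\; \|f\|_{B^{1/2}_{2,1}(\Omega)},
\]
which is precisely the endpoint restriction inequality for codimension-one hypersurfaces and relies crucially on the $\ell^1$-summability of Littlewood--Paley pieces built into $B^{1/2}_{2,1}$.

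The hardest step will be making this trace estimate effective uniformly over all admissible $S$: we have essentially no a priori geometric control on $S$, since the trapped condition only gives $|H|\le|\tr_S k|$ pointwise, and $|\tr_S k|$ itself is a priori only in $L^2(S)$. My expectation is to obtain uniformity by pairing the Besov restriction theorem with the Michael--Simon Sobolev inequality on $S$ (whose constants depend only on the ambient dimension), supplemented by a dyadic decomposition of $k-k_0$ at scales dictated by the local geometry of $S$ in $\Omega$. Once such uniform bounds are in hand, combining $(\star)$ with the Willmore-type lower bound yields the contradiction $16\pi - O(\ep_0)<O(\ep_0^2)$ for $\ep_0$ sufficiently small, completing the proof.
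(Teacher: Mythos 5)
Your high-level plan (a contradiction argument pitting a Willmore-type lower bound against an upper bound obtained from a critical Besov trace theorem) is the same as the paper's, and your observation that the trapped condition forces the integrated inequality $(\star)$ is the right starting point. However, you have correctly located — but not resolved — the central difficulty, and the unresolved piece is precisely what the paper's main new idea is designed to handle.

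The gap is the uniformity of the trace estimate. As stated, $\|f|_S\|_{L^2(S)} \lesssim \|f\|_{B^{1/2}_{2,1}(\Omega)}$ simply cannot hold with a constant independent of the embedding $S \hookrightarrow \Omega$: if $S$ is a long, thin tube that snakes back and forth many times through a single small ball $B$, then $\mathrm{Area}(S\cap B)$ is unbounded while the $B^{1/2}_{2,1}(\Omega)$ norm of a bump supported in $B$ stays fixed, so $\|f|_S\|_{L^2(S)}$ blows up. Michael--Simon controls Sobolev norms of functions intrinsic to $S$ in terms of its mean curvature, and the dyadic decomposition you invoke does not change the fact that the surface can visit a given dyadic cube an unbounded number of times. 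Nothing in the trapped condition alone prevents this.

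The missing idea is to throw away most of $S$ and work only on the subset $S^+$ of points where the outgoing Minkowskian null hyperplane is a supporting hyperplane for $f(S)$ in $\mathbb{R}^{3+1}$ (the ``null convex hull boundary''). This single device fixes all three parts of the argument at once. First, for the lower bound: the (null) Gauss map from $S^+$ to $\mathbb{S}^2$ is surjective, which directly yields $\int_{S^+}(\tr\chi_0)^2 \gtrsim 1$ by a change of variables — this is exactly what makes your $\int_S K_S^+ \ge 4\pi$ work in the flat case, and it is the correct substitute for Gauss--Bonnet/Li--Yau in the graphical Minkowskian setting, without ever touching ambient curvature or the constraint equations for $(g_0,k_0)$. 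Second, for the trace estimate: the supporting-hyperplane property forces $S^+$ to be, globally, a union of a \emph{bounded} number of graphs over coordinate planes with gradient $\le 10$ (boundedly many because the null directions live in a compact set $\mathbb{S}^2$), so each piece has a priori bounded multiplicity and the one-dimensional fundamental-theorem-of-calculus trace argument goes through uniformly. Third, for the pointwise estimate: on $S^+$, the null second fundamental form $DL^{f(S)}$ is semi-positive definite, so $\tr\chi_0 \ge 0$ there; combined with the trapped condition $\tr_{\sg}(k+h_g)<0$, which you should compare against $\tr\chi_0$ rather than against $H$ alone, one gets $0 \le \tr\chi_0 \le |\tr\chi_0 - c\,\tr_{\sg}(k+h_g)|$, which is then estimated by $|\rd(g-g_0)| + |k-k_0|$ plus lower-order terms — no integration by parts along $S$ needed, and no appeal to the vacuum constraints. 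Your sketch, as written, tries to run the Willmore inequality and the trace inequality on the entire surface $S$, where neither holds with uniform constants; the theorem is false if you cannot localize to $S^+$.
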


\noindent For the definition of a trapped surface inside $(\Omega,g,k)$, see Section \ref{subsec:Surfaces}. For the definition of the Besov spaces $B^{s}_{2,1}(\Omega)$, see Section \ref{subsec:Besov spaces}. Let us note that the norms appearing in Theorem \ref{thm:main} involve the components of the corresponding tensor fields in standard Cartesian coordinate system on $\mathbb R^3$.

\begin{remark*}
 Notice that Theorem \ref{thm:main} applies to the case of general pairs of $(g,k)$ which do not necessarily have to obey the constraint equations. In the simplest case when $f\equiv 0$, the background tensors $(g_0,k_0)$ reduce to $(e,0)$, where $e$ is the Euclidean metric on $\mathbb R^3$.
\end{remark*}


The Besov norms appearing in Theorem~\ref{thm:main} are invariant under the scaling 
\begin{equation}\label{eq:homothetic.scaling}
(g_{ij}(x),k_{ij}(x)) \rightarrow  ( (g_\lambda)_{ij}, (k_{\lambda})_{ij}) \doteq  (g_{ij}(\lambda x),\lambda k_{ij}(\lambda x)).
\end{equation}
It is clear that the same result as in Theorem~\ref{thm:main} cannot hold for a norm below scaling, e.g.~$H^s(\mathbb R^3)\times H^{s-1}(\mathbb R^3)$ for $s< \f 32$ since (in the $f\equiv 0$ case) one can construct counterexamples simply by rescaling as follows. Take a particular pair $(g,k)$ such that $(g-e,k)$ is compactly supported and such that a smooth embedded compact trapped $2$-surface is present. Consider the rescaled data sets $(g_\lambda, k_\lambda)$ as in  \eqref{eq:homothetic.scaling}. When $\lambda \to \infty$, the $H^s(\mathbb R^3) \times H^{s-1}(\mathbb R^3)$ norm of $(g_\lambda - e, k_\lambda)$ becomes arbitrarily small (for $s<\f 32$), but a compact trapped surface would still be present in all rescaled data sets.

Moreover, in Proposition~\ref{prop:sharp}, we show that the result in Theorem~\ref{thm:main} would still fail if $B^{3/2}_{2,1} \times B^{1/2}_{2,1}$ norm is replaced by $H^{3/2}\times H^{1/2}$. (Note that the proof of Theorem~\ref{thm:main} relies on trace estimates and Sobolev embedding estimates which no longer hold when $B^{3/2}_{2,1} \times B^{1/2}_{2,1}$ norm is replaced by $H^{3/2}\times H^{1/2}$.)
\begin{proposition}\label{prop:sharp}
There exist a sequence $\{(g^{(j)}, k^{(j)}) \}_{j=1}^\infty$ where $g^{(j)}$ are smooth asymptotically flat Riemannian metrics on $\mathbb R^3$, $k^{(j)}$ are smooth and compactly supported symmetric covariant $2$-tensors on $\mathbb R^3$ such that $$\|g^{(j)} -e \|_{H^{3/2}(\mathbb R^3)} + \| k^{(j)} \|_{H^{1/2}(\mathbb R^3)} \leq 2^{-j}$$
(where $e$ is the Euclidean metric on $\mathbb R^3$), but there is a smooth embedded compact trapped $2$-sphere $\Sigma$ in $(\mathbb R^3,g^{(j)},k^{(j)})$ for all $j \in \mathbb N$.
\end{proposition}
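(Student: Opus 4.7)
The plan is to take $g^{(j)} \equiv e$ for all $j$, so that $\|g^{(j)} - e\|_{H^{3/2}} = 0$ trivially, and focus the construction on $k^{(j)}$. The trapped $2$-sphere will always be the fixed unit sphere $S := \{|x|=1\} \subset \mathbb R^3$, whose Euclidean mean curvature is $H_S = 2$. Under the standard sign convention of Section~\ref{subsec:Surfaces}, the trapping condition for $S$ in $(\mathbb R^3, e, k^{(j)})$ reduces to a pointwise inequality of the form $\pm \tr_S k^{(j)}|_S > H_S = 2$. The task is therefore to produce $k^{(j)}$ whose trace along $S$ is pointwise bounded below (in absolute value) by $3$, while $\|k^{(j)}\|_{H^{1/2}(\mathbb R^3)} \to 0$.

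This is possible precisely because the critical-exponent trace operator $H^{1/2}(\mathbb R^3) \to L^2(S)$ fails to be continuous, whereas the analogous trace on $B^{1/2}_{2,1}(\mathbb R^3)$ does extend continuously -- which is consistent with Theorem~\ref{thm:main}. Concretely, fix a smooth even compactly supported $\psi : \mathbb R \to \mathbb R$ with $\psi(0) = 1$, choose $N_j \in \mathbb N$ to be determined, and set
\[
F^{(j)}(\rho) := \frac{3}{2 N_j} \sum_{n=1}^{N_j} \psi(2^n \rho), \qquad k^{(j)}_{ij}(x) := \pm F^{(j)}(|x|-1)\, \delta_{ij},
\]
which is smooth and compactly supported in a tubular neighborhood of $S$. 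Since $F^{(j)}(0) = 3/2$, we have $|\tr_S k^{(j)}|_S| \equiv 2 F^{(j)}(0) = 3 > 2 = H_S$, so for the appropriate choice of sign $S$ is a strictly trapped smooth $2$-sphere in $(\mathbb R^3, e, k^{(j)})$.

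The core quantitative bound to establish is $\|k^{(j)}\|_{H^{1/2}(\mathbb R^3)} \leq C\, N_j^{-1/2}$, from which $\|k^{(j)}\|_{H^{1/2}} \leq 2^{-j}$ follows by taking $N_j \geq C \cdot 4^j$. This will be proved in two steps. First, for each dyadic shell $u_n(x) := \psi(2^n(|x|-1))$, the explicit radial Fourier formula $\hat u_n(\xi) \sim \frac{4\pi \sin|\xi|}{|\xi|\, 2^n}\, \hat \psi(|\xi|/2^n)$ yields
\[
\|u_n\|_{L^2(\mathbb R^3)}^2 \sim 2^{-n}, \qquad \|u_n\|_{\dot H^{1/2}(\mathbb R^3)}^2 \sim 1,
\]
with constants independent of $n$ -- the latter bound being the scale-invariance of $\dot H^{1/2}(\mathbb R^3)$ applied to these thin codimension-$1$ profiles. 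Second, the off-diagonal pairings $\langle u_n, u_m\rangle_{\dot H^{1/2}(\mathbb R^3)}$ decay geometrically like $2^{-|n-m|}$, yielding Littlewood--Paley quasi-orthogonality $\bigl\|\sum_n c_n u_n\bigr\|_{H^{1/2}(\mathbb R^3)}^2 \lesssim \sum_n c_n^2$. Inserting $c_n = 3/(2 N_j)$ then produces the desired estimate.

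The main technical obstacle is the quasi-orthogonality of the shells in the second step: since the $u_n$ are nested in physical space (all supported in shrinking tubes near $S$), orthogonality must come entirely from Fourier localization. The transform $\hat u_n$ has magnitude $\sim |\xi|^{-1}\, 2^{-n}$ on $|\xi| \lesssim 2^n$ and rapid decay beyond, together with the $\sin|\xi|$ oscillation intrinsic to 3D Fourier transforms of radial functions; a direct computation of $\int |\xi|\, \hat u_n \overline{\hat u_m}\, d\xi$, after the change of variables $\eta = |\xi|/2^{\max(n,m)}$ and use of the rapid decay of $\hat\psi$, then extracts the $2^{-|n-m|}$ decay.
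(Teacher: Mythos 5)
Your proposal is correct and follows essentially the same approach as the paper: fix $g^{(j)}=e$, take the unit sphere as the candidate trapped surface, and build $k^{(j)}$ from a spherically symmetric radial profile that is $O(1)$ on $\{|x|=1\}$ but $o(1)$ in $H^{1/2}(\mathbb R^3)$, exploiting the failure of the critical trace theorem $H^{1/2}(\mathbb R^3)\to L^2(\mathbb S^2)$. The only difference is that the paper invokes Lemma~\ref{lem:trace.example} as a standard fact, while you explicitly construct the profile as a normalized lacunary sum $\frac{3}{2N_j}\sum_{n=1}^{N_j}\psi(2^n(|x|-1))$ and verify the $\dot H^{1/2}$-quasi-orthogonality of the dyadic shells via the radial Fourier formula; this is a valid instantiation of the cited lemma (and your pure-trace ansatz $k_{ij}\propto\delta_{ij}$, versus the paper's pure-shear one, is immaterial since only the tangential trace of $k$ enters the trapping condition).
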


Since initial data sets to the evolution problem must satisfy the constraint equations, Proposition~\ref{prop:sharp} may not be fully satisfactory. Instead, one may want to look for counterexamples when the constraint equations are imposed. To study this, we turn to the Einstein--scalar field system in spherical symmetry. First, we show that counterexamples can still be found in this setting. (In particular, this shows that the dominant energy condition would not be sufficient to rule out counterexamples.) 
\begin{proposition}\label{prop:sharp.constraint}
There exists a sequence of initial data sets $\{(g^{(j)}, k^{(j)}; \psi_0^{(j)},\psi_1^{(j)}) \}_{j=1}^\infty$ for the Einstein--scalar field system on $B(0,1)\subset \mathbb R^3$, i.e.~a sequence of Riemannian metrics $g^{(j)}$, symmetric $(0,2)$-tensors $k^{(j)}$ and functions $\psi_0^{(j)}, \psi_1^{(j)} :B(0,1)\rightarrow \mathbb R$ satisfying the constraint equations
\begin{align}
R[g^{(j)}]+(\tr_{g^{(j)}} k^{(j)})^2 - \| k^{(j)} \|^2_{g^{(j)}} &= \|\nabla \psi_0^{(j)} \|^2_{g^{(j)}}+(\psi_1^{(j)})^2, \label{Hamiltonian constraint}\\
\div_{g^{(j)}} k^{(j)} - \nabla (\tr_{g^{(j)}} k^{(j)}) &= \psi_1^{(j)} \nabla \psi_0^{(j)} \label{Momentum constraint},
\end{align}
 such that 
\begin{equation}
\|g^{(j)} -e \|_{H^{3/2}(B(0,1))} + \| k^{(j)} \|_{H^{1/2}(B(0,1))}  \leq 2^{-j}, \quad \forall j \in \mathbb N \label{Smallness counterexample}
\end{equation}
and $\big( B(0,1);g^{(j)},k^{(j)} \big)$ contains a  smooth embedded compact trapped $2$-sphere $\Sigma$ for all $j \in \mathbb N$.
\end{proposition}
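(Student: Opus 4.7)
The plan is to work throughout in spherical symmetry, reducing the constraints \eqref{Hamiltonian constraint}--\eqref{Momentum constraint} to a pointwise $2\times 2$ system for the scalar field data. For spherically symmetric $(g, k, \psi_0, \psi_1)$, set $G(r) := R[g] + (\tr_g k)^2 - |k|_g^2$ and let $F(r)$ be the unique nontrivial component of $\div_g k - \nabla \tr_g k$. The constraints then read
\[
\frac{(\psi_0'(r))^2}{g_{rr}(r)} + \psi_1(r)^2 = G(r), \qquad \psi_1(r)\,\psi_0'(r) = \frac{F(r)}{\sqrt{g_{rr}(r)}},
\]
which admits a real solution (smooth away from simultaneous zeros) whenever the pointwise condition $G(r)\,g_{rr}(r)^{1/2} \geq 2|F(r)|$ holds. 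Thus Proposition~\ref{prop:sharp.constraint} reduces to producing a spherically symmetric sequence $(g^{(j)}, k^{(j)})$ satisfying \eqref{Smallness counterexample}, containing a smooth trapped $2$-sphere, and obeying the above sign condition.

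For the first two properties I would adapt to spherical symmetry the construction underlying Proposition~\ref{prop:sharp}. Taking $g = (1 + \phi(r))^4 e$ in conformal gauge and $k = A(r)\,g_{rr}\,dr^2 + B(r)\,r^2 d\sigma^2$, the sphere of Euclidean radius $r_*$ is trapped exactly when the two null expansions $\pm 2(1 + \phi)^{-2}(r(1+\phi))'/(r(1+\phi)) + 2 B(r_*)$ are both negative, which is forced by making $B$ sufficiently negative on a thin radial shell around $r_*$. To simultaneously arrange $\|k^{(j)}\|_{H^{1/2}} \leq 2^{-j}$ at the critical scale, the negative lobe of $B$ is dressed with high-frequency radial oscillations summed with $\ell^2$ (but not $\ell^1$) coefficients; this exploits precisely the logarithmic gap between $H^{1/2}$ and the subcritical $B^{1/2}_{2,1}$ that prevents Theorem~\ref{thm:main} from applying.

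The sign condition is then enforced by tuning the conformal factor $\phi$. Since $R[(1+\phi)^4 e] = -8(1+\phi)^{-5}\Delta_e(1+\phi)$, one solves a Lichnerowicz-type equation $\Delta_e \phi = -\tfrac{1}{8}(1+\phi)^5 \rho$ with a smooth nonnegative source $\rho$ chosen so that the resulting jump in $R[g]$ pointwise dominates $2|F|/\sqrt{g_{rr}} - (\tr k)^2 + |k|^2$. Inverting the Laplacian buys two derivatives, so $\phi$ is $H^{3/2}$-small provided $\rho$ is $H^{-1/2}$-small; the latter is secured by matching the frequency decomposition of $\rho$ to that of $k$, invoking the same $\ell^2$-summability mechanism. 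Once $G\sqrt{g_{rr}} \geq 2|F|$ holds, the quadratic system above yields $(\psi_0', \psi_1)$ pointwise and integration in $r$ recovers $\psi_0$, with smoothness at $r = 0$ following by choosing sources that are even in $r$ and vanish near the origin.

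The hard part will be coherently orchestrating the three conditions at the critical scale. The negative lobe of $B$ needed for the trapped sphere has natural $H^{1/2}$ size of order unity, so both the oscillatory cancellations beating this down to $2^{-j}$ and the Lichnerowicz source $\rho$ designed to match it must live on aligned frequency bands. The pointwise nonlinearity $(1+\phi)^5$ in the Lichnerowicz equation threatens to destroy this fine frequency alignment, and must be handled perturbatively using that $\phi$ remains bounded (and uniformly bounded away from $-1$) in a suitable sense once the $\ell^2$-over-$\ell^1$ gain is used. Once $(\psi_0, \psi_1)$ have been produced, verifying smoothness on $B(0,1)$ and the bound \eqref{Smallness counterexample} is then a bookkeeping matter, completing the construction.
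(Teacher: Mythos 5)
Your reduction of the constraints to a pointwise $2\times 2$ algebraic system for $(\psi_0',\psi_1)$ with solvability condition $G\sqrt{g_{rr}} \geq 2|F|$ is sound and matches the structure used in the paper. But the conformal-gauge ansatz $g=(1+\phi)^4 e$ with $\phi$ small, together with building the trapping into $k$, runs into a quantitative obstruction which you flag as ``the hard part'' but which I believe is in fact fatal.

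If $\phi$ is $H^{3/2}$-small then it is $L^\infty$-small, so the mean curvature of the coordinate sphere $\{r=r_*\}$ stays comparable to $2/r_*$, and trapping forces $|B(r_*)|\gtrsim 1/r_*$ \emph{pointwise}. The $\ell^2$-versus-$\ell^1$ trick to reconcile this with $\|k\|_{H^{1/2}}\leq 2^{-j}$ is available: take $B=\sum_{k=N}^{2N}c_k\,\eta_k(r-r_*)$ with $\eta_k$ a bump of width $2^{-k}$ and $c_k\sim 1/k$, so that $\sum c_k\sim 1$ while $\sum c_k^2\sim 1/N$. But then the momentum constraint bites. The radial component of $\div_g k-\nabla\tr_g k$ contains a $\partial_r B$ term, and $\sup|\partial_r B|\gtrsim c_{2N}2^{2N}\sim 2^{2N}/N$. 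Solvability of the quadratic system then forces $R[g]$ to be pointwise of size $\sim 2^{2N}/N$ on a shell of width $\sim 2^{-N}$ (since $(\tr k)^2-\|k\|^2$ is only $O(1)$). In conformal gauge this means $\Delta_e\phi\sim 2^{2N}/N$ there, and a Littlewood--Paley count then gives $\|\phi\|_{H^{3/2}}^2\gtrsim 2^{2N}/N^2\to\infty$. The very oscillation that shrinks the $H^{1/2}$ norm of $k$ blows up the $H^{3/2}$ norm of $g$, so the frequency ``alignment'' you hope for cannot happen.

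The paper sidesteps this by putting the $C^1$-roughness in the \emph{metric} rather than in $k$, exploiting the failure of $H^{3/2}\hookrightarrow C^1$ on the $g$ side. It works in the gauge $g=d\rho^2+r(\rho)^2(d\th^2+\sin^2\th\,d\varphi^2)$ with $r(\rho)=\rho\big(1-2^{-j-M}F(\rho)\big)^{1/2}$ and $F(\rho)=\int_0^\rho\chi(x)\log(-\log|x-1|)\,dx$. Since $F'\to+\infty$ as $\rho\to 1^-$, the derivative $r'$ drops from $1$ to $-\infty$ and vanishes at some $\rho_0^{(j)}\in(1/2,1)$; at that sphere the second fundamental form of $g^{(j)}$ vanishes identically, so trapping reduces to $\tr k<0$ and is achieved by a $k^{(j)}$ that is smooth and uniformly of size $2^{-j}$ --- no oscillation, hence no danger from the momentum constraint. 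The nontrivial analytic input is Lemma~\ref{lem:main.H32}, a Fourier-decay estimate for $\log(-\log|r-1|)$, which shows $F(|x|)\in H^{3/2}$ even though it is not $C^1$; that is what makes $g^{(j)}-e$ small in $H^{3/2}$ despite $r'$ blowing up. With explicit smooth $(g,k)$ in hand, the scalar-field data are then recovered by closed-form formulas satisfying your solvability condition pointwise with room to spare, and no Lichnerowicz solve is needed.
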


However, in the setting of the spherically symmetric Einstein-scalar field system, it is natural to impose an additional smallness assumption on the scalar field in $H^{3/2}\times H^{1/2}$. In this case, it can be shown that spherically symmetric trapped surfaces can be ruled out:
\begin{proposition}\label{prop:SS}
Suppose $(B(0,R), g, k)$ are spherically symmetric taking that form
$$g(\rho,\varphi,\vartheta) = \ud \rho^2 + (r(\rho))^2 \, (\ud \vartheta^2 + \sin^2 \vartheta\, \ud \varphi^2),\quad k(\rho,\varphi,\vartheta) = k_{\rho\rho}(\rho) \ud \rho^2 + k_{\th\th}(\rho)\,  (\ud \th^2 + \sin^2 \vartheta\, \ud \varphi^2)$$
and moreover satisfy the constraints \eqref{Hamiltonian constraint} and \eqref{Momentum constraint} for some smooth and spherically symmetric $\psi_0$, $\psi_1$.

Introduce the corresponding Cartesian coordinates by 
\begin{equation}
    x^1 = \rho \sin \th \cos \varphi,\quad x^2 = \rho \sin \th \sin \varphi,\quad x^3 = \rho \cos \th
\end{equation}
so that in the $(x^1,x^2,x^3)$ coordinate system,
\begin{align*}
    g_{ij} = \f{r^2(|x|)}{|x|^2} \de_{ij} + \Big(1 - \f{r^2(|x|)}{|x|^2}\Big) \f{x^i x^j}{|x|^2}, \quad
    k_{ij} = \f{k_{\th\th}}{|x|^2} \de_{ij} + \Big( k_{\rho\rho} - \f{k_{\th\th}}{|x|^2}\Big)\f{x^i x^j}{|x|^2}.
\end{align*}

Then there exists $\ep_0>0$ (independent of $R$) such that as long as the following smallness condition in the Cartesian  coordinates holds
\begin{equation}\label{eq:SS.smallness}
\|k\|_{H^{\f 12}(B(0,R))}+ \|\psi_0\|_{H^{\f 32}(B(0,R))} + \|\psi_1 \|_{H^{\f 12}(B(0,R))} \leq \ep_0,
\end{equation}
then there does not exist a spherically symmetric smooth embedded compact trapped $2$-surface $S$ in $(B(0,R), g, k)$.
\end{proposition}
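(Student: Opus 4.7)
The plan is to control the spherically symmetric Hawking mass
\begin{equation*}
\f{2m(\rho)}{r(\rho)} \;=\; 1 - (\partial_\rho r)^2 + \f{k_{\th\th}^2}{r^2}
\end{equation*}
and to prove that $2m/r < 1$ uniformly on $[0,R]$; this rules out every spherically symmetric trapped sphere, since a trapped $2$-sphere at radius $\rho_0$ must satisfy $2m(\rho_0)/r(\rho_0) > 1$. The first step is to use both constraint equations in tandem: the Hamiltonian constraint to eliminate $\partial_\rho^2 r$ and the momentum constraint to eliminate $\partial_\rho k_{\th\th}$ from the expression for $\partial_\rho m$. The $k_{\rho\rho}k_{\th\th}$ and $k_{\th\th}^2/r^2$ contributions cancel, leaving the clean identity
\begin{equation*}
\partial_\rho m \;=\; \f{r^2\,\partial_\rho r}{4}\bigl[(\partial_\rho\psi_0)^2 + \psi_1^2\bigr] \;-\; \f{r\,k_{\th\th}}{2}\,\psi_1\,\partial_\rho\psi_0.
\end{equation*}

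The second step extracts pointwise information from the Sobolev smallness via the Strauss-type embedding for radial $\dot H^s(\mathbb R^3)$ functions: $|f(\rho)| \ls \rho^{s-3/2}\|f\|_{\dot H^s(\mathbb R^3)}$ for $0 < s < 3/2$. Applied with $s = \tfrac12$ to $\partial_\rho\psi_0$ and $\psi_1$, and to the spherically symmetric scalars $k_{\rho\rho}$ and $k_{\th\th}/\rho^2$ (each controlled in $\dot H^{1/2}(\mathbb R^3)$ by the tensor norm $\|k\|_{\dot H^{1/2}(\mathbb R^3)}$), this gives
\begin{equation*}
|\partial_\rho\psi_0(\rho)|,\; |\psi_1(\rho)|,\; |k_{\rho\rho}(\rho)|,\; \f{|k_{\th\th}(\rho)|}{\rho^2} \;\ls\; \f{\ep_0}{\rho}.
\end{equation*}
The proof is then closed by a standard continuity/bootstrap argument on $[0,\rho_*] \subseteq [0,R]$ with the assumptions $\tfrac{3}{4}\rho \leq r(\rho) \leq \tfrac{5}{4}\rho$ and $2m(\rho)/r(\rho) \leq \tfrac14$, both valid at $\rho = 0$ by the regularity of the center. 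Under the bootstrap one has $|k_{\th\th}/r| \ls \ep_0$, and inserting the pointwise bounds into the mass identity yields $|\partial_\rho m| \ls \ep_0^2$ uniformly, hence $m(\rho) \ls \ep_0^2 \rho$ and $2m/r \ls \ep_0^2$; meanwhile the formula $(\partial_\rho r)^2 = 1 - 2m/r + k_{\th\th}^2/r^2$ forces $\partial_\rho r = 1 + O(\ep_0)$ and so $r(\rho)/\rho = 1 + O(\ep_0)$. For $\ep_0$ smaller than a universal numerical threshold, both bootstrap assumptions improve strictly, and the bootstrap therefore extends to all of $[0,R]$.

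The main technical obstacle I anticipate lies in the second step: justifying that the spherically symmetric scalar profiles $k_{\rho\rho}$ and $k_{\th\th}/\rho^2$, viewed as radial functions on $\mathbb R^3$, inherit $\dot H^{1/2}$ control from the tensor norm $\|k\|_{\dot H^{1/2}(\mathbb R^3)}$. Since these profiles are extracted from $k$ by contractions with the singular radial unit vector field $\hat x = x/|x|$, the boundedness of the extraction map on $\dot H^{1/2}$ is not immediate. It can be established by writing $k_{\th\th}/\rho^2 = \tfrac12(\tr_e k - k_{\rho\rho})$ and combining with a rotational-averaging argument, but some care is required near $\rho = 0$.
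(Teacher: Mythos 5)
Your overall strategy---controlling the Hawking mass $m$ via the constraint equations and ruling out $2m/r \ge 1$ by a bootstrap---is the same as the paper's, and your mass identity is algebraically identical to the paper's factorized form (expand $\tfrac{r^2}{8}(r'\pm \tfrac{k_{\th\th}}{r})(\psi_1\mp\psi_0')^2$ and the cross terms combine as you wrote). However, there is a genuine gap in Step~2 that is separate from the extraction issue you flag.

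The radial (Strauss-type) Sobolev inequality $|u(\rho)| \lesssim \rho^{s-3/2}\|u\|_{\dot H^s(\mathbb R^3)}$ is valid only for $1/2 < s < 3/2$; it \emph{fails} at the endpoint $s = 1/2$, which is exactly the exponent you need. To see the failure, write the radial function via the one-dimensional Fourier inversion $u(\rho) = \tfrac{c}{\rho}\int_0^\infty \sin(\rho s)\,\hat u(s)\, s\, \ud s$; after the substitution $g(s) = \hat u(s) s^{3/2}$, the claimed estimate $|\rho\, u(\rho)| \lesssim \|u\|_{\dot H^{1/2}}$ reduces to the boundedness of the linear functional $g\mapsto \int_0^\infty \sin(s)\,g(s)\,s^{-1/2}\,\ud s$ on $L^2(0,\infty)$. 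That functional is unbounded because $\sin(s)\,s^{-1/2}\notin L^2$ (the tail diverges logarithmically); taking $g_N = \sin(s)\,s^{-1/2}\mathbf 1_{[1,N]}$ gives a ratio of order $\sqrt{\log N}\to\infty$. Consequently, the pointwise bounds $|\psi_1(\rho)|,\;|\psi_0'(\rho)|,\;|k_{\rho\rho}(\rho)|,\;|k_{\th\th}/\rho^2| \lesssim \ep_0/\rho$ cannot be deduced from $H^{1/2}$ smallness, and the ensuing bootstrap inputs---in particular $|k_{\th\th}/r|\lesssim\ep_0$ and the pointwise estimate $|m'|\lesssim\ep_0^2$---are unavailable.

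The fix is to retreat from pointwise to $L^3$ control: the embedding $H^{1/2}(B(0,R))\hookrightarrow L^3(B(0,R))$ (with constant independent of $R$) survives at this exponent and gives $\int_0^R\big(|\psi_0'|^3+|\psi_1|^3+|k_{\rho\rho}|^3+|k_{\th\th}/\rho^2|^3\big)\rho^2\,\ud\rho\lesssim\ep_0^3$. One then runs the bootstrap with H\"older/Cauchy--Schwarz inequalities in place of pointwise bounds; crucially, the bound $|k_{\th\th}/r|<|r'|$ is no longer derived from smallness of $k$ but instead falls out of the bootstrap hypothesis $2m/r\le 1/2$ via the identity $1-2m/r = (r')^2 - k_{\th\th}^2/r^2$, and an a~priori $L^3$ bound on $r'$ must be derived before estimating $m$. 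This is precisely what the paper does. Your approach is not irreparably broken, but as written the key pointwise inputs do not follow from the stated hypotheses.
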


In view of Propositions~\ref{prop:sharp.constraint} and \ref{prop:SS}, it is perhaps of interest to understand whether the constraint equations in \emph{vacuum} together with a smallness assumption of $(g,k)\in H^{3/2}\times H^{1/2}$ would be sufficient to rule out trapped surfaces \emph{in the absence of symmetry}.

See Section \ref{sec:Examples} for the proof of Propositions \ref{prop:sharp}--\ref{prop:SS}.

\subsection{Idea of the proof} The proof of the theorem is based on a contradiction argument using a uniform trace theorem. To explain the ideas of the proof, we first focus on the simpler case where $f = 0$, i.e.~we assume that $(g,k) - (e,0)$ is small in $B^{3/2}_{2,1}\times B^{1/2}_{2,1}$. In this setting, assume for the sake of contradiction that there is a compact trapped surface $S$. The following are the main steps of the argument.
\begin{enumerate}
\item Let $S^+$ be the convex hull of $S$ intersected with $S$, and let $H_0$ be the Euclidean mean curvature. By convexity, $H_0 \geq 0$ on $S^+$. Moreover, it is well-known that the Willmore energy has the following lower bound: 
$$\int_{S^+} H_0^2 \, \mathrm{dVol}_{S,e} \geq 16\pi.$$ 
(This can be derived by noting that (a) the standard Gauss map $\hat{n}: S^+ \to (\mathbb S^2,\slashed{g}_0)$ covers the whole of $\mathbb S^2$ and thus, $\int_{S^+} K_{\slashed{g}_0} \, \mathrm{dVol}_{S,e} \geq 4\pi$ and (b) $H_0^2 \geq 4 K_{\slashed{g}_0}$ by the AM-GM inequality.)
\item The key ingredient we establish is a uniform trace estimate for convex hypersurfaces in $\mathbb R^3$: 
\begin{equation}\label{eq:intro.trace}
\int_{\mathcal S} |\phi|^2\, \mathrm{dVol}_{\mathcal S,e} \ls \| \phi \|_{B^{1/2}_{2,1}}^2.
\end{equation}
where the implicit constant \emph{independent} of the surface $\mathcal S$, as long as it is convex. (It is easy to see that a trace estimate cannot hold uniformly for all hypersurfaces without the convexity assumption.)
\item Since $S^+$ is convex and $(g,k) - (e,0)$ is small, say of size $O(\ep)$, in $B^{3/2}_{2,1}\times B^{1/2}_{2,1}$, we can apply the uniform trace estimate in Step~2 so as to obtain 
$$\int_{S^+} |k|^2 \, \mathrm{dVol}_{S,e},\int_{S^+} |\rd (g - g_0)|^2 \, \mathrm{dVol}_{S,e} \ls \ep.$$
\item Since $S^+$ is trapped, $\mathrm{tr} k + H <0$. In particular, we have $0\leq H_0 < H_0 - H - \mathrm{tr} k$. Moreover, using $H_0 \geq 0$ and that $\| g-g_0\|_{L^\infty} \ls \ep$ (by Sobolev embedding $B^{3/2}_{2,1} \hookrightarrow L^\infty$), we have the pointwise bound $|H_0 - H|\ls |\rd(g-g_0)| + \ep H_0$ (see computations in Lemma~\ref{lem:Comparison geometries}).
Hence, after applying the estimates in Step~3, we obtain
\begin{equation*}
\begin{split}
&\: \int_{S^+} |H_0|^2 \, \mathrm{dVol}_{S,e} \ls \int_{S^+} |H - H_0|^2 \, \mathrm{dVol}_{S,e} + \int_{S^+} |k|^2 \, \mathrm{dVol}_{S,e} \\
 \ls &\: \int_{S^+} \Big(  |k|^2 + |\rd(g-g_0)|^2 \Big) \, \mathrm{dVol}_{S,e} + \ep \int_{S^+} |H_0|^2 \, \mathrm{dVol}_{S,e} \ls \ep + \ep \int_{S^+} |H_0|^2 \, \mathrm{dVol}_{S,e}.
\end{split}
\end{equation*}
For $\ep>0$ sufficiently small, we have $\int_{S^+} |H_0|^2 \, \mathrm{dVol}_{S,e} \ls \ep$, contradicting the lower bound in Step~1.
\end{enumerate}

It remains to explain the uniform trace estimate \eqref{eq:intro.trace} used in Step~2. Partition $\mathbb S^2 \doteq \{x\in \mathbb R^3: \| x\|=1\}$ into $6$ pieces $\mathbb S^2 = \bigcup_{i=1}^3 \mathcal E_i \cup \bigcup_{i=1}^3 \mathcal W_i$, where $\mathcal E_i\doteq \{ x\in \mathbb S^2: x_i \geq \f 12\}$, $\mathcal W_i \doteq \{ x\in \mathbb S^2: x_i \leq - \f 12\}$. This induces a partition $\mathcal S = \bigcup_{i=1}^3 \hat{n}^{-1}(\mathcal E_i) \cup \bigcup_{i=1}^3 \hat{n}^{-1}(\mathcal W_i)$. Convexity implies that each of $\hat{n}^{-1}(\mathcal N_i)$, $\hat{n}^{-1}(\mathcal S_i)$ can be written as a graph, and we can adapt the standard proof of trace estimates.

In the more general case where $f \not\equiv 0$, we need a (spacetime) notion of the null convex hull of a $2$-surface, which is defined to be the intersection of null half-spaces containing the surface. In this case, instead of the convex hull, we consider the intersection of the boundary of the null convex intersected with the surface; and the quantity we consider in place of $H_0$ is the Minkowski null expansion $\tr \chi_0$. (Notice that in the case of $f = 0$, $\tr \chi_0 = H_0$.) It turns out that suitable analogues of the key properties (1) and (2) above still hold in the more general setting, using slightly more involved arguments; see Proposition~\ref{prop:S+}, Lemma~\ref{lem:patches} and Lemma~\ref{lem:trace}.

\subsection*{Acknowledgements} We would like to express our gratitude to Otis Chodosh for many helpful discussions and for numerous insightful comments during the early stages of this work. We would also like to thank Or Hershkovits and Rafe Mazzeo for fruitful discussions. We thank an anonymous referee for many useful suggestions to improve the exposition. We also thank Chao Wu for pointing out a number of omissions in the earlier version of the paper. The first author acknowledges the support of a Terman fellowship and the NSF grant DMS-2005435. The second author acknowledges the support of the Clay Mathematics Institute while this work was being completed.

\section{Motivation: Cauchy problem and optimal low-regularity well-posedness for the Einstein vacuum equations}\label{sec:background}

The main motivation for Theorem~\ref{thm:main} comes from the following fundamental question: 

\medskip

\begin{center}
    \textbf{What is the threshold of well-posedness for the Einstein vacuum equations (or appropriate Einstein-matter systems) when considering low-regularity initial data?}
\end{center}

\medskip

For many evolutionary partial differential equations, low-regularity well-posedness problems are often important for understanding singularity formation. In the setting of the Einstein equations, a prominent example can be found in the works of Christodoulou \cite{dC1999}, in which the resolution of the weak cosmic censorship conjecture for the Einstein--scalar field system in spherical symmetry relied on Christodoulou's sharp BV well-posedness result \cite{dC1993}.

The question of optimal low-regularity well-posedness can be formulated in \emph{local} or \emph{global} terms:

\begin{problem}\label{prob:main}
For $s \geq \f 32$, let $X = H^s(\mathbb R^3) \times H^{s-1}(\mathbb R^3)$ (or a suitable weighted version or Besov replacement). Does there exist a sequence of initial data sets $\{(\mathbb R^3, g_i, k_i)\}_{i=1}^\infty$ to the Einstein vacuum equations such that 
$$\|( g_i - e, k_i )\|_{X} \leq 2^{-i},$$
and for which:
\begin{enumerate}
    \item The solution does not remain of size $O(2^{-i})$ in the norm $\|\cdot\|_X$ ``up to time $O(1)$''?
    \item The corresponding maximal globally hyperbolic development is future causally geodesically incomplete?
\end{enumerate}
\end{problem}

 Part (1) of Problem~\ref{prob:main} probes the regularity threshold below which the local existence of solutions ceases to hold. The best known result in this direction is the celebrated bounded $L^2$ curvature theorem of Klainerman--Rodnianski--Szeftel \cite{sKiRjS2015}, which established that (modulo technical assumptions) solutions to the Einstein vacuum equations remain under control up to time $O(1)$ if the initial data are small in $X = H^2(\mathbb R^3) \times H^{1}(\mathbb R^3)$. As pointed out in \cite{sKiRjS2015}, the $L^2$ bound of curvature is crucially used in the proof to derive a lower bound on the radius of injectivity of null hypersurfaces, and it is therefore unclear whether the solutions can be controlled below this regularity; see also \cite{sKiR2005, sKiR2008}.

 Part (2) of Problem~\ref{prob:main} is related to the question of the stability of Minkowski spacetime in the roughest possible setting. This is closely connected to the question of trapped surface formation, since the emergence of a trapped surface implies that the solution is geodesically incomplete due to Penrose's incompleteness theorem. It is known by Christodoulou's monumental work \cite{Chr} that trapped surfaces can form dynamically from initial data which are free of trapped surfaces (and in fact are arbitrarily far from having trapped surfaces). The result in \cite{Chr} requires that the initial data are large in $H^1$. In a subsequent work \cite{AL}, An--Luk showed that largeness in $H^{3/2}$ is already sufficient to guarantee that trapped surfaces form dynamically. 

Our main result (Theorem~\ref{thm:main}) only concerns the existence of trapped surfaces within initial data sets and, thus, does not directly address the evolution problem. Our theorem shows that if the Cauchy data to the Einstein vacuum equations contain a trapped surface, then the data cannot be small in the scale-invariant $B^{3/2}_{2,1}$ norm. Hence, if a trapped surface is to emerge in evolution, then the $B^{3/2}_{2,1}$ norm of the data induced on spacelike slices of the spacetime cannot remain small. Together with the possibility of inflation for the $H^s$ norm along the evolution when $s<2$, one is naturally led to the following question, which can be viewed as a reformulation of Problem \ref{prob:main} in the context of trapped surface formation:

\begin{problem}
For $s\in [\f 32,2)$, let $X = H^s(\mathbb R^3) \times H^{s-1}(\mathbb R^3)$ (or a suitable Besov replacement). Does there exist a sequence of initial data sets $\{(\mathbb R^3, g_i, k_i)\}_{i=1}^\infty$ to the Einstein vacuum equations such that 
$$\|( g_i - e, k_i )\|_{X} \leq 2^{-i},$$
but for which the corresponding maximal globally hyperbolic future development contains an embedded compact trapped surface?
\end{problem}

\section{Notations} \label{sec:Notations}

In this section, we will introduce the various notational conventions that we will adopt throughout the paper.

\subsection{Special subsets of Minkowski spacetime}
We will denote with $m$ the Minkowski metric on $\mathbb R^{3+1}$, which, in the standard Cartesian coordinates $(t,x^1,x^2,x^3)$, takes the form
\[
m = -dt^2 +(dx^1)^2+(dx^2)^2+(dx^3)^2.
\]
We will also denote with $e$ the Euclidean metric on $\mathbb R^3$:
 \[
 e = (dx^1)^2+(dx^2)^2+(dx^3)^2.
 \]
 We will frequently identify $\mathbb R^3$ with $\{ t = 0 \} \subset \mathbb R^{3+1}$. We will also identify $\mathbb S^2$ with the coordinate sphere $\{ \sum_{i=1}^3 (x^i)^2=1\}$ in $\mathbb R^3$. In what follows, lower case Latin indices run through $i,j=1,2,3$.
 
 \begin{definition}\label{def:Null vector sphere}
We will define for any $\omega = ( \omega^1,\omega^2,\omega^3) \in \mathbb S^2\subset  \mathbb R^3$ the vector
\[
L_\omega \doteq (1, \omega^1,\omega^2,\omega^3) \in \mathbb R^{3+1}.
\]
 We will also define for any $\omega \in \mathbb S^2$ and $u\in \mathbb R$ the half space
\begin{equation}\label{Null halfspace}
W_{\omega,u} \doteq \big\{ (t,x) \in \mathbb R^{3+1}: t- \langle x, \omega \rangle_e \ge u \big\}.
\end{equation}
\end{definition}
\begin{remark*}
 Note that $L_\omega$ is future directed and null with respect to $m$. Moreover, the boundary 
\[
\Pi_{\omega,u} \doteq  \partial W_{\omega,u} =  \big\{ (t,x) \in \mathbb R^{3+1}: t- \langle x, \omega \rangle_e =u \big\}
\]
 is a null hyperplane of $(\mathbb R^{3+1},m)$ whose normal vector at every point is (parallel to) $L_\omega$.
 \end{remark*}
 
 \subsection{Spacelike hypersurfaces in $\mathbb R^{3+1}$}
 Throughout this paper, we will frequently consider spacelike hypersurfaces of $\mathbb (\mathbb R^{3+1},m)$ which can be expressed as a graph of a given smooth function $f:\Omega \rightarrow \mathbb R$ over a domain $\Omega \subseteq \mathbb R^3$:
 \[
 \Sigma_f \doteq \big\{ (t,x): \, x\in \Omega, \, t=f(x) \big\}.
 \]
 Note that $\Sigma_f$ is spacelike if and only if 
 \begin{equation}\label{Spacelike gradient}
 |\rd f|_{e} <1 \quad \text{everywhere on} \quad \Omega,
 \end{equation}
where, from now on, we use $\rd f$ to denote the Euclidean gradient of $f$ and $|\rd f |_e = (\sum_{i=1}^3 |\rd_i f|^2)^{\f 12}$. In what follows, we will only consider functions $f$ satisfying the uniform bound $\inf_\Omega (1-|\rd f|_e ) >0$.
 
For a spacelike $\Sigma=\Sigma_f$ as above, we will denote with $n_\Sigma$ the future-directed unit timelike normal of $\Sigma$ (with respect to $m$), i.e.~$n_\Sigma = \f{1}{\sqrt{1-|\rd f|_e^2}}(\rd_t + \de^{ij} \rd_i f \rd_j)$. We will also denote with $(g_0,k_0)$ the  Riemannian metric and second fundamental form induced on $\Sigma=\Sigma_f$ by $m$, i.e.
\[
g_0(X,Y) = m(X,Y) \quad \text{and} \quad k_0(X,Y) = \langle D_X n_\Sigma, Y \rangle_m  \quad \text{for any } \hphantom{n} X,Y \hphantom{n} \text{ tangent to } \hphantom{n} \Sigma,
\]
where $D$ is the flat connection on $\mathbb R^{3+1}$. 

We will frequently identify a hypersurface $\Sigma_f$ with the domain of support of $f$ in $\mathbb{R}^3$ (via the map $f$), and denote simply with $g_0, k_0$ the respective pullbacks $f_* g_0, f_* k_0$. In a system of Cartesian coordinates $(x^1, x^2, x^3)$ on $\Omega$, the tensors $g_0$ and $k_0$ take the form
\begin{align}
(g_0)_{ij} & = \delta_{ij}-\partial_{i} f \cdot \partial_j f, \label{eq:Flat metric}\\
(k_0)_{ij} & = \frac{\partial^2_{ij} f}{\sqrt{1-|\rd f|^2_e}}.\label{eq:Flat k}
\end{align}

\subsection{The geometry of embedded $2$-surfaces}\label{subsec:Surfaces} Let $S \hookrightarrow \mathbb R^3$ be an embedded, connected, smooth $2$-surface. Such a surface is necessarily orientable and separates $\mathbb R^3$ into two components, a compact one (which we will denote with $\mathcal{K}_{int}$) and a non-compact one (which we will call $\mathcal{K}_{ext}$); see \cite{hS1969}.
\begin{definition}
Let $S \hookrightarrow \mathbb R^3$ be a closed embedded surface as above and let $g$ be a Riemannian metric defined in an open neighborhood of $S$ in $\mathbb R^3$.
\begin{enumerate}
\item For all $x\in S$, we will denote with $N_g(x) \in T_x \mathbb{R}^3$ the unit normal to $S$ at $x$ with respect to the metric $g$ pointing in the direction of $\mathcal K_{ext}$. 
\item We will denote with $\sg$ the induced metric on $S$ by $g$, i.e.
\[
\sg (X,Y)=g(X,Y) \quad \text{for all} \quad X,Y \quad \text{tangent to} \quad S.
\]
\item We will denote with $h_g$ the second fundamental form of $S$ associated to $g$ and $N_g$, i.e.
\[
h_g(X,Y) =  \langle \nabla_X N_g,Y \rangle_g \quad \text{for any} \quad X,Y \quad \text{tangent to} \quad S,
\]
where $\nabla$ is the connection of $(\mathbb R^3,g)$. 
\end{enumerate}
\end{definition}

We will adopt the following definition for a trapped surface in $(\mathbb R^3, g,k)$:
\begin{definition}[Trapped surfaces]
Let $g$ be Riemannian metric on $\Omega \subseteq \mathbb R^3$ and $k$ a symmetric covariant $2$-tensor on $\Omega$. Let also $S \hookrightarrow \Omega$ be a \textbf{compact}, embedded and smooth $2$-surface. We will say that $S$ is a \textbf{trapped} surface in $( \Omega,g,k)$ if, at every point on $S$, the $(0,2)$-tensor $k$ (restricted to $S$) and the second fundamental form $h_g$ of $S$ satisfy
$$\mathrm{tr}_{\sg} (k+h_g) <0,\quad \mathrm{tr}_{\sg} (k-h_g) <0.$$
\end{definition}

\subsection{Null convex hulls of $2$-surfaces in $(\mathbb R^{3+1},m)$}\label{subsec:Null convex hulls}

Let $S$ be a smooth, connected, closed and  embedded $2$ surface contained inside a domain $\Omega \subseteq \mathbb R^3$ and let $f : \Omega \rightarrow \mathbb R$ be a smooth function satisfying the gradient bound \eqref{Spacelike gradient} (so that $\Sigma_f$ is a spacelike hypersurface of $(\mathbb R^{3+1}, m)$). Define also $\overline{f}:\Omega \to \mathbb R^{3+1}$ by $\overline{f}(x) = (f(x),x)$.

\begin{definition}\label{def:Normalized null normal}
 For any $p\in S$, we will define $L^{\overline{f}(S)}[p]$ to be the outgoing null normal to $\overline{f}(S)$ at $p$ with respect to $m$, normalized so that $\langle L^{\overline{f}(S)}[p], \partial_t \rangle_m = -1$; that is to say, $L^{\overline{f}(S)}[p]$ is the unique vector in $T_p \mathbb R^{3+1}$ with
\begin{equation}\label{Normal fS}
\langle L^{\overline{f}(S)}[p],X \rangle_m=0 \quad \text{for all} \quad X\in T_p \overline{f}(S)
\end{equation}
and which is of the form 
\[
L^{\overline{f}(S)}[p] \doteq (1, v^1, v^2, v^3) 
\] 
with $|v|^2_e =1$ and $v$ pointing to $\mathcal{K}_{ext}$. We will also define $\Pi^{\overline{f}(S)}[p]$ to be the null plane containing $p$ with generator $L^{\overline{f}(S)}[p]$, i.e.
\[
\Pi^{\overline{f}(S)}[p] \doteq \big\{ z \in \mathbb R^{3+1}:\, \langle z-p, L^{\overline{f}(S)}[p] \rangle_m = 0 \big\}.
\]
\end{definition}
\begin{remark*}
Note that, in view of \eqref{Normal fS}, $\Pi^{\overline{f}(S)}[p] $ is necessarily tangent to $S$ at $p$.
\end{remark*}

\begin{figure}[t]
\scriptsize
\begingroup%
  \makeatletter%
  \providecommand\color[2][]{%
    \errmessage{(Inkscape) Color is used for the text in Inkscape, but the package 'color.sty' is not loaded}%
    \renewcommand\color[2][]{}%
  }%
  \providecommand\transparent[1]{%
    \errmessage{(Inkscape) Transparency is used (non-zero) for the text in Inkscape, but the package 'transparent.sty' is not loaded}%
    \renewcommand\transparent[1]{}%
  }%
  \providecommand\rotatebox[2]{#2}%
  \newcommand*\fsize{\dimexpr\f@size pt\relax}%
  \newcommand*\lineheight[1]{\fontsize{\fsize}{#1\fsize}\selectfont}%
  \ifx\svgwidth\undefined%
    \setlength{\unitlength}{225bp}%
    \ifx\svgscale\undefined%
      \relax%
    \else%
      \setlength{\unitlength}{\unitlength * \real{\svgscale}}%
    \fi%
  \else%
    \setlength{\unitlength}{\svgwidth}%
  \fi%
  \global\let\svgwidth\undefined%
  \global\let\svgscale\undefined%
  \makeatother%
  \begin{picture}(1,0.5)%
    \lineheight{1}%
    \setlength\tabcolsep{0pt}%
    \put(0,0){\includegraphics[width=\unitlength,page=1]{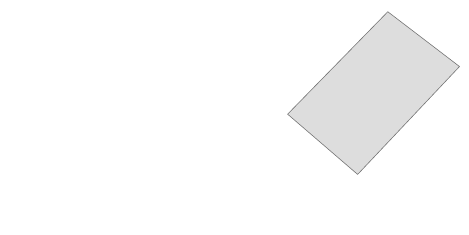}}%
    \put(0.85610428,0.17254221){\makebox(0,0)[lt]{\lineheight{1.25}\smash{\begin{tabular}[t]{l}$\Pi^{\overline f(S)}[p]$\end{tabular}}}}%
    \put(0.68606193,0.20537218){\makebox(0,0)[lt]{\lineheight{1.25}\smash{\begin{tabular}[t]{l}$p$\end{tabular}}}}%
    \put(0.81738185,0.30554563){\makebox(0,0)[lt]{\lineheight{1.25}\smash{\begin{tabular}[t]{l}$L^{\overline f(S)}[p]$\end{tabular}}}}%
    \put(0,0){\includegraphics[width=\unitlength,page=2]{L_and_Pi.pdf}}%
    \put(0.37014146,0.10992629){\makebox(0,0)[lt]{\lineheight{1.25}\smash{\begin{tabular}[t]{l}$\overline{f}(S)$\end{tabular}}}}%
    \put(0,0){\includegraphics[width=\unitlength,page=3]{L_and_Pi.pdf}}%
  \end{picture}%
\endgroup%

\caption{Schematic depiction of the null outer normal vector $L^{\overline{f}(S)}[p]$ and the null plane $\Pi^{\overline{f}(S)}[p]$ associated to a point $p$ on the spacelike surface $\overline f(S) \subset \mathbb R^{3+1}$.}
\end{figure}

\begin{definition}\label{def:Null expansion}
Let $S \hookrightarrow  \Omega \subseteq \mathbb R^3$ and $f$ be as above. We will define the flat null expansion of $S$ by the relation
\[
\tr \chi_0 \doteq \tr_{\sg_0} (D L^{\overline{f}(S)}),
\]
where $\sg_0$ is the Riemannian metric induced on $\overline{f}(S)$ by $m$ and the $(0,2)$-tensor $D L^{\overline{f}(S)}$ on the surface $\overline{f}(S)$ (which we will call the null second fundamental form) is defined by 
\[
D L^{\overline{f}(S)}(X,Y) = \langle D_X L^{\overline{f}(S)}, Y \rangle_m \quad \text{ for any } X,Y \text{ tangent to } \overline{f}(S)
\]
(where $D$ is the flat connection on $\mathbb R^{3+1}$).
\end{definition}

\begin{remark*}
Note that $L^{\overline{f}(S)}=\zeta_S \cdot (n_{\Sigma_f}+\bar f^* N_{g_0})$ with $\bar f^* N_{g_0}$ being the pushforward of $N_{g_0}$ via $\bar f$ (hence, by the definition of $g_0$, $\langle \bar f^* N_{g_0}, \bar f^* N_{g_0} \rangle_m=1$) and $$\zeta_S \doteq  \f{\sqrt{1-|\partial f|_e^2}}{1+N_{g_0}(f) \sqrt{1-|\partial f|_e^2}}$$ (note that since $|N_{g_0}(f)| \leq |\rd f|_e(1-|\rd f|_e^2)^{-\f 12}$, we have $\f12 (1-|\rd f|_e^2)^{\f12} \le \zeta_S \le 2 (1-|\rd f|_e)^{-\f12}$). Thus,
\begin{equation}\label{eq:trch0}
\tr\chi_0 = \zeta_S \,\tr_{\sg_0}(k_0 + h_{g_0}).
\end{equation}
\end{remark*}

\begin{definition}[The subset $S^+ \subseteq S$]\label{def:Null convex hull}
Let $S \hookrightarrow  \Omega \subseteq \mathbb R^3$ and $f$ be as above. We will define the \textbf{null convex hull} of $S$ to be the subset of $\mathbb R^{3+1}$ consisting of the intersection of all half-spaces of the form \eqref{Null halfspace} containing $\overline{f}(S)$, i.e.
\begin{equation}\label{Null convex hull}
\mathbb{K}_+[S] \doteq \bigcap \big\{ W_{\omega, u}:\, \overline{f}(S) \subset W_{\omega, u} \big\}.
\end{equation}
We will then set
\begin{equation}
S^+ \doteq \overline{f}^{-1} \big( \overline{f}(S) \cap \partial \mathbb{K}_+[S]\big) \subseteq S.
\end{equation}
\end{definition}

\begin{figure}[t]
\scriptsize
\begingroup%
  \makeatletter%
  \providecommand\color[2][]{%
    \errmessage{(Inkscape) Color is used for the text in Inkscape, but the package 'color.sty' is not loaded}%
    \renewcommand\color[2][]{}%
  }%
  \providecommand\transparent[1]{%
    \errmessage{(Inkscape) Transparency is used (non-zero) for the text in Inkscape, but the package 'transparent.sty' is not loaded}%
    \renewcommand\transparent[1]{}%
  }%
  \providecommand\rotatebox[2]{#2}%
  \newcommand*\fsize{\dimexpr\f@size pt\relax}%
  \newcommand*\lineheight[1]{\fontsize{\fsize}{#1\fsize}\selectfont}%
  \ifx\svgwidth\undefined%
    \setlength{\unitlength}{225bp}%
    \ifx\svgscale\undefined%
      \relax%
    \else%
      \setlength{\unitlength}{\unitlength * \real{\svgscale}}%
    \fi%
  \else%
    \setlength{\unitlength}{\svgwidth}%
  \fi%
  \global\let\svgwidth\undefined%
  \global\let\svgscale\undefined%
  \makeatother%
  \begin{picture}(1,0.53333333)%
    \lineheight{1}%
    \setlength\tabcolsep{0pt}%
    \put(0,0){\includegraphics[width=\unitlength,page=1]{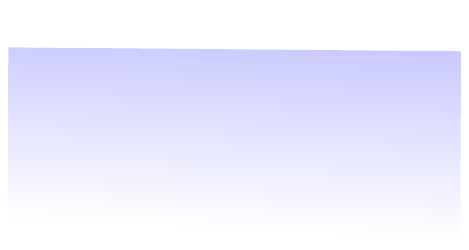}}%
    \put(0.48571499,0.15620969){\makebox(0,0)[lt]{\lineheight{1.25}\smash{\begin{tabular}[t]{l}$\overline{f}(S)$\end{tabular}}}}%
    \put(0.12121831,0.27153548){\makebox(0,0)[lt]{\lineheight{1.25}\smash{\begin{tabular}[t]{l}${\color{red}\overline{f}(S^+)}$\end{tabular}}}}%
    \put(0.02862099,0.3489805){\makebox(0,0)[lt]{\lineheight{1.25}\smash{\begin{tabular}[t]{l}${\color{blue}W_{\omega,u}}$\end{tabular}}}}%
    \put(0.36028776,0.46262266){\makebox(0,0)[lt]{\lineheight{1.25}\smash{\begin{tabular}[t]{l}${\color{blue}\partial W_{\omega,u}}$\end{tabular}}}}%
    \put(0.4714045,0.07539751){\makebox(0,0)[lt]{\lineheight{1.25}\smash{\begin{tabular}[t]{l}${\color{olive}\mathbb K_+[S]}$\end{tabular}}}}%
    \put(0,0){\includegraphics[width=\unitlength,page=2]{Convex_hull.pdf}}%
  \end{picture}%
\endgroup%

\caption{The above figure is a schematic depiction of a slice of the form $\{t=t_0\}$ in the special case when $\overline f(S) \subset \{t=t_0\}$ (i.e.~when $f=\mathrm{const}$ along $S$). In this case, the set $\mathbb K_+[S] \cap \{t=t_0\}$ (depicted in brown), which is simply the intersection of all half-spaces $W_{\omega,u}\cap \{t=t_0\}$ containing $f(S)$ (a typical such half-space is depicted in blue), reduces to the convex hull of $\overline f(S)$ inside $\{t=t_0\}$. The set $\overline f(S^+)$ (depicted with the red dotted line) is simply the set of points on $\overline f(S)$ which also lie on the convex boundary $\partial \mathbb K^+[S] \cap \{t=t_0\}$. }
\end{figure}

\begin{remark*}
In the trivial case $\Omega = \mathbb R^3$ and $f=0$, the hypersurface $\Sigma_f = \Sigma_0$ is simply the hyperplane $\{ t = 0\}$. In that case, it can be easily verified that $\mathbb{K}_+[S] \cap \Sigma_0$ is simply the convex hull of $S \subset \mathbb R^3$ and thus $S^+$ is contained in the boundary of a convex body. 

For $S$ as in Definition \ref{def:Null convex hull}, the set $S^+ \subseteq S$ is always  non-empty; see Proposition \ref{prop:S+}.
\end{remark*}

We can readily infer the following properties for the set $S^+$ (which are similar to the properties of the boundary of the convex hull of a surface in $\mathbb R^3$):
\begin{lemma}\label{lem:Positivity of DL}
Let $S$ and $S^+ \subseteq S$ be as in Definition \ref{def:Null convex hull}. For any point $p\in S^+$, $\overline{f}(S)$ lies on one side of the null hyperplane $\Pi^{\overline{f}(S)}[p]$.  Moreover, the tensor $DL^{\overline{f}(S)}$ is semi-positive definite on $\overline{f}(S^+)$.
\end{lemma}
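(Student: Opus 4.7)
The plan is to exploit the convex structure of $\mathbb{K}_+[S]$ to produce a supporting null hyperplane at $f(p)$ for each $p \in S^+$, and then to read off both claims from the resulting one-sided condition. Since $\mathbb{K}_+[S]$ is the intersection of the closed half-spaces $W_{\omega,u}$ containing $f(S)$, it is closed and convex; the hypothesis $f(p)\in\partial\mathbb{K}_+[S]$ together with the compactness of $f(S)$ and of $\mathbb{S}^2$ produces, via a standard limiting argument, a pair $(\omega_0,u_0)\in\mathbb{S}^2\times\mathbb{R}$ with $f(S)\subset W_{\omega_0,u_0}$ and $f(p)\in\Pi_{\omega_0,u_0}$; equivalently, the function $g_{\omega_0}(q)\doteq f(q)-\langle q,\omega_0\rangle_e$ attains its minimum on $S$ at $p$. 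Because $f(S)$ is a smooth $2$-surface lying in the closed half-space $W_{\omega_0,u_0}$ with $f(p)$ on its boundary, the tangent plane $T_{f(p)} f(S)$ is forced to lie inside $\Pi_{\omega_0,u_0}$, so $L_{\omega_0}$ is $m$-orthogonal to $f(S)$ at $f(p)$.

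Next I would identify $L_{\omega_0}$ with $L^{f(S)}[p]$. Since $f(S)$ is spacelike, its $m$-normal plane at $f(p)$ carries exactly two null directions, and the shared normalization $\langle L_{\omega_0},\partial_t\rangle_m=-1$ (which holds since $L_{\omega_0}=(1,\omega_0)$) leaves precisely two candidates: $L^{f(S)}[p]$ and its ingoing counterpart $\sqrt{1-|\partial f|_e^2}\,(n_{\Sigma_f}-N_{g_0})$. Writing the $m$-normality condition as $\omega_0=\partial f(p)+\alpha\,N_e(p)$ with $N_e$ the Euclidean outward unit normal to $S$ and imposing $|\omega_0|_e=1$ yields a quadratic in $\alpha$ with two roots, one per null normal; the minimality of $g_{\omega_0}$ on $S$ at $p$ (together with the fact that $\mathbb{K}_+[S]$ is built from future null half-spaces) selects the outgoing root. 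In the model case $f\equiv 0$ this is the classical statement that, at a smooth point of $S\cap\partial\,\mathrm{conv}(S)$, the outward normal of $\mathrm{conv}(S)$ coincides with the outward Euclidean normal of $S$. Consequently $\Pi_{\omega_0,u_0}=\Pi^{f(S)}[p]$, and the inclusion $f(S)\subset W_{\omega_0,u_0}$ is exactly the first assertion of the lemma.

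The semi-positive definiteness of $DL^{f(S)}$ at $f(p)$ then follows from a second-order Taylor expansion. For $X\in T_{f(p)} f(S)$, pick a smooth curve $\gamma\colon(-\epsilon,\epsilon)\to f(S)$ with $\gamma(0)=f(p)$ and $\gamma'(0)=X$, and set $h(s)\doteq\langle\gamma(s)-f(p),\,L^{f(S)}[p]\rangle_m$. The one-sidedness just established gives $h\le 0=h(0)$, hence $h''(0)\le 0$, i.e.\ $\langle\gamma''(0),L^{f(S)}[p]\rangle_m\le 0$. On the other hand, differentiating $\langle L^{f(S)},\gamma'\rangle_m\equiv 0$ along $\gamma$ and using that $D$ is the flat connection (so that $D_{\gamma'}\gamma'|_{s=0}=\gamma''(0)$ in ambient Cartesian coordinates) yields
\[
DL^{f(S)}(X,X)\big|_{f(p)} \;=\; -\langle L^{f(S)}[p],\gamma''(0)\rangle_m \;=\; -h''(0) \;\ge\; 0.
\]
The main obstacle I anticipate is the orientation step of the second paragraph, namely ruling out the ingoing null normal in the identification of $L_{\omega_0}$; beyond that, the argument reduces to a supporting-hyperplane construction and a single Taylor expansion.
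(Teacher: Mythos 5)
Your argument is correct, and while the first step (producing a supporting null hyperplane at $f(p)$ via a limiting sequence $W_{\omega_n,u_n}$, forcing $T_{f(p)}f(S)\subset\Pi_{\omega_0,u_0}$, and identifying $L_{\omega_0}$ with the outgoing normal) coincides with the paper's, your treatment of semi-positive definiteness is genuinely different and arguably slicker. The paper works in adapted local coordinates $(\bar x^1,\bar x^2)$ centered at $p$, writes $S$ as $x^3 = M_{AB}\bar x^A\bar x^B + O(|\bar x|^3)$, Taylor-expands $L^{f(S)}$ and the affine support function $\phi$ to second order, and reads off semi-positivity of $DL^{f(S)}$ from the semi-negativity of $\partial^2_{AB}\phi(0)$. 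You instead reduce to curves $\gamma$ in $f(S)$ through $f(p)$: one-sidedness makes $h(s)=\langle \gamma(s)-f(p),L^{f(S)}[p]\rangle_m$ attain a maximum at $s=0$, and differentiating the constraint $\langle L^{f(S)},\gamma'\rangle_m\equiv 0$ (using that $D$ is flat, so $D_{\gamma'}\gamma'=\gamma''$) converts $-h''(0)\ge 0$ directly into $DL^{f(S)}(X,X)\ge 0$. This is coordinate-free, avoids the Taylor bookkeeping, and (combined with the standard symmetry of $DL^{f(S)}$, which follows since $\langle L^{f(S)},[X,Y]\rangle_m=0$ for tangent fields $X,Y$) gives the full conclusion. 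The one genuinely delicate point, which you rightly flag, is selecting the outgoing rather than ingoing root for $L_{\omega_0}$; the paper resolves it with the brief remark that $f(S)$ lies to the \emph{future} of $\Pi_{\omega_0,u_0}$, and your proposed quadratic-in-$\alpha$ argument (choosing the root consistent with minimality of $g_{\omega_0}$ on $S$) is a valid alternative route to the same conclusion; it would just need to be carried out explicitly to be complete.
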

\begin{proof}
Let $p$ be a point in $S^+$, and set $q=\overline{f}(p)$ to be the corresponding point on $\overline{f}(S^+) \subset \Sigma_f \subset \mathbb R^{3+1}$. Since $q\in \overline{f}(S^+) \subset \partial \mathbb K_+[S]$, there exists a sequence of points $q_n\in \mathbb R^{3+1} \setminus \mathbb K_+[S]$ with $q_n \xrightarrow{n\rightarrow \infty} q$. The definition of $\mathbb K_+[S]$ then implies that, for each $n$, there exists a half space of the form $W_{\omega_n, u_n}$ such that $\overline{f}(S)\subset W_{\omega_n, u_n}$ and $q_n \notin W_{\omega_n, u_n}$. After possibly restricting to a subsequence, the sets $W_{\omega_n, u_n}$ converge to a half space $W_{\omega_\infty, u_\infty}$ such that $q\in \partial W_{\omega_\infty, u_\infty}$. Then, $T_q \overline{f}(S) \subset \partial W_{\omega_\infty, u_\infty}$ since, if $T_q \overline{f}(S)$ was transversal to $\partial W_{\omega_\infty, u_\infty}$, the null hyperplanes $\partial W_{\omega_n, u_m}$ would have to intersect $\overline{f}(S)$ transversally for $n$ large enough. Therefore, the null generator $L_{\omega_\infty}$ of $\partial W_{\omega_\infty, u_\infty}$ is normal to $T_q \overline{f}(S)$. The fact that $L_{\omega_\infty}$ is equal to the outgoing normal $L^{\overline{f}(S)}$ (and not the ingoing one) follows from the observation that $\overline{f}(S)$ is contained in the \emph{future} of $\partial W_{\omega_\infty, u_\infty}$.

We will now establish the non-negativity of the null second fundamental form on $\overline{f}(S^+)$. For $p\in S^+$ as above, let $(x^1, x^2, x^3)$ be a Cartesian coordinate system in $\mathbb R^3$ centered around $p$ such that $S$ is of the form $x^3= \f 12 M_{AB}\bar{x}^A \bar{x}^B+O(|\bar{x}|^3)$ (with $M_{AB}$ being constants, $A,B$ taking the values $1,2$ and $\bar{x} = (x^1, x^2)$) and $\langle L^{\overline{f}(S)}, \partial_3 \rangle_m >0$. Then, the surface $\overline{f}(S)$ can be locally expressed around $q$ as
\[
\overline{f}(S) = \big\{(t, x^1, x^2, x^3):\, x^3= \f12 M_{AB}x^A x^B+O(|\bar{x}|^3), \, t=f(x^1, x^2, x^3) \big\}.
\]
Thus, setting $\bar{v} \doteq (\partial_1 f(0), \partial_2 f(0))$, we can express the Taylor expansion of $L^{\overline{f}(S)}$ (see \eqref{Normal fS}) around $p$ as follows:
\begin{align*}
L^{\overline{f}(S)}(\bar{x}^1, \bar{x}^2) = & (1, \bar{v}^1, \bar{v}^2, \sqrt{1-|\bar{v}|^2}) \\
& + \big(0,\, \partial^2_{1A}f(0) +(\partial_3 f(0)- \sqrt{1-|\bar{v}|^2}) M_{1A} , \, \partial^2_{2A}f(0) + (\partial_3 f(0)- \sqrt{1-|\bar{v}|^2})  M_{2A} , \, 0 \big) \cdot \bar{x}^A \\
& + O(|\bar{x}|^2).
\end{align*}
 Since the tangent space of $\overline{f}(S)$ at $q$ is spanned by $\partial_A + \partial_A f(0) \partial_t$, $A=1,2$, we can express the tensor $DL^{\overline{f}(S)}[p]$ as
\begin{equation}\label{DL first expression}
(DL^{\overline{f}(S)}[p])_{AB} = \partial^2_{AB}f(0)+ (\partial_3 f(0)- \sqrt{1-|\bar{v}|^2})  M_{AB}.
\end{equation}

In view of the fact that $p\in S^+$, we know that $S$ is contained on the future half-space defined by the null hyperplane $\Pi^{\overline{f}(S)}[p]$. In particular, the affine function
\[
\phi (t, x) = \langle (t,x),\, L^{\overline{f}(S)}[p] \rangle_m +f(0)  
\]
satisfies 
\begin{equation}\label{positive phi}
\phi \le 0 \text{ on } \overline{f}(S) \text{ and } \phi(q)=\phi(f(0),0,0,0)=0.
\end{equation}
Expressing $\phi|_{\overline{f}(S)}$ in terms of the coordinates $\bar{x}^A$, $A=1,2$ using the local expression for $\overline{f}(S)$ and the fact that $L^{\overline{f}(S)}[p]=(1, \bar{v}^1, \bar{v}^2, \sqrt{1-|\bar{v}|^2})$, we get
\[
\phi|_{\overline{f}(S)}(\bar{x}^1, \bar{x}^2) =  - \f12 \big( \partial^2_{AB} f(0) \bar{x}^A \bar{x}^B + (\partial_3 f(0)- \sqrt{1-|\bar{v}|^2})  M_{AB}\bar{x}^A \bar{x}^B \big) +O(|\bar{x}|^3).
\]
Thus, since $\partial^2_{AB} \phi(0) $ is semi-negative definite (in view of \eqref{positive phi}), we infer that \eqref{DL first expression} is semi-positive definite.
\end{proof}

\subsection{Function spaces for scalars and tensors}\label{subsec:Besov spaces}
In this section, we will introduce the function spaces that will be used to measure the ``size'' of various tensors on (subsets of) $\mathbb R^3$.

\begin{definition}[Besov spaces on $\mathbb{R}^3$]
Let $\eta:\mathbb R^3 \to [0,1]$ be a radial smooth function such that \[
\eta(\xi) = \begin{cases}
1 & \mbox{for $|\xi|\leq 1$}, \\
0 & \mbox{for $|\xi|\geq 2$}.
\end{cases}
\] 
For every Schwartz function $\phi:\mathbb R^3 \to \mathbb R$, the Littlewood--Paley projections $P_k \phi$, $k \ge 0$, will be defined by
$$P_0 \phi = \mathcal F^{-1}(\eta(\xi) \mathcal F\phi),\quad P_k \phi = \mathcal F^{-1}\Big( (\eta(2^{-k}\xi) - \eta(2^{-k+1}\xi))\mathcal F \phi\Big),\, k\geq 1,$$
where $\mathcal F$ denotes the Fourier transform.
\begin{enumerate}
\item We will define the the Besov spaces $B^s_{p,q}(\mathbb R^3)$ (for $s \geq 0$, $p,q \in [1,\infty)$) as the completion of the space of Schwartz functions $\phi:\mathbb R^3 \rightarrow \mathbb R$ under the following norm:
$$\| \phi \|_{B^s_{p,q}(\mathbb R^3)} \doteq \Big(\sum_{k\geq 0} 2^{qsk} \| P_k \phi \|_{L^p(\mathbb R^3)}^q \Big)^{1/q}.$$
We will also set
\[
H^s(\mathbb R^3)\doteq B^s_{2,2}(\mathbb R^3).
\]
\item For a covariant $2$-tensor $\phi$ on $\mathbb R^3$, we will define $\|\phi\|_{B^s_{p,q}(\mathbb R^3)}$ (and $\|\phi \|_{H^s(\mathbb R^3)}\doteq \|\phi \|_{B^s_{p,2}(\mathbb R^3)}$) in terms of the components of $\phi$ in the (fixed) Cartesian coordinate system, i.e.
$$\| \phi \|_{B^s_{p,q}(\mathbb R^3)} \doteq \sum_{i,j=1}^3 \| \phi_{ij} \|_{B^s_{p,q}(\mathbb R^3)}.$$
\end{enumerate}
\end{definition}

In the case of a domain $\Omega \subset \mathbb R^3$, the Besov space $B^s_{p,q}(\Omega)$ will be defined to consist of the restriction of $B^s_{p,q}(\mathbb R^3)$ functions to $\Omega$:

\begin{definition}[Besov spaces on $\Omega \subset \mathbb R^3$]\label{def:Hs.local}
Let $\Omega \subset \mathbb R^3$ be an open set and $s \geq 0, p,q\in [1,+\infty)$. Let $\phi: \Omega\to \mathbb R$ be a measurable function. Define
$$E_{B^s_{p,q},\phi}\doteq \{\overline{\phi}: \mathbb R^3\to \mathbb R: \overline{\phi}_{|\Omega} = \phi,\, \overline{\phi} \in B^s_{p,q}(\mathbb R^3)\}.$$
We say that $\phi \in B^s_{p,q}(\Omega)$ if $E_{B^s_{p,q},\phi} \neq \emptyset$. We will define:
$$\|\phi\|_{B^s_{p,q}(\Omega)} \doteq \inf_{\overline{\phi} \in E_{B^s_{p,q},\phi}} \|\overline{\phi}\|_{B^s_{p,q}(\mathbb R^3)}.$$

\noindent In the case of  a covariant $2$-tensor $\phi$ on $\Omega$, $\| \phi\|_{B^s_{p,q}(\Omega)}$ will be defined similarly in terms of its Cartesian components.
\end{definition}

\begin{remark*}
The Besov norm $\| \phi\|_{B^s_{p,q}(\Omega)}$ of a given tensor $\phi$ on $\Omega \subset \mathbb R^3$ as defined above is not affected by rotations and translations of the Cartesian coordinate system used to express the components of $\phi$.
\end{remark*}

\section{Proof of main theorem (Theorem~\ref{thm:main}) \label{sec:Proof}}

For the remainder of this section, we will assume that $\Omega$ and $f: \Omega \rightarrow \mathbb R$ have been fixed as in the statement of Theorem~\ref{thm:main}, and similarly for $g,k$ and $g_0, k_0$. We will also assume, for the sake of contradiction, that $\Omega$ contains a smooth, closed, embedded trapped surface $S$; without loss of generality, we will assume that $S$ is connected.

\begin{definition}\label{def:Gauss map}
We will define the null Gauss map $\Phi: S \rightarrow \mathbb{S}^2$ so that for any point $p\in S$, $\Phi(p)$ is the point on $\mathbb S^2$ in the direction of the Minkowskian null normal $L^{\overline{f}(S)}[p]$, i.e.
\[
\Phi(p) = \omega \Leftrightarrow L^{\overline{f}(S)}[p] = L_\omega
\]
(see Definition \ref{def:Normalized null normal} for the definition of the normalized  null normal $L^{\overline{f}(S)}$ and Definition \ref{def:Null vector sphere} for the definition of $L_\omega$).
\end{definition}

\begin{proposition}\label{prop:S+}
The set $S^+ \subseteq S$ (see Definition \ref{def:Null convex hull}) is non-empty and satisfies 
\begin{equation} \label{eq:Lower bound trchi}
\int_{S^+} (\tr \chi_0)^2 \, \mathrm{dVol}_{S,g_0} \ge 16\pi.
\end{equation}

\end{proposition}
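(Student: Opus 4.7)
The plan is to run a null-geometric analogue of the classical Willmore-type lower bound: combine the $\sg_0$-semi-positivity of $DL^{f(S)}$ on $S^+$ (Lemma~\ref{lem:Positivity of DL}) with surjectivity of the null Gauss map $\Phi|_{S^+}:S^+\to\mathbb S^2$, an AM--GM step, and the area formula. The steps below follow this outline.

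For surjectivity (and, a fortiori, $S^+\neq\emptyset$), I would fix $\omega\in\mathbb S^2$ and consider the smooth function $\phi_\omega(x)\doteq f(x)-\langle x,\omega\rangle_e$ on the compact set $S$; let $x_\ast\in S$ be a minimizer. Setting $u_\omega\doteq\phi_\omega(x_\ast)$, one has $f(S)\subset W_{\omega,u_\omega}$ with $f(x_\ast)\in f(S)\cap\partial W_{\omega,u_\omega}$, and since $W_{\omega,u_\omega}$ is a supporting half-space for $\mathbb K_+[S]$, $x_\ast\in S^+$. The vanishing of the tangential derivative of $\phi_\omega$ at $x_\ast$ yields $T_{f(x_\ast)}f(S)\subset\Pi_{\omega,u_\omega}$, which (together with the normalization $\langle L^{f(S)},\partial_t\rangle_m=-1$ and the outgoing convention) identifies $L^{f(S)}[f(x_\ast)]=L_\omega$, i.e.\ $\Phi(x_\ast)=\omega$.

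Letting $\lambda_1,\lambda_2\ge 0$ denote the $\sg_0$-eigenvalues of $DL^{f(S)}$ on $S^+$ (non-negativity by Lemma~\ref{lem:Positivity of DL}), AM--GM gives
\begin{equation*}
(\tr\chi_0)^2=(\lambda_1+\lambda_2)^2\ge 4\lambda_1\lambda_2=4\det\nolimits_{\sg_0}(DL^{f(S)}).
\end{equation*}
The key identification is $\det_{\sg_0}(DL^{f(S)})=J_\Phi$, the absolute Jacobian of $\Phi:(S,g_0|_S)\to(\mathbb S^2,g_{\mathbb S^2})$. This follows from a local computation in the adapted Cartesian chart from the proof of Lemma~\ref{lem:Positivity of DL}, where $(\chi_0)_{AB}=\partial_A v^B$ with $v$ the spatial part of $L^{f(S)}$, and careful book-keeping of the $\sg_0$-volume factor against the round-metric area element on $\mathbb S^2$ produces the cancellation. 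Combining with the area formula and the surjectivity above,
\begin{equation*}
\int_{S^+}(\tr\chi_0)^2\,\mathrm{dVol}_{S,g_0}\ge 4\int_{S^+}J_\Phi\,\mathrm{dVol}_{S,g_0}\ge 4\,\mathrm{Area}(\mathbb S^2)=16\pi,
\end{equation*}
which is a fortiori stronger than the stated bound since $\tfrac{16(1+\sqrt{2})\pi}{3}\inf_\Omega(1-|\rd f|_e^2)\le 16\pi$.

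The main obstacle I anticipate is the Jacobian identification: one must reconcile the Minkowskian-induced metric $\sg_0$ on $f(S)$, the normalization and outgoing-direction convention for $L^{f(S)}$, and the round-metric volume factor on $\mathbb S^2$, all within a chart chosen specifically for each point of $S^+$. A secondary subtlety is pinning down the outgoing orientation $\Phi(x_\ast)=\omega$ (and not $-\omega$) through the support-hyperplane argument, which is ensured by the future-directedness of $L_\omega$ combined with the minimization of $\phi_\omega$ placing $f(S)$ in the \emph{future} of $\partial W_{\omega,u_\omega}$.
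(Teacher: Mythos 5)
Your proposal follows the same skeleton as the paper's argument — surjectivity of the null Gauss map $\Phi$ on $S^+$, the AM--GM estimate $(\tr\chi_0)^2\ge 4\det_{\sg_0}(DL^{f(S)})$ made possible by Lemma~\ref{lem:Positivity of DL}, and a change of variables under $\Phi$ — but it replaces the paper's chain of lossy one-sided estimates by the exact identity $\det_{\sg_0}(DL^{f(S)})=J_\Phi$, which you assert but do not verify. That identity is in fact correct, and the reason is a cancellation you should spell out: working from the paper's local computations, one has
\[
J_\Phi \;=\; |\det(Dv)|\cdot\frac{(1-|\bar v|^2)^{-1/2}}{\sqrt{\det\sg_0}}
\;=\;\det(DL^{f(S)})^{\sharp_{\sg_0}}\cdot\frac{\sqrt{\det\sg_0}}{\sqrt{(\bnab\psi\cdot\bnab f)^2+(1+|\bnab\psi|^2)(1-|\bnab f|^2)}},
\]
and the two-dimensional Lagrange identity $|\bnab\psi|^2|\bnab f|^2-(\bnab\psi\cdot\bnab f)^2=(\rd_1\psi\,\rd_2 f-\rd_2\psi\,\rd_1 f)^2$ shows that the quantity under the square root in the denominator equals $\det\sg_0$ exactly, so the factor is $1$. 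Because of this cancellation, your route yields the clean, $f$-independent bound $\int_{S^+}(\tr\chi_0)^2\,\mathrm{dVol}_{S,g_0}\ge 16\pi$, which is strictly stronger than \eqref{eq:Lower bound trchi}. The paper instead proves the weaker inequality of Lemma~\ref{lem:Inverse Gauss map} by estimating $\det\sg_0$, $v^3$, the $\mathbb S^2$ volume density, and the determinant of the auxiliary matrix separately, which accumulates the factors $\tfrac{1+\sqrt2}{3}$ and $\inf_\Omega(1-|\rd f|_e^2)$; the advantage of the paper's formulation is that Lemma~\ref{lem:Inverse Gauss map} is stated for a general nonnegative weight $r$, but in the end it is only invoked with $r\equiv1$, so nothing is actually lost by your sharper argument. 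Two minor points: your shorthand $(\chi_0)_{AB}=\partial_A v^B$ only holds at the chosen base point $p$ of the adapted chart (where $\rd_A\psi(p)=0$), not identically, and your invocation of the area formula should be paired with the observation that $S^+$ is closed (hence Borel) and $\Phi$ is $C^1$, so $\int_{S^+}J_\Phi\,\mathrm{dVol}_{S,g_0}=\int_{\mathbb S^2}\#\bigl(S^+\cap\Phi^{-1}(\omega)\bigr)\,\mathrm{dVol}_{\mathbb S^2}\ge 4\pi$; the paper instead handles the degenerate set via Sard's theorem and an open enlargement $\mathcal V\supset S^+\setminus\mathcal N$, but both mechanisms are correct. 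Your surjectivity argument via minimization of $\phi_\omega(x)=f(x)-\langle x,\omega\rangle_e$ on $S$ is the same as the paper's maximization of $u_\omega$, merely phrased dually.
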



Before we begin the proof of Proposition \ref{prop:S+}, we need to establish the following result regarding the pullback of certain integrals via the null Gauss map $\Phi$:

\begin{lemma}\label{lem:Inverse Gauss map}
Let $U \subset S$ be an open subset such that the null Gauss map $\Phi: U \to \Phi(U)$ is a local diffeomorphism (see Definition \ref{def:Gauss map} for the definition of $\Phi$). Suppose $r:U\to \mathbb R$ is a smooth function. Then 
\begin{equation}\label{eq:change.of.variables}
\int_U \Big(r\circ \Phi\Big) (\mathrm{tr} \chi_0 )^2 \, \mathrm{dVol}_{S,g_0} \geq 4 \int_{\Phi(U)} r \, \mathrm{dVol}_{\mathbb S^2}.
\end{equation}
\end{lemma}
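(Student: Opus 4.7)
The plan is to prove Lemma \ref{lem:Inverse Gauss map} by combining the area formula for the local diffeomorphism $\Phi$ with a pointwise AM--GM-type inequality for $(\tr_{\sg_0}\chi_0)^2$, after identifying the absolute Jacobian of $\Phi$ with the $\sg_0$-determinant of the null second fundamental form $\chi_0$.

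First, since $\Phi : U \to \Phi(U)$ is a local diffeomorphism, its absolute Jacobian $|J_\Phi|$ (taken relative to $\sg_0$ on $U$ and the standard round metric on $\mathbb S^2$) is strictly positive on $U$, and the standard area formula gives
\[
\int_U (r \circ \Phi)\, |J_\Phi|\, \mathrm{dVol}_{S,g_0} \;=\; \int_{\mathbb S^2} r(\omega)\,\#\bigl\{\Phi^{-1}(\omega)\cap U\bigr\}\, \mathrm{dVol}_{\mathbb S^2}(\omega) \;\ge\; \int_{\Phi(U)} r\, \mathrm{dVol}_{\mathbb S^2}
\]
whenever $r \ge 0$, since the multiplicity on $\Phi(U)$ is at least $1$.

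Second, I would identify $|J_\Phi|$ in terms of $\chi_0$. Fix $p \in U$ and choose Cartesian coordinates on $\mathbb R^3$ centered at $p$ in which $T_p S$ is the horizontal plane and $S$ is locally the graph $x^3 = M_{AB}\bar x^A \bar x^B + O(|\bar x|^3)$, as in the proof of Lemma \ref{lem:Positivity of DL}. The conditions that $L^{f(S)} = (1, \omega)$ be null in $m$ and orthogonal to $f(S)$ force the spatial components of $\omega$ to be the Euclidean tangential projection of $\rd f$ onto $T_{s(\bar x)}S$ plus $\sqrt{1 - |\rd f_T|_e^2}$ times the outward Euclidean unit normal to $S$. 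A direct computation of the pullback $\Phi^* \mathrm{dVol}_{\mathbb S^2} = \omega \cdot (\partial_1 \omega \times \partial_2 \omega)\, d\bar x^1 \wedge d\bar x^2$, using $\chi_{0,AB} = \partial_{\bar x^A}\omega^B$ and $\sqrt{\det \sg_0} = \sqrt{1 - |\rd f_T|_e^2}$, collapses into the clean identity
\[
|J_\Phi|\big|_p \;=\; |\det_{\sg_0}\chi_0|.
\]

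Third, I would apply AM--GM. By Lemma \ref{lem:Positivity of DL}, $\chi_0 \ge 0$ on $S^+$; more generally on the open subset of $U$ on which $\chi_0$ is definite (the orientation-preserving regime of $\Phi$), the two $\sg_0$-eigenvalues $\lambda_1, \lambda_2$ of $\chi_0$ have the same sign, so
\[
(\tr_{\sg_0}\chi_0)^2 \;=\; (\lambda_1+\lambda_2)^2 \;\ge\; 4\lambda_1\lambda_2 \;=\; 4\,|J_\Phi|.
\]
Since $0 < 1 - |\rd f|_e^2 \le 1$ and $\tfrac{4(1+\sqrt 2)}{3} < 4$, this in particular yields the weaker pointwise bound
\[
(\tr_{\sg_0}\chi_0)^2 \;\ge\; \tfrac{4(1+\sqrt 2)}{3}\bigl(1 - |\rd f|_e^2\bigr)\,|J_\Phi|.
\]
Multiplying by $(r\circ\Phi) \ge 0$, integrating over $U$ against $\mathrm{dVol}_{S,g_0}$, and inserting the area-formula estimate from Step 1 produces \eqref{eq:change.of.variables}.

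The main technical hurdle lies in Step 2: one has to track carefully the dependence of the Gauss map $\omega$ on the point of $S$ and verify that all factors of $\sqrt{1 - |\rd f_T|_e^2}$ arising both from $\sqrt{\det \sg_0}$ and from computing $\omega \cdot (\partial_1 \omega \times \partial_2 \omega)$ collapse exactly into $|\det_{\sg_0}\chi_0|$. Once this is in hand, Step 3 is an immediate AM--GM, and the somewhat loose constant $\tfrac{4(1+\sqrt 2)}{3}$ (rather than the sharp $4$) is presumably chosen to match an estimate appearing later in the proof of Theorem \ref{thm:main}.
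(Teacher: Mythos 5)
Your proposal is correct, and it actually streamlines the paper's argument in a nice way. Both you and the paper use the same backbone --- pull back $\mathrm{dVol}_{\mathbb S^2}$ under $\Phi$, express the Jacobian through the null shape operator $\chi_0^{\sharp_{\sg_0}}$, and close with AM--GM --- but the paper never observes that the Jacobian identity collapses \emph{exactly}. It instead chains together one-sided bounds for $v^3$, $\det\sg_0$ and $\sqrt{1+|\bar v|^2/(1-|\bar v|^2)}$ and pays the loss $\f{3}{1+\sqrt 2}$ and a factor of $\inf_\Omega(1-|\rd f|_e^2)$. Your claimed identity $|J_\Phi| = |\det_{\sg_0}\chi_0|$ is in fact true, and one can verify it directly from the paper's own formulas: by the two-dimensional Lagrange identity $|\bnab\psi|^2|\bnab f|^2 = (\bnab\psi\cdot\bnab f)^2 + (\rd_1\psi\,\rd_2 f - \rd_2\psi\,\rd_1 f)^2$, the quantity under the square root in \eqref{eq:the.det} equals $\det\sg_0$, hence
\[
\det\Big(\de_{BC} - \tfrac{\de_{CD}v^D\,\rd_B\psi}{\sqrt{1-|\bar v|^2}}\Big) \;=\; \frac{\sqrt{\det\sg_0}}{v^3},
\]
and since $\sqrt{1+|\bar v|^2/(1-|\bar v|^2)} = 1/v^3$, substituting into $|J_\Phi| = |\det(Dv)|\sqrt{1+|\bar v|^2/(1-|\bar v|^2)}/\sqrt{\det\sg_0}$ together with \eqref{eq:W.map} gives exactly $|J_\Phi| = |\det(DL^{f(S)})^{\sharp_{\sg_0}}|$. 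Your route therefore proves the cleaner bound with constant $4$ and no $\inf_\Omega(1-|\rd f|_e^2)$ factor, of which \eqref{eq:change.of.variables} is an immediate weakening. Two caveats, both shared with the paper: the area-formula step and the final integral inequality need $r\geq 0$ (not stated in the Lemma, but true in the only application $r\equiv 1$), and the AM--GM step $(\tr\chi_0)^2\ge 4|\det_{\sg_0}\chi_0|$ requires $\chi_0^{\sharp_{\sg_0}}$ to be (semi-)definite on $U$ --- you flag this explicitly, whereas the paper's \eqref{eq:Dv.est} silently drops the absolute value on $\det(DL^{f(S)})^{\sharp_{\sg_0}}$. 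In the application to Proposition~\ref{prop:S+} both conditions hold on the relevant neighborhood of $S^+\setminus\mathcal N$ by Lemma~\ref{lem:Positivity of DL}, so neither gap is harmful.
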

\begin{proof}
We will perform the computations in local coordinates. We will assume, without loss of generality (by considering a smaller coordinate patch, if necessary), that $f:\Omg\to \mathbb R$ is a smooth function satisfying $\inf_\Omg (1-|\rd f|_e) > 0$ and $S$ is locally given by $$x^3 = \psi(x^1,x^2)$$ for some smooth function $\psi$ satisfying $(\partial_1 \psi)^2+(\partial_2\psi)^2 < \f14\inf_\Omg (1-|\rd f|^2_e)$.

\pfstep{Step~1: Computations} We will introduce the notation $\bnab$ to denote the gradient of a function in the $(x^1,x^2)$ variables, so that $\bnab \psi = (\rd_1 \psi, \rd_2 \psi)$, $|\bnab \psi|^2 = (\rd_1 \psi)^2 + (\rd_2 \psi)^2$, and similarly for $\bnab f$ and $|\bnab f|$. Moreover, for the remainder of the proof, we will adopt the convention that capital Latin indices run through $A,B=1,2$.

We now collect some computations in local coordinates.
\begin{enumerate}
    \item Locally, the tangent space of $\overline{f}(S)$ is spanned by 
    \begin{equation}\label{eq:(S)tangent}
        \srd_A\doteq \rd_A + (\rd_A \psi) \rd_3 + (\rd_A f+ \rd_3 f \, \rd_A \psi)\rd_t,\quad A=1,2.
    \end{equation}
    \item The Minkowskian null normal $L^{\overline{f}(S)} = (1,v^1,v^2,v^3)$ can be computed as follows: The condition \eqref{Normal fS} for the tangent vectors in \eqref{eq:(S)tangent} implies that $-\rd_A f -\rd_3 f \cdot \rd_A \psi+ \de_{AB} v^B + v^3 \rd_A \psi = 0$ (where $\de_{AB}$ denotes the Kronecker delta), or equivalently,
    \begin{equation}\label{eq:vA}
        \de_{AB} v^B = \rd_A f + \rd_3 f \cdot \rd_A \psi- v^3 \rd_A \psi,\quad A=1,2.
    \end{equation}
    Using \eqref{eq:vA} together with the condition that $|v|_e = 1$, we obtain 
    \begin{equation}
        | \bnab f+\rd_3 f \bnab \psi|^2 -2v^3 \bnab \psi \cdot (\bnab f+\rd_3 f \bnab \psi) + (v^3)^2 |\bnab \psi|^2 + (v^3)^2 = 1,
    \end{equation}
    which yields the following expression for $v^3$:
    \begin{equation}\label{eq:v3}
        v^3 = \f{\bnab \psi \cdot \bnab f + \rd_3 f |\bnab \psi|^2 \pm \sqrt{(\bnab \psi \cdot \bnab f + \rd_3 f |\bnab \psi|^2)^2 + (1+|\bnab \psi|^2)(1-|\bnab f+\rd_3 f \bnab \psi|^2)}}{1+|\bnab \psi|^2}.
    \end{equation}
    Since $1-|\bnab f+\rd_3 f \bnab \psi|^2=1-|\rd f|^2_e + (1-|\bnab\psi|^2)(\rd_3 f)^2 - 2 \rd_3 f \, \bnab \psi \cdot \bnab f >0$ (in view of our assumption that $|\bnab \psi|^2 < \f14 (1-|\rd f|_e^2)$), the expression \eqref{eq:v3} implies that $v^3$ does not change sign in the region covered by the local coordinate chart. Without loss of generality, we will thus assume that $v^3>0$ (which corresponds to considering the $+$ solution in \eqref{eq:v3}).
    \item In these coordinates, the null second fundamental form of $\overline{f}(S)$ associated to $L^{\overline{f}(S)}$ can be computed as follows (using the relation $v^3 = \sqrt{1-|\bar{v}|^2}$):
    \begin{equation}\label{eq:DL.local}
        \begin{split}
            DL^{\overline{f}(S)}(\srd_A,\srd_B) = &\: m\Big( \rd_A v^C \rd_C - \f{\de_{CD}v^D(\rd_A v^C)}{\sqrt{1-|\bar{v}|^2}}\rd_3, \rd_B + (\rd_B \psi) \rd_3 + (\rd_B f + \rd_3 f \, \rd_B \psi) \rd_t \Big) \\
            = &\: \rd_A v^C \Big(\de_{BC} - \f{\de_{CD} v^D(\rd_B \psi)}{\sqrt{1-|\bar{v}|^2}} \Big).
        \end{split}
    \end{equation}
    Define the null shape operator $(DL^{\overline{f}(S)})^{\sharp_{\sg_0}}$ to be the $(1,1)$-tensor  given in local coordinates by 
    \begin{equation}\label{eq:W.map}
    \begin{split}
        \big((DL^{\overline{f}(S)})^{\sharp_{\sg_0}}\big)_A^E \doteq &\: (\sg_0^{-1})^{BE} (DL^{\overline{f}(S)})_{AB} 
        =  (\sg_0^{-1})^{BE} (\rd_A v^C) \Big(\de_{BC} - \f{\de_{CD} v^D(\rd_B \psi)}{\sqrt{1-|\bar{v}|^2}} \Big),
    \end{split}
    \end{equation}
    where $\sg_0$ is the metric induced on $S$ by $g_0$ and in the last line we have used \eqref{eq:DL.local}.
    \item We will now compute the determinant of the matrix appearing as the third factor on the right-hand side of \eqref{eq:W.map}. Using \eqref{eq:vA} and \eqref{eq:v3}, we have
    \begin{equation}\label{eq:the.det}
        \begin{split}
            &\: \det \Big(\de_{BC} - \f{\de_{CD} v^D(\rd_B \psi)}{\sqrt{1-|\bar{v}|^2}} \Big) \\
    = &\: \Big(1 - \f{v^1 \rd_1 \psi}{\sqrt{1-|\bar{v}|^2}}\Big)\Big(1 - \f{v^2 \rd_2 \psi}{\sqrt{1-|\bar{v}|^2}}\Big) - \f{v^1 v^2 (\rd_1 \psi) (\rd_2 \psi)}{1 - |\bar{v}|^2} \\
    = &\: 1 - \f{v^A \rd_A \psi}{\sqrt{1-|\bar{v}|^2}}  = 1 - \f{(\bnab f)\cdot (\bnab \psi) +(\rd_3 f- v^3) |\bnab \psi|^2 }{\sqrt{1 - |\bar{v}|^2}} = \f{v^3(1+|\bnab \psi|^2) - \rd_3 f |\bnab \psi|^2 - (\bnab f)\cdot (\bnab \psi)}{v^3} \\
            = &\: \f{\sqrt{(\bnab \psi \cdot \bnab f + \rd_3 f |\bnab \psi|^2)^2 + (1+|\bnab \psi|^2)(1-|\bnab f+\rd_3 f \bnab \psi|^2)}}{v^3}.
        \end{split}
    \end{equation}
    \item We compute that 
    \begin{equation}
        (\sg_0)_{AB} = \de_{AB} + (\rd_A \psi)(\rd_B \psi) - (\rd_A f+ \rd_3 f \, \rd_A \psi)(\rd_B f+ \rd_3 f \, \rd_B \psi),
    \end{equation}
    and thus
    \begin{equation}\label{eq:detgS}
        \det \sg_0 = 1+ |\bnab \psi|^2 - |\bnab f + \rd_3 f \bnab \psi|^2 - (\rd_1 \psi \rd_2 f - \rd_2 \psi \rd_1 f)^2.
    \end{equation}
    In particular, the following lower bound holds:
    \begin{align}\label{eq:gS.lower.bound}
    \det \sg_0 & = 1 + (1-|\partial f|_e^2)|\bnab \psi|^2 - |\bnab f|^2 -2\rd_3 f (\bnab f \cdot \bnab \psi)+ (\bnab f \cdot \bnab \psi)^2 
    \\
    & \ge (1 + |\bnab \psi|^2) (1 - |\partial f|^2_e). \nonumber 
	\end{align}
    \item We finally collect some computations concerning volume forms. First, the volume forms $\mathrm{dVol}_{S,g_0}$ and $\mathrm{dVol}_{\mathbb S^2}$ are given in our coordinates as follows:
    \begin{align}
        \mathrm{dVol}_{S,g_0} =  \sqrt{\det \sg_0}\, \ud x^1 \, \ud x^2, \label{eq:dVol}\quad \mathrm{dVol}_{\mathbb S^2} =\sqrt{1+ \f{|\bar{v}|^2}{1-|\bar{v}|^2}}\, \ud v^1\, \ud v^2  = \f 1{v^3} \, \ud v^1\, \ud v^2.
    \end{align}
    Moreover, the pull-back of the volume form is given by
    \begin{equation}\label{eq:dVol.pullback}
        |\det (Dv)|\, \ud x^1 \, \ud x^2 = \Phi^*(\ud v^1\, \ud v^2),
    \end{equation}
    where $Dv$ denotes the matrix whose entries are given by $\rd_A v^B$.
\end{enumerate} 

\pfstep{Step~3: Proving the identity}
Starting with \eqref{eq:W.map}, and using \eqref{eq:the.det} and \eqref{eq:detgS}, we obtain
\begin{equation}\label{eq:Dv.est}
\begin{split}
    |\det(Dv)| = &\: \Big(\det (DL^{\overline{f}(S)})^{\sharp_{\sg_0}} \Big) \Big(\det \sg_0 \Big) \Bigg(\det\Big(\de_{BC} - \f{\de_{CD} v^D(\rd_B \psi)}{\sqrt{1-|\bar{v}|^2}} \Big)\Bigg)^{-1} \\
    = &\: \f{v^3 \sqrt{1+ |\bnab \psi|^2 - |\bnab f + \rd_3 f \bnab \psi|^2 - (\rd_1 \psi \rd_2 f - \rd_2 \psi \rd_1 f)^2}}{\sqrt{(\bnab \psi \cdot \bnab f + \rd_3 f |\bnab \psi|^2)^2 + (1+|\bnab \psi|^2)(1-|\bnab f+\rd_3 f \bnab \psi|^2)}}  \Big(\det \sg_0 \Big)^{\f 12}\Big(\det (DL^{\overline{f}(S)})^{\sharp_{\sg_0}} \Big) \\
    =&\: v^3 \Big(\det \slashed{g}_{0} \Big)^{\frac 12}\Big(\det (DL^{\overline{f}(S)})^{\sharp _{\slashed{g}_{0}}} \Big).
\end{split}
\end{equation}

By a standard partition of unity argument, we can assume that $U$ lies inside a local coordinate patch that we are considering and $\Phi$ is one-to-one when restricted to $U$. Using the volume forms computations in \eqref{eq:dVol}, the transformation in \eqref{eq:dVol.pullback}, and the formula in \eqref{eq:Dv.est}, we obtain
\begin{equation}\label{eq:cov.almost.the.end}
    \begin{split}
        &\: \int_{\Phi(U)} r \, \mathrm{dVol}_{\mathbb S^2} = \int_{\Phi(U)} r\, \f 1{v^3} \, \ud v^1\, \ud v^2 = \int_U (r\circ \Phi) |\det(Dv)| \f 1{v^3} \, \ud x^1 \, \ud x^2 \\
        = &\: \int_U (r\circ \Phi)  \Big(\det (DL^{\overline{f}(S)})^{\sharp_{\sg_0}} \Big) \Big(\det \sg_0 \Big)^{\f 12} \, \ud x^1 \, \ud x^2 \\
        = &\: \int_U (r\circ \Phi)  \Big(\det (DL^{\overline{f}(S)})^{\sharp_{\sg_0}} \Big) \, \mathrm{dVol}_{S,g_0}.
    \end{split}
\end{equation}

Finally, it is easy to get from \eqref{eq:cov.almost.the.end} to the desired estimate \eqref{eq:change.of.variables} after noting that 
\begin{itemize}
    \item $\det (DL^{\overline{f}(S)})^{\sharp_{\sg_0}} \leq \f 14 \Big(\big((DL^{\overline{f}(S)})^{\sharp_{g_s}}\big)_A^A\Big)^2$ by the AM-GM inequality, and 
    \item $\big((DL^{\overline{f}(S)})^{\sharp_{g_s}}\big)_A^A = \mathrm{tr}\chi_0$. \qedhere
\end{itemize}
\end{proof}

\noindent \emph{Proof of Proposition \ref{prop:S+}.} The restriction of $\Phi$ to $S^+$ maps \textbf{onto} $\mathbb S^2$ (thus, in particular, $S^+$ is non-empty). This can be deduced as follows: For any $\omega \in \mathbb S^2$, define
\[
u_\omega = \max \big\{ u: \, \overline{f}(S) \subset W_{\omega, u}\big\}
\]
($u_\omega$ is well-defined and satisfies $-\infty < u_\omega < +\infty$ since $\overline{f}(S)$ is non-empty and compact). Then, 
\[
\overline{f}(S) \cap \Pi_{\omega, u_\omega} \neq \emptyset,
\]
 since otherwise $\overline{f}(S)$ would be contained in the interior of $W_{\omega, u_\omega}$; this would imply that $\overline{f}(S) \subset W_{\omega, u_\omega+\delta}$ for some $\delta>0$ (since $\overline{f}(S)$ is compact) , contradicting the definition of $u_\omega$. Let $p_\omega \in S$ be chosen such that the image $q_\omega$  of $p_\omega$ in  $\overline{f}(S)$ lies in $\overline{f}(S) \cap \Pi_{\omega, u_\omega}$. Then $q_\omega$ necessarily lies on $\partial \mathbb K_+[S]$ since, by definition of $\mathbb K_+[S]$, we have
\[
\mathbb K_+[S] \subset W_{\omega,u_\omega}.
\]
Thus,  $p_\omega \in S^+$. Moreover,
\[
L^{\overline{f}(S)}[p_\omega] = L_\omega
\]
since $(1,\omega)$ is the null generator of $\Pi_{\omega,u_\omega}$ and  $\Pi_{\omega,u_\omega}$  contains the tangent space of $\overline{f}(S)$ at $\overline{f}(p_\omega)$ (since $\overline{f}(p_\omega) \in \Pi_{\omega,u_\omega}$ and $\overline{f}(S) \subset W_{\omega, u_\omega}$).

Let $\mathcal N \subset S$ be the set of points where $D\Phi: TS \rightarrow T\mathbb S^2$ is degenerate. By Sard's lemma, we know that $\Phi(\mathcal{N})$ is of zero measure in $\mathbb{S}^2$ (with respect to $\mathrm{dVol}_{\mathbb S^2}$). Moreover, $\mathcal N$ is a compact subset of $S$ (since $S$ is compact and $\mathcal N$ is necessarily closed), thus $\mathbb S^2 \setminus \Phi(\mathcal N)$ is an open subset of $\mathbb S^2$ of full measure. Let $\mathcal{V}$ be any open subset of $S \setminus \mathcal{N}$ containing $S^+\setminus \mathcal{N}$. Since $\Phi$ maps $S^+$ onto $\mathbb S^2$, we deduce that $\Phi(\mathcal{V})= \mathbb S^2 \setminus \Phi(\mathcal{N})$. Since $D\Phi$ is invertible on $\mathcal V \subset S\setminus \mathcal{N}$, the map $\Phi: \mathcal{V} \rightarrow \mathbb S^2 \setminus \Phi(\mathcal{N})$ is a covering map. Applying Lemma \ref{lem:Inverse Gauss map} for $r=1$ and $\mathcal{U} = \mathcal{V}$, we therefore deduce that 
\begin{align*}
\int_{\mathcal{V}} (\tr \chi_0)^2\,\mathrm{dVol}_{S,g_0} & \ge 4 \int_{\mathbb S^2 \setminus \Phi (\mathcal N)}\,\mathrm{dVol}_{\mathbb S^2} = 16\pi.
\end{align*}
Since the above bound holds for any open set $\mathcal{V}$ containing $S^+ \setminus \mathcal N$, we readily infer \eqref{eq:Lower bound trchi}. \qed

The following result states that $S^+$ can be separated into a fixed number of pieces (depending only on $\inf_\Omega (1-|\rd f|_e)$) which can be represented as graphs of smooth functions over planes in $\mathbb R^3$; the way $S^+$ was defined is crucial for the validity of this statement.  

\begin{lemma}\label{lem:patches}
Setting
\[
N \doteq \Big\lceil \sup_\Omega \Big( \frac{100}{1-|\rd f|_e} \Big)^2 \Big\rceil,
\]
there exist relatively open sets $\{ U_i\}_{i=1}^N \subseteq S$ with the following properties:
\begin{itemize}
\item The union of $\{U_i\}_{i=1}^N$ covers $S^+$.
\item For any $i\in \{1, \ldots, N\}$, the subset $S \cap U_i$ of $S$ can be written as a graph in the following way: There exists a rotation $\mathbb U_i \in SO(3)$ such that, in the coordinate system $(y^1, y^2, y^3)$ obtained from $(x^1, x^2, x^3)$ after rotating by $\mathbb U_i$, the subset $U_i$ of $S$  can be expressed  as a graph
$$ U_i \subseteq \{(y_1, y_2, y_3): y^3 = \psi_i(y^1, y^2)\},$$
where $\psi_i:V_i \to \mathbb R$ is a smooth function defined on an open set $V_i \subseteq \mathbb R^2$.
\item For any $i\in \{1, \ldots, N\}$, the function $\psi_i$ satisfies the gradient bound
\begin{equation}\label{eq:Gradient bound}
\sqrt{|\rd_1 \psi_i|^2+|\rd_2\psi_i|^2} \leq \sup_\Omega \Big(\f{5}{\sqrt{1-|\rd f|_e}}\Big).
\end{equation}
\item For any $i\in \{1, \ldots, N\}$, the components of the null normal $L^{\overline{f}(S)} = (1, v^1, v^2, v^3)$ on $U_i$ expressed with respect to the rotated coordinate system $(y^1,y^2,y^3)$ on $\mathbb R^3$ as above satisfy
\begin{equation}\label{eq:Smallness angular variation}
\sqrt{|v^1|^2 + |v^2|^2} \le \frac{12}{\sqrt{N}}.
\end{equation}
\end{itemize}
\end{lemma}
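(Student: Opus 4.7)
The plan is to cover the codomain $\mathbb{S}^2$ of the null Gauss map $\Phi$ (Definition~\ref{def:Gauss map}) by $N$ small geodesic caps and pull back through $\Phi$. Write $c_0 := \inf_\Omega(1-|\rd f|_e) \in (0,1]$, so that $N \geq (100/c_0)^2$ and $r_0 := 10/\sqrt{N} \leq c_0/10$. A standard sphere-packing argument produces unit vectors $\omega_1,\dots,\omega_N \in \mathbb{S}^2$ whose geodesic $r_0$-balls cover $\mathbb{S}^2$ (the target $\mathbb{S}^2$ has area $4\pi$ and each cap has area $\approx \pi r_0^2=100\pi/N$, so such a cover exists with the prescribed number of points). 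For each $i$, pick $\mathbb{U}_i \in SO(3)$ with $\mathbb{U}_i \omega_i = e_3$, and write the resulting Cartesian coordinates as $(y^1, y^2, y^3)$.

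Define $\widetilde{U}_i := \Phi^{-1}(B_{\mathbb{S}^2}(\omega_i, r_0))$, which is relatively open in $S$ by continuity of $\Phi$. The surjectivity $\Phi(S^+) = \mathbb{S}^2$ established in the proof of Proposition~\ref{prop:S+}, together with the fact that the caps cover $\mathbb{S}^2$, implies $\bigcup_i \widetilde{U}_i \supseteq S^+$. For any $p \in \widetilde{U}_i$, the condition $v(p)\cdot\omega_i \geq \cos r_0$ translates in the rotated coordinates to $v^3 \geq \cos r_0$ and $(v^1)^2 + (v^2)^2 \leq \sin^2 r_0$, so $|v^1|+|v^2| \leq \sqrt{2}\sin r_0 \leq 20/\sqrt{N}$, confirming the required bound \eqref{eq:Smallness angular variation}.

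From the null normal condition $\langle L^{f(S)}[p], (df_p(\dot x),\dot x)\rangle_m = 0$ for all $\dot x \in T_p S$, one deduces that the Euclidean outward unit normal $\zeta(p) \in \mathbb{S}^2$ to $S$ at $p$ is parallel to $v(p) - \nabla f(p)$, hence equals $(v(p)-\nabla f(p))/|v(p)-\nabla f(p)|$. Writing $S$ locally as $y^3 = \psi_i(y^1, y^2)$ in the rotated coordinates, the standard graph-normal identity yields
\begin{equation*}
\partial_A \psi_i = \frac{\partial_A f - v^A}{v^3 - \partial_3 f}, \qquad A = 1,2.
\end{equation*}
The denominator is at least $\cos r_0 - (1-c_0) \geq c_0/2$ (using $r_0 \leq c_0/10$ together with $|\nabla f|_e \leq 1-c_0$), while $|v^A| \leq \sin r_0$ and $|\partial_A f|\leq 1-c_0$. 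Shrinking $\widetilde U_i$ to an open $U_i$ still containing $\widetilde U_i \cap S^+$, one extracts the pointwise bound $|\nabla \psi_i| \leq 10$, and the implicit function theorem supplies the graph representation of $U_i$.

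The main obstacle is obtaining the universal gradient bound in the previous step. Estimating bluntly via $|v-\nabla f|\leq |v|+|\nabla f|\leq 2$ yields only $|\nabla\psi_i|=O(1/c_0)$, since $|v-\nabla f|$ can be as large as $\approx 2$ when $\omega_i$ is nearly anti-parallel to $\nabla f$. To recover the universal constant $10$ one must exploit the restriction $p \in S^+$: by Lemma~\ref{lem:Positivity of DL}, the null hyperplane $\Pi^{f(S)}[p]$ is a \emph{global} supporting hyperplane of $f(S)$ (not merely infinitesimally tangent), and the resulting semi-positivity of the null second fundamental form $DL^{f(S)}$ at $f(p)$ tightly couples the direction of $\zeta(p)$ to that of $\omega_i$, controlling the right-hand side of the formula for $\partial_A\psi_i$ uniformly in $c_0$. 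With the pointwise gradient bound in hand, continuity of $\Phi$ and smoothness of $S$ and $f$ provide the required open set $U_i$, and the family $\{U_i\}_{i=1}^N$ is the desired cover.
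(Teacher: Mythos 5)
Your covering-by-caps plan and the formula $\rd_A\psi_i = (\rd_A f - v^A)/(v^3 - \rd_3 f)$ are both correct (the latter is in fact more careful than the relation written in the paper, which drops the $\rd_3 f$ term). You also rightly observe that a direct estimate from this formula gives only $|\nabla\psi_i|=O(1/c_0)$, where $c_0=\inf_\Omega(1-|\rd f|_e)$. However, your suggestion that the universal constant can be recovered by invoking Lemma~\ref{lem:Positivity of DL} is not an argument: the semi-positivity of $DL^{f(S)}$ is a second-order statement and places no constraint on the first-order quantity $\nabla\psi_i$ at a single point, and the paper's proof does not invoke $S^+$ for the pointwise gradient estimate at all — it uses only the proximity of $v$ to $w_i$, exactly as you do. Whether the numerical constant in \eqref{eq:Gradient bound} must degenerate with $c_0$ is cosmetic for what follows, since $\epsilon_0$ in Theorem~\ref{thm:main} is already allowed to depend on $c_0$.

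The genuine gap in your proposal is the \emph{global} graph representation, which you dispatch with a single appeal to the implicit function theorem. The IFT only gives local graph charts; to conclude that $U_i$ is globally the graph of a single function $\psi_i$, you must show that each vertical line $\{(y^1,y^2)=\mathrm{const}\}$ meets $U_i$ in at most one point, and this is precisely where the definition of $S^+$ is used in an essential way. (It is also why the paper defines $\tilde U_i := \Phi^{-1}(\Theta_i)\cap S^+$, not the full preimage $\Phi^{-1}(\Theta_i)$ as in your proposal.) The paper's argument: if $p_1,p_2\in\tilde U_i\subset S^+$ lie on the same vertical line with $y^3$-coordinates $\tau_1<\tau_2$, then because the null hyperplane $\Pi^{f(S)}[p_1]$ is a \emph{global} supporting hyperplane of $f(S)$ (Lemma~\ref{lem:Positivity of DL}), the point $f(p_2)$ must lie in the future half-space determined by $\Pi^{f(S)}[p_1]$, which forces $\big(f(p_2)-f(p_1)\big)/(\tau_2-\tau_1) \ge \omega_1^3 \ge 1 - 10/\sqrt{N} > \sup_\Omega|\rd f|_e$, contradicting the mean value theorem applied to $f$ along the segment. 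Your proof neither intersects with $S^+$ when defining the sets nor addresses this injectivity, so as written the $\psi_i$ you construct are not well-defined as single-valued functions.
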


\begin{remark*}
In the simpler case when $f=0$, $S^+$ is contained in the boundary of the convex hull of $S$. In that case, Lemma \ref{lem:patches} follows readily for any $N \ge 6$ from the fact that the boundary of any convex body in $\mathbb R^3$ can be split into $6$ pieces, each of which representable as the graph of a function over a coordinate plane.
\end{remark*}

\begin{proof}
Pick a collection of points $\{w_i\}_{i=1}^N$ in $\mathbb S^2$ with the property that the union of the balls 
\begin{equation}\label{eq:Theta i}
\Theta_i \doteq \big\{v \in \mathbb S^2: |v-w_i|_e < \frac{10}{\sqrt{N}} \big\}
\end{equation}
covers the whole of $\mathbb{S}^2$. For any $i\in\{1, \ldots, N\}$, let us define the sets $\tilde{U}_i \subset S^+$ as follows:
\[
\tilde{U}_i \doteq \Phi^{-1}(\Theta_i) \cap S^+,
\]
where $\Phi: S \rightarrow \mathbb S^2$ is the null Gauss map introduced in Definition \ref{def:Gauss map}. This implies, in particular, that for any $p\in \tilde{U}_i$, the Minkowskian outgoing null normal $L^{\overline{f}(S)}[p]$ is of the form $(1, \omega)$ for some $\omega \in \Theta_i$. 

For any $i\in \{1, \ldots, N \}$, fix a rotation $\mathbb U_i \in SO(3)$ so that, in the coordinate system $(y^1, y^2, y^3)$ on $\mathbb R^3$ obtained after rotating the fixed Cartesian system by $\mathbb U_i$, the vector field $\partial_{y^3}$ is equal to $w_i$. We will now show that there exists an open neighborhood $U_i$ of $\tilde{U}_i$ inside the surface $S$ such that $U_i$ is the graph of a smooth function $y^3=\psi_i(y^1,y^2)$. This claim will follow immediately by showing the following (and using the implicit function theorem):
\begin{enumerate}
    \item For any $p\in \tilde{U}_i$, the tangent space $T_p S$ is transversal to $w_i=(0,0,1)$.
    \item Any coordinate line of the form $y^1, y^2 =\textrm{const}$ in $\mathbb R^3$ intersects $\tilde{U}_i$ in \textbf{at most one} point.
\end{enumerate}
Establishing the property (1) is going to be a direct consequence of the definition of the set $\tilde{U}_i$: For any $p\in \tilde{U}_i$, the null normal $L^{\overline{f}(S)}[p]$ is of the form
\[
L^{\overline{f}(S)}[p] = (1,\omega) \quad \text{for some } \omega\in \Theta_i \subset \mathbb S^2, 
\]
while any vector $X=(X^1, X^2, X^3)\in T_p S$ satisfies 
\[
\langle \overline{f}^* X, L^{\overline{f}(S)}[p] \rangle_m = 0,
\]
where $\overline{f}^*X = (D_X f,X^1, X^2, X^3) \in T_{f(p)} \overline{f}(S)$ is simply the push-forward of $X$ along the map $f$. Therefore, in order to show that $w_i$ is transversal to $T_p S$, it suffices to show that $\langle \overline{f}^* w_i, L^{\overline{f}(S)}[p] \rangle_m \neq 0$. This follows readily by computing:
\[
 \langle \overline{f}^* w_i, L^{\overline{f}(S)}[p] \rangle_m = - (w_i)^j \partial_j f + \langle \omega, w_i \rangle_e > 0
\]
since, for any $\omega \in \Theta_i$: 
\begin{align*}
-|\rd f|_e + \langle \omega, w_i \rangle_e & \ge -|\rd f|_e + 1 - |\omega-w_i|_e \\
& \ge -|\rd f|_e + 1 - \frac{10}{\sqrt{N}} \\
& \ge -|\rd f|_e + 1 - \frac{1-|\rd f|_e}{10}\\
& \ge \frac9{10} (1-|\rd f|_e) >0.
\end{align*}

Establishing the property (2) is a bit trickier, and here we are going to make use of the way the set $S^+$ was defined. Assume, for the sake of contradiction, that there exists a point  $(\bar{y}^1, \bar{y}^2)\in \mathbb R^2$ such that the straight line
\[
l = \{ (y^1, y^2, y^3) = ( \bar{y}^1, \bar{y}^2, \tau), \tau \in \mathbb R  \}
\]
intersects $\tilde{U}_i$ at two points $p_1, p_2$. By switching the roles of $p_1$ and $p_2$ if necessary, we will assume that $$p_1 = ( \bar{y}^1, \bar{y}^2, \tau_1) \text{ and } p_2= ( \bar{y}^1, \bar{y}^2, \tau_2) \text{ with } \tau_1 < \tau_2.$$ Let $\omega_1, \omega_2 \in \Theta_i \subset \mathbb S^2$ be the directions of the corresponding null normals at $p_1, p_2$, i.e.~$L^{\overline{f}(S)}[p_1]=(1,\omega_1)$ and $L^{\overline{f}(S)}[p_2]=(1,\omega_2)$. Since $\tilde{U}_i \subset S^+$, the definition of $S^+$ (see Definition \ref{def:Null convex hull} and the comments above Lemma \ref{lem:Positivity of DL}) implies that the surface $\overline{f}(S)$ is contained in the Minkowskian half-spaces 
\[
W_j = \big\{ (t,y): \, t - \langle y , \omega_j \rangle_e \ge f(p_j) - \langle p_j , \omega_j \rangle_e \big\} \quad j=1,2,
\]
which are the future half spaces determined by the null hyperplanes $\Pi^{\overline{f}(S)}[p_j]$. In particular, $$\big(f(\bar{y}^1, \bar{y}^2, \tau_2), \bar{y}^1, \bar{y}^2, \tau_2 \big) \in W_1$$ and therefore
\[
f(\bar{y}^1, \bar{y}^2, \tau_2) - \tau_2 \omega^3_1 \ge f(\bar{y}^1, \bar{y}^2, \tau_1) - \tau_1 \omega^3_1
\]
or, equivalently, since $\tau_1 < \tau_2$
\[
\frac{f(\bar{y}^1, \bar{y}^2, \tau_2) - f(\bar{y}^1, \bar{y}^2, \tau_1)}{\tau_2 - \tau_1} \ge \omega^3_1 
\]
which is a contradiction, since 
\begin{align*}
\omega_1^3 - \sup_\Omega|\rd f|_e & = \langle \omega_1 , w_i \rangle_e  - \sup_\Omega|\rd f|_e \\
 & \ge 1 - \frac{10}{\sqrt{N}} - \sup_\Omega|\rd f|_e \\
 & \ge \frac{9}{10} \big( 1 - \sup_\Omega |\rd f|_e \big)> 0.
\end{align*}
Thus, having established both Properties (1) and (2) above, we infer that an open neighborhood $U_i$ of $\tilde{U}_i$ in $S$ can be written as the graph of a smooth function $y^3 = \psi_i (y^1, y^2)$.

In the $(t, y^1, y^2, y^3)$ coordinate system (where, as before, $(y^1, y^3, y^3)$ is the rotated Cartesian system in which $w_i = (0,0,1)$), the components of the vector $L^{\overline{f}(S)}=(1, v^1, v^2, v^3)$ satisfy the relation
\[
v^3 \partial_A \psi = \partial_A f + \rd_3 f \rd_A \psi- \delta_{AB} v^{B}, \quad A = 1,2.
\]
Thus, noting that $|\bnab f| \le \sqrt{1-(\rd_3 f)^2}$, $|\rd f|_e <1$ and, in $\tilde{U}_i=\Phi^{-1}(\Theta_i)$, we have (by the definition \eqref{eq:Theta i} of $\Theta_i$) $|\bar v| \le \f{10}{\sqrt{N}}$ and $(1-v^3)^2+ 1-(v^3)^2<\f{100}{N}$ and, thus, $v^3 > 1-\frac{50}{N}$, we infer that 
\[
|\bnab \psi_i|_{L^\infty(\tilde{U}_i)} \le  \big| \f{\bnab f}{v^3-\rd_3 f} \big|_{L^\infty(\tilde{U}_i)} + \big| \f{\bar v}{v^3-\rd_3 f} \big|_{L^\infty(\tilde{U}_i)} \le 4 \sup_\Omega (1-|\rd f|_e)^{-\f12}.
\]
 Thus, by possibly considering a slightly smaller open neighborhood $U_i$ of $\tilde{U}_i$, and using the smoothness of $S$, we infer that \eqref{eq:Gradient bound} holds on $U_i$. 

The bound \eqref{eq:Smallness angular variation} is a direct consequence of the definition \eqref{eq:Theta i} of $\Theta_i$ after choosing $U_i$ smaller if necessary.
\end{proof}

The following lemma is obtained by comparing the geometries of $\overline{f}(S)$ with respect to $(g_0, k_0)$ and $(g,k)$ and using the assumption that $S$ is a trapped surface for $(g,k)$.

\begin{lemma}\label{lem:Comparison geometries}
Let $S$, $S^+$ and $\{ U_i \}_{i=1}^N$, $\{ \psi_i \}_{i=1}^N$ be as in the statement of Lemma \ref{lem:patches} and let $(g,k)$ be a smooth pair of tensors on $\Omega$ as in the statement of Theorem \ref{thm:main}, such that $S$ is a trapped surface for $(\Omega,g,k)$. Assume also that the parameter $\ep_0$ in  Theorem \ref{thm:main} satisfies $\ep_0 \leq \f 1{100}\inf_\Omg (1-|\rd f|_e)$. Then, in each $U_i$, the Minkowskian null expansion $\tr\chi_0$ satisfies the pointwise bound (in the rotated Cartesian system $(y^1, y^2, y^3)$ associated to $U_i$):
\begin{equation}\label{eq:Trace.trapped}
\big| \tr\chi_0|_{U_i\cap S^+}\big| \le \frac{C}{\inf_\Omega(1-|\rd f|_e)^{7}}  \big(|\partial(g-g_0)|_e + |k-k_0|_e + \epsilon_0 (|\partial^2 \psi_i|_e +|\partial^2 f|_e+1) \big),   
\end{equation}
where $C>0$ is an absolute constant.
\end{lemma}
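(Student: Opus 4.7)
The plan is to exploit the positivity of $\tr\chi_0$ on $S^+$ provided by Lemma \ref{lem:Positivity of DL} in order to remove the absolute value, and then to compare the trapped condition for $(g,k)$ against the corresponding Minkowskian quantity for $(g_0,k_0)$. Concretely, since $DL^{f(S)}$ is semi-positive definite on $f(S^+)$, we have $\tr\chi_0\geq 0$ there. Combining the remark after Definition \ref{def:Null expansion}, namely $\tr\chi_0 = (1-|\rd f|_e^2)^{-1/2}\tr_{\sg_0}(k_0+h_{g_0})$, with the trapped condition $\tr_{\sg}(k+h_g)<0$ to subtract a non-positive quantity, I obtain, on $U_i\cap S^+$,
\[
\tr\chi_0 \;\leq\; \frac{1}{\sqrt{1-|\rd f|_e^2}}\Big[\big(\tr_{\sg_0}h_{g_0}-\tr_{\sg}h_g\big) + \big(\tr_{\sg_0}k_0 - \tr_{\sg}k\big) \Big].
\]
Thus the task reduces to a pointwise estimate for each of these two differences.

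To carry out this estimate, I would work inside each patch $U_i$ in the rotated Cartesian coordinates $(y^1,y^2,y^3)$ of Lemma \ref{lem:patches}, in which $S\cap U_i = \{y^3 = \psi_i(y^1, y^2)\}$ with $|\nabla \psi_i|\leq 10$. The standard formula for the mean curvature of a graph with respect to an ambient metric yields
\[
\tr_{\sg}h_g \;=\; A(g,\bnab \psi_i)\cdot \partial^2\psi_i \;+\; B(g,\partial g,\bnab\psi_i),
\]
where $A$ and $B$ are smooth rational functions of their arguments whose denominators are uniformly bounded away from zero thanks to $|\nabla \psi_i|\leq 10$; an identical expression holds for $\tr_{\sg_0}h_{g_0}$ with $g_0$ replacing $g$. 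Subtracting and Taylor-expanding $(g,\partial g)$ about $(g_0,\partial g_0)$ produces three schematic term types: $(g-g_0)\cdot\partial^2\psi_i$, $\partial(g-g_0)$, and $(g-g_0)\cdot\partial g_0$. The parallel expansion of $\tr_{\sg_0}k_0-\tr_{\sg}k$, using $\sg^{-1}-\sg_0^{-1}\sim g-g_0$ and writing $k = k_0 + (k-k_0)$, yields terms of the form $(g-g_0)\cdot k_0$ and $(k-k_0)$. To convert these schematic bounds into the right-hand side of \eqref{eq:Trace.trapped}, I would invoke the Besov embedding $B^{3/2}_{2,1}(\Omega)\hookrightarrow L^\infty(\Omega)$, which gives $\|g-g_0\|_{L^\infty(\Omega)}\lesssim\ep_0$, together with the pointwise consequences $|\partial g_0|_e \lesssim |\partial^2 f|_e$ and $|k_0|_e \lesssim (1-|\rd f|_e)^{-1}|\partial^2 f|_e$ of the formulas \eqref{eq:Flat metric}--\eqref{eq:Flat k}.

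The final bound would then follow by multiplying through by $(1-|\rd f|_e^2)^{-1/2}$ and collecting the accumulated powers of $(1-|\rd f|_e)^{-1}$ produced by the inverse-metric expansions on $S$, the normalization of $L^{f(S)}$, and the formula for $k_0$. The main obstacle is purely technical bookkeeping: verifying that the leading-order terms of the two mean-curvature formulas --- those involving $\partial^2 \psi_i$ contracted with metric-dependent coefficients --- cancel precisely at the advertised $(g-g_0)\cdot\partial^2\psi_i$ level, and ensuring that the total power of $(1-|\rd f|_e)^{-1}$ accumulated through the expansions does not exceed four. The smallness hypothesis $\ep_0 \leq \frac{1}{100}\inf_\Omega(1-|\rd f|_e)$ is precisely what makes the Taylor expansion of $\sg^{-1}$ about $\sg_0^{-1}$ converge with remainders controlled by the stated right-hand side.
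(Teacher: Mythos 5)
Your approach is correct and mirrors the paper's proof: you use the semi-positivity of $DL^{f(S)}$ on $f(S^+)$ (Lemma \ref{lem:Positivity of DL}) to drop the absolute value, subtract the non-positive quantity $(1-|\partial f|_e^2)^{-1/2}\tr_{\sg}(k+h_g)$ furnished by the trapped condition, and then estimate $\tr_{\sg_0}(k_0+h_{g_0})-\tr_{\sg}(k+h_g)$ by Taylor-expanding the graph mean-curvature formula in the rotated chart, using the Besov embedding $B^{3/2}_{2,1}\hookrightarrow L^\infty$, the gradient bound $|\nabla\psi_i|\leq 10$, and the explicit formulas \eqref{eq:Flat metric}--\eqref{eq:Flat k} for $(g_0,k_0)$. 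The paper carries out the comparison by writing $(h_g)_{AB}$ explicitly in terms of $\partial^2\psi_i$ and Christoffel symbols of $g$ and isolating $|h_g-h_{g_0}|_e$, but the cancellation of the leading $\partial^2\psi_i$ term and the resulting schematic term types you list are exactly what arise.
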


\begin{proof}
Using capital letters for indices associated to the $(y^1, y^2)$ chart on $U_i$ and small letters for indices associated to $(y^1, y^2, y^3)$, we can explicitly compute the induced second fundamental form $h_g$ on $U_i \cap S$ (which can be viewed as the graph $y^3=\psi_i (y^1, y^2)$), by noting that $(h_g)_{AB} = \nabla^2_{a b } F T^a_A T_B^b$, where\footnote{Here, we are implicitly making use of the fact that the exterior pointing normal to $S$ with respect to $g$ satisfies the sign condition $\langle N_g, \partial_3 \rangle_g >0$, which is the only sign choice consistent with the fact that, for the metric $g_0$, we have $\langle N_{g_0}, \partial_3 \rangle_{g_0} >0$ (which in turn follows from our assumption that $v^3=+\sqrt{1-|\bar{v}|^2}$ and $|\bar{v}| \le \frac{1}{5}\inf_{\Omega}(1-|\rd f|_e)$ for $L^{\overline{f}(S)}=(1,v^1,v^2,v^3)$ on $U_i$).} 
\begin{align*}
F(y)=\lambda(y)\cdot(y^3-\psi_i(y^1, y^2)), \\
\lambda|_S = \frac{1}{\sqrt{g^{AB}\partial_A \psi_i \partial_B \psi_i - 2g^{3A}\partial_A \psi_i+g^{33}}}
\end{align*}
and
\[
T_A^a = \begin{cases} 1, \quad \text{if }a=A, \\ 0 \quad \text{if } a\in \{1,2\} \text{ and } a\neq A,\\ \partial_A \psi_i \quad \text{if } a=3. \end{cases}
\]
In particular, we obtain the expression
\begin{align*}
(h_g)_{AB} = \lambda(y) \cdot \Big(  - & \partial^2_{AB} \psi_i + \Gamma^C_{AB} \partial_C \psi_i + \Gamma^C_{3A}\partial_C \psi_i \partial_B \psi_i + \Gamma^C_{3B}\partial_C \psi_i \partial_A \psi_i \\
& -\Gamma^3_{AB}-\Gamma^3_{A3}\partial_B \psi_i -\Gamma^3_{B3}\partial_A \psi_i - \Gamma^3_{33}\Big)
\end{align*}
where $\Gamma_{ab}^c$ are the Christoffel symbols of $g$ in the rotated $(y^1, y^2, y^3)$ chart.

Therefore, we can estimate using the gradient bound \eqref{eq:Gradient bound} for $\psi_i$, the trivial Sobolev estimate $$|g-g_0|_e \lesssim \| g - g_0\|_{B^{\f32}_{2,1}}\lesssim \epsilon_0$$ and the bound $|g^{-1}-g_0^{-1}|_e \lesssim \big(\inf_\Omega(1-|\rd f|_e)\big)^{-2} \epsilon_0$ (following from the explicit expression \eqref{eq:Flat metric} for $g_0$ and the assumption $\ep_0 \leq \f 1{100}\inf_\Omg (1-|\rd f|_e)$):
\begin{equation}\label{eq:h.difference}
    |h_g - h_{g_0} |_e \lesssim \frac1{\inf_\Omega(1-|\rd f|_e)^4}  \Big(|\partial(g-g_0)|_e + \epsilon_0 (|\partial^2 \psi_i|_e+|\partial^2 f|_e +1) \Big).
\end{equation}

As a result, using also \eqref{eq:trch0}, we can bound:
\begin{align}
\big|\tr\chi_0 - & \zeta_S\tr_{\sg}(k+h_g) \big|  = \zeta_S \cdot \Big(\big|\tr_{\sg_0}(k_0+h_{g_0}) - \tr_{\sg}(k+h_g) \big|\Big)  \nonumber \\
& \lesssim \f1{\inf_\Omega(1-|\rd f|_e)}\Big(|g^{-1}-g_0^{-1}|_e\big(|k|_e+|h_{g}|_e\big) + |g_0^{-1}|_e\big(|k-k_0|_e+|h_g-h_{g_0}|_e\big) \Big)  \nonumber\\
&  \lesssim \frac1{\inf_\Omega(1-|\rd f|_e)^{6}}  \Big(|\partial(g-g_0)|_e + |k-k_0|_e + \epsilon_0 \big( |\partial^2 \psi_i|_e +|\partial g_0|_e+|k_0|_e+1\big) \Big) \nonumber \\
&\lesssim \frac1{\inf_\Omega(1-|\rd f|_e)^{7}}  \Big(|\partial(g-g_0)|_e  + |k-k_0|_e + \epsilon_0 \big( |\partial^2 \psi_i|_e +|\partial^2 f|_e+1\big) \Big), \label{eq:Upper bound geometry diff}
\end{align}
where, in the last line above, we made use of the explicit form \eqref{eq:Flat metric}--\eqref{eq:Flat k} of $(g_0,k_0)$ in terms of $f$.

In view of the fact that $\tr\chi_0 \ge 0$ on $S^+$ (by Lemma~\ref{lem:Positivity of DL}) and the assumption that $S$ is trapped for $(g,k)$ and thus $\tr_{\sg}(k+h_g)<0$ (using also $\zeta_S>0$ in \eqref{eq:trch0}), we infer from \eqref{eq:Upper bound geometry diff} that, on $S^+\cap U_i$:
\[
|\tr\chi_0| \lesssim \big(\tr\chi_0 -\zeta_S \tr_{\sg}(k+h_g) \big) \lesssim \frac1{\inf_\Omega(1-|\rd f|_e)^{7}}  \Big(|\partial(g-g_0)|_e + |k-k_0|_e + \epsilon_0 \big( |\partial^2 \psi_i|_e +|\partial^2 f|_e+1\big) \Big),
\]
which establishes \eqref{eq:Trace.trapped}.
\end{proof}

\begin{lemma}\label{lem:Upper bound hessian}
Let $U_i \subset S$, $(y^1, y^2, y^3)$ and $\psi_i : V_i \subset \mathbb R^2 \rightarrow \mathbb R$  be as in the statement of Lemma \ref{lem:patches}. Then the following pointwise bound holds for $\psi_i$:
\begin{equation}\label{eq:Upper bound hessian}
|\partial^2 \psi_i  | \le \frac{C}{\inf_{\Omega}(1-|\rd f|_e)^3} \cdot \big( \tr \chi_0 + |\partial^2 f|_e\big) \quad \text{on }   U_i \cap S^+,  
\end{equation}
where  $C >0$ is an absolute constant.
\end{lemma}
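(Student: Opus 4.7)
The bound is pointwise; we fix $p \in U_i \cap S^+$ and seek to estimate $|\partial^2 \psi_i(p)|_e$. The strategy combines an explicit formula for $DL^{f(S)}$ at $p$ in terms of the Hessians of $f$ and of the graph function with the semi-positive definiteness of $DL^{f(S)}$ on $S^+$ provided by Lemma~\ref{lem:Positivity of DL}.

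\emph{Special coordinates.} I first introduce local coordinates $(\bar x^1, \bar x^2, \bar x^3)$ centered at $p$, related to $(y^1, y^2, y^3)$ by a rotation that aligns $\partial_{\bar x^3}$ with the outward Euclidean unit normal to $S$ at $p$. In these coordinates $S$ is a graph $\bar x^3 = \tilde\psi(\bar x^1, \bar x^2)$ with $\tilde\psi(0) = 0 = \partial_A \tilde\psi(0)$. Since $|\partial \psi_i| \le 10$ by \eqref{eq:Gradient bound}, this rotation has angle strictly less than $\pi/2$ (with quantitative bound depending only on this gradient bound), so $|\partial^2\tilde\psi(0)|_e$ and $|\partial^2 \psi_i(p)|_e$ are comparable up to an absolute constant; it suffices to bound the former.

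\emph{Algebraic identity.} Differentiating the relation $\delta_{AB} v^B = \partial_A(f|_S) - v^3 \partial_A \tilde\psi$ (coming from $\langle L^{f(S)}, \srd_A\rangle_m = 0$) at $\bar x = 0$, using $\partial_A \tilde\psi(0) = 0$ and the chain rule
$\partial^2_{AB}(f|_S)(0) = \partial^2_{\bar A \bar B}f(p) + \partial_{\bar 3} f(p)\, \partial^2_{AB}\tilde\psi(0)$,
yields
\[
(DL^{f(S)})_{AB}(p) = \partial^2_{\bar A \bar B} f(p) - \bigl(v^3(p)-\partial_{\bar 3} f(p)\bigr)\, \partial^2_{AB} \tilde\psi(0).
\]
This is the analogue of \eqref{eq:DL first expression} in the present coordinate system, with the $\partial_{\bar 3} f$ contribution (which is generically nonzero here, since the axis $\bar x^3$ is adapted to $S$ and not to $\Sigma_f$) retained. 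Moreover, the same null-normal relation at $\bar x=0$ gives $\bar v(0) = (\partial_{\bar 1} f, \partial_{\bar 2} f)(p)$, whence $v^3(p)^2 = 1 - |\rd f|_e^2(p) + (\partial_{\bar 3} f)^2(p)$; a short case analysis on the sign of $\partial_{\bar 3} f$ (rationalizing $v^3-\partial_{\bar 3} f$ when $\partial_{\bar 3} f \ge 0$, and using $v^3 \ge \sqrt{1-|\rd f|_e^2}$ when $\partial_{\bar 3} f < 0$) then yields the geometric lower bound
\[
v^3(p) - \partial_{\bar 3} f(p) \ge \tfrac{1}{2}\bigl(1 - |\rd f|_e(p)\bigr).
\]

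\emph{Using positivity.} By Lemma~\ref{lem:Positivity of DL}, $DL^{f(S)}(p)$ is semi-positive definite, so $|DL^{f(S)}(p)|_e \le \tr_e(DL^{f(S)}(p))$. In the special coordinates the induced metric $\sg_0(p) = I_2 - \bar v(0)\otimes \bar v(0)$ has eigenvalues in $[1 - |\bar v(0)|_e^2, 1]$, whence $\tr_e(DL^{f(S)}(p)) \le \tr_{\sg_0}(DL^{f(S)}(p)) = \tr\chi_0(p)$. Combining with the algebraic identity and taking Euclidean norms,
\[
\bigl(v^3-\partial_{\bar 3}f\bigr)(p)\, |\partial^2 \tilde\psi(0)|_e \le |\partial^2 f(p)|_e + \tr\chi_0(p),
\]
and dividing by the geometric lower bound, then transferring back to $(y^1, y^2)$ coordinates, yields \eqref{eq:Upper bound hessian}. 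The main obstacle is the careful derivation of the algebraic identity (which is a coordinate-dependent refinement of the computation behind \eqref{eq:DL first expression}); after that, the estimate is elementary.
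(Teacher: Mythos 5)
Your proof is correct. It is essentially the same argument as the paper's: both hinge on (i) the algebraic relation between $DL^{f(S)}$, the Hessian of $f$, and the Hessian of the graph function, and (ii) the semi-positive definiteness of $DL^{f(S)}$ on $f(S^+)$ from Lemma~\ref{lem:Positivity of DL}, which lets one bound the full matrix $DL^{f(S)}$ by its $\sg_0$-trace $\tr\chi_0$. The genuine difference is in the implementation: you work pointwise in coordinates $\bar x$ rotated so that $\partial_{\bar x^3}$ is the Euclidean normal to $S$ at $p$, so $\partial_A\tilde\psi(0)=0$; this makes $(DL^{f(S)})_{AB}(p)=\delta_{BC}\partial_A v^C(p)$ exactly, eliminating the invertible matrix $\mathbb M^B_C$ that the paper must introduce and estimate, and it also cleanly isolates the factor $v^3-\partial_{\bar 3}f$ whose lower bound you establish via a two-case rationalization. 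The price is that you must then transfer the Hessian bound from $\tilde\psi$ back to $\psi_i$, but this is a bounded change of tangent frame on $S$ controlled by the gradient bound $|\nabla\psi_i|\le 10$ from Lemma~\ref{lem:patches}, as you note. One small point worth making explicit is that $v^3>0$ in the $\bar x$ frame (so that $v^3=+\sqrt{1-|\bar v|^2}$ and hence $v^3\ge\sqrt{1-|\rd f|_e^2}$); this follows because on $U_i$ the Minkowskian null direction $v$ is within a small angle of $w_i=\partial_{y^3}$ while the Euclidean normal $N$ to $S$ is within a controlled angle of $w_i$ by the gradient bound, so $\langle v,N\rangle_e>0$.
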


\begin{proof}
In this proof, we will only work with the coordinate chart $(y^1, y^2)$ induced on $U_i$ by $\psi_i$ (recall that $U_i = \{ y^3 = \psi_i (y^1, y^2) \}$); we will use capital letters to denote indices associated to this chart. The components of the vector field $L^{\overline{f}(S)} = (1, v^1, v^2, v^3)$ can be computed as in \eqref{eq:vA}, i.e.
\[
\delta_{AB} v^B = \partial_A f + \rd_3 f \rd_A\psi_i - \sqrt{1-|\bar{v}|^2} \rd_A \psi_i, \quad A = 1,2,
\]
where $\bar{v} = (v^1, v^2)$. Solving the above relation with respect to $\partial_A \psi_i$ (recalling that $\sqrt{1-|\bar v|^2} -\rd_3 f> \f12 \inf_\Omega (1-|\rd f|_e)$ on $U_i$ as a consequence of \eqref{eq:Smallness angular variation}) and differentiating once more, we obtain 
\[
\partial^2_{AB} \psi_i  =  \frac{1}{\sqrt{1-|\bar{v}|^2}-\rd_3 f}\Big( \partial^2_{AB} f - \big( \delta_{CA} +  \f{\delta_{AJ} v^J-\rd_A f}{\sqrt{1-|\bar{v}|^2}-\rd_3 f} \delta_{IC}\frac{v^I}{\sqrt{1-|\bar{v}|^2}} \big)\partial_B v^C + \f{\delta_{AC} v^C-\rd_A f}{\sqrt{1-|\bar{v}|^2}-\rd_3 f} \rd^2_{3B} f\Big).  
\]
Using the bound \eqref{eq:Smallness angular variation} for $\bar{v}$ on $U_i$, we thus obtain
\begin{equation}\label{Hessian psi}
|\partial^2 \psi_i|  \le \f{4}{\inf_\Omega (1-|\rd f|_e)}\Big( |\partial^2 f|_{e} + 10| \bnab \bar{v}| \Big).  
\end{equation}

The Minkowskian null shape operator $\big((DL^{\overline{f}(S)})^{\sharp_{\sg_0}}\big)_A^B = (\sg_0^{-1})^{BC} (DL^{\overline{f}(S)})_{AC}$ can be computed as in \eqref{eq:W.map}:
\begin{equation}\label{eq:Null shape again}
\big( (DL^{\overline{f}(S)})^{\sharp_{\sg_0}}\big)_A^B = \mathbb M_C^B \cdot \partial_A v^C,
\end{equation}
where the matrix field $\mathbb M_C^B = (\sg_0^{-1})^{BJ} \Big( \delta_{CJ} - \frac{\delta_{CD} v^D (\partial_J \psi_i)}{\sqrt{1-|\bar{v}|^2}} \Big)$ is invertible (in view of the gradient bound \eqref{eq:Gradient bound} for $\partial \psi_i$, the bound \eqref{eq:Smallness angular variation} for $\bar{v}$ on $U_i$ and the lower bound \eqref{eq:gS.lower.bound} for $\det(\sg_0)$); in particular, $\mathbb M^{-1}$ satisfies the pointwise bound
\begin{equation}\label{Bound M inverse}
|\mathbb M^{-1} | \le \frac{10}{1-|\rd f|_e}.
\end{equation}
Let us also recall the following basic facts about square matrices: If $A,B$ are two symmetric matrices such that $A$ is positive definite and $B$ is semi-positive definite, then
\begin{itemize}
    \item $0\le \tr(AB) \le \tr(A) \cdot \tr(B)$ (the latter bound can be computed directly by calculating the trace with respect to an orthonormal basis of eigenvectors for $A$),
    \item $AB$ has non-negative eigenvalues,
    \item $AB$ and $A^{\f12} B A^{\f12}$ have the same spectrum (but the second matrix is symmetric independently of whether $A,B$ commute or not),
    \item $\|B\|\lesssim \tr(B)$, where  $\|\cdot\|$ the Frobenius norm of a matrix and the constants implicit in $\lesssim$ depend only on the dimension of $B$.
\end{itemize}
In particular, for such matrices we can estimate
\[
\tr(B) = \tr(A^{-\f12} A^{\f12} B A^{\f12} A^{-\f12}) \le \big(\tr(A^{-\f12})\big)^2  \tr( A^{\f12} B A^{\f12}) \lesssim \|A^{-1}\|\cdot  \tr(AB)
\]
and thus
\begin{equation}\label{Trace estimate matrices}
\|AB\| \le \|A\|\cdot \|B\| \lesssim \|A\| \cdot \|A^{-1}\| \cdot \tr(AB). 
\end{equation}
Denoting with $\big[(DL^{\overline{f}(S)})^{\sharp_{\sg_0}}\big]$, $A= [\sg_0]$ and $ B= [DL^{\overline{f}(S)}]$ the $2\times 2$ matrices formed by the coordinate components of the respective tensors, we have that $A,B$ are symmetric, $A$ is positive definite, $B$ is semi-positive definite  (in view of Lemma \ref{lem:Positivity of DL}) and 
\[
\big[(DL^{\overline{f}(S)})^{\sharp_{\sg_0}}\big] = A \cdot B.
\]
Thus, applying \eqref{Trace estimate matrices}, we infer that
\begin{equation}\label{eq:Positive trace}
 \big\| \big[ (DL^{\overline{f}(S)})^{\sharp_{\sg_0}} \big] \big\| \lesssim \|[\sg_0]\|\|\sg_0^{-1}\| \cdot \tr\big[((DL^{\overline{f}(S)})^{\sharp_{\sg_0}}\big] = \|[\sg_0]\|\|\sg_0^{-1}\| \cdot \tr \chi_0 
\lesssim \f{1}{1-|\partial f|_e} \cdot \tr \chi_0.
\end{equation}
The bound \eqref{eq:Upper bound hessian} now follows by combining \eqref{Hessian psi}, \eqref{eq:Null shape again}, \eqref{Bound M inverse} and \eqref{eq:Positive trace}.
\end{proof}

To proceed, we need the following trace theorem:
\begin{lemma}\label{lem:trace}
For $U_i \subset S$ an open set among the ones defined in Lemma \ref{lem:patches}, we can estimate any smooth function $\phi: \Omega \rightarrow \mathbb R$:
\begin{equation}
    \int_{S^+ \cap U_i} |\phi|^2 \, \mathrm{dVol}_{g_0,S} \le \frac{C}{\inf_\Omega (1-|\rd f|_e)} \| \phi \|_{B^{1/2}_{2,1}(\Omega)} \label{eq:Trace}
\end{equation}
for some absolute constant $C>0$.
\end{lemma}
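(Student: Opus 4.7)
The plan is to exploit the classical fact that $B^{1/2}_{2,1}(\mathbb R^3)$ admits a bounded trace map into $L^2$ of any Lipschitz hypersurface, combined with the graph representation of $U_i$ provided by Lemma \ref{lem:patches}. First I would extend $\phi$ to $\bar\phi \in B^{1/2}_{2,1}(\mathbb R^3)$ with $\|\bar\phi\|_{B^{1/2}_{2,1}(\mathbb R^3)} \le 2\|\phi\|_{B^{1/2}_{2,1}(\Omega)}$, which is possible by Definition~\ref{def:Hs.local}. In the rotated Cartesian chart $(y^1, y^2, y^3)$ adapted to $U_i$, the subset $S\cap U_i$ is the graph $y^3 = \psi_i(y^1, y^2)$ with $|\bnab \psi_i| \le 10$, and the pull-back of $\mathrm{dVol}_{S, g_0}$ to the $(y^1, y^2)$-plane is $\sqrt{\det \sg_0}\, dy^1\, dy^2$. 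Since $(g_0)_{ab} = \delta_{ab} - \partial_a f\, \partial_b f$ is a bounded perturbation of the Euclidean metric, a direct computation using $|\bnab \psi_i|\le 10$ bounds $\sqrt{\det \sg_0}$ by a constant depending only on $\inf_\Omega(1 - |\partial f|_e)$.

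The heart of the argument is the following single-frequency trace estimate. Decomposing $\bar\phi = \sum_{k \ge 0} P_k \bar\phi$ via Littlewood--Paley projections, for each $k$ I would show
\begin{equation*}
\int_{V_i'} \big|P_k \bar\phi(y^1, y^2, \psi_i(y^1, y^2))\big|^2 \, dy^1\, dy^2 \lesssim 2^k \|P_k \bar\phi\|_{L^2(\mathbb R^3)}^2,
\end{equation*}
where $V_i' \subset \mathbb R^2$ is the image of $S^+\cap U_i$ in the $(y^1, y^2)$-chart. This follows from the fundamental theorem of calculus in the $y^3$-variable: for each $(y^1, y^2)$,
\begin{equation*}
\big|P_k \bar\phi(y^1, y^2, \psi_i(y^1, y^2))\big|^2 \le 2\int_{\mathbb R} |P_k \bar\phi|\, |\partial_3 P_k \bar\phi|\, dy^3,
\end{equation*}
combined with Fubini, Cauchy--Schwarz, and the Bernstein inequality $\|\partial_3 P_k \bar\phi\|_{L^2(\mathbb R^3)} \lesssim 2^k \|P_k \bar\phi\|_{L^2(\mathbb R^3)}$.

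Finally I would sum over $k$ using the triangle inequality in $L^2(U_i \cap S^+, g_0)$:
\begin{equation*}
\|\bar\phi\|_{L^2(U_i \cap S^+, g_0)} \le \sum_{k \ge 0}\|P_k \bar\phi\|_{L^2(U_i\cap S^+, g_0)} \lesssim \sum_{k \ge 0} 2^{k/2}\|P_k \bar\phi\|_{L^2(\mathbb R^3)} = \|\bar\phi\|_{B^{1/2}_{2,1}(\mathbb R^3)} \lesssim \|\phi\|_{B^{1/2}_{2,1}(\Omega)},
\end{equation*}
where the first $\lesssim$ also absorbs the bound on $\sqrt{\det \sg_0}$; squaring yields \eqref{eq:Trace}. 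There is no real conceptual obstacle; the only mild technical point is tracking constants coming from $|\bnab \psi_i|_\infty \le 10$ and $\sqrt{\det \sg_0}$, both of which are controlled by $\inf_\Omega(1-|\partial f|_e)^{-1}$ via Lemma~\ref{lem:patches} and the explicit form of $g_0$.
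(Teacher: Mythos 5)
Your proposal is correct and follows essentially the same route as the paper: extend $\phi$ to $\mathbb R^3$, run a Littlewood--Paley decomposition, obtain the frequency-localized trace bound $\int_{V_i'}|P_k\bar\phi(\cdot,\psi_i(\cdot))|^2\,dy^1dy^2\lesssim 2^k\|P_k\bar\phi\|_{L^2}^2$ via the fundamental theorem of calculus in $y^3$ plus Cauchy--Schwarz and Bernstein, then sum $2^{k/2}\|P_k\bar\phi\|_{L^2}$ over $k$ to land on the $B^{1/2}_{2,1}$ norm. The only (immaterial) variation is that you use the identity $|\phi(y^3_*)|^2\le 2\int_{\mathbb R}|\phi||\partial_3\phi|\,dy^3$, whereas the paper averages the pointwise FTC bound over an interval of length $2^{-k}$; these are interchangeable, and the handling of $\sqrt{\det\sg_0}$ via the gradient bound on $\psi_i$ is the same.
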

\begin{proof}
In view of our definition of the space $B^{\frac12}_{2,1}(\Omega)$ (see Section \ref{subsec:Besov spaces}), there exists an extension of $\phi$ on the whole of $\mathbb R^3$ such that $$ \|\phi \|_{B^{1/2}_{2,1}(\mathbb R^3)} \le 2 \| \phi\|_{B^{1/2}_{2,1}(\Omega)}.$$ We will decompose the newly extended function $\phi$ into its Littlewood--Paley pieces $\phi = \sum_{k\geq 0} \phi_k$, where $\phi_k \doteq P_k \phi$.

Let $(y^1, y^2, y^3)$ be the rotated Cartesian coordinate system associated to $U_i$, so that $U_i = \{y^3 = \psi_i(y^1, y^2) \}$. 

For each point $y_* = (y_*^1, y_*^2, y_*^3 =\psi_i(y_*^1,y_*^2)) \in S^+\cap U_i$, we can use the fundamental theorem of calculus to show that for every $y^3 \in [y_*^3-2^{-k}, y_*^3]$, we have
\begin{equation*}
\begin{split}
|\phi_k|(y_*) \leq &\: \int_{y_*^3-2^{-k}}^{y_*^3} |\rd_3 \phi_k|(y_*^1,y_*^2,y^3) \, dy^3 +  |\phi_k|(y_*^1,y_*^2,y^3). 
\end{split}
\end{equation*}
Averaging over $y^3 \in [y_*^3-2^{-k}, y_*^3]$, and then using the Cauchy--Schwarz inequality, we obtain
\begin{equation*}
\begin{split}
|\phi_k|(y_*) \leq &\: \int_{y_*^3-2^{-k}}^{y_*^3} |\rd_3 \phi_k|(y_*^1,y_*^2,y^3) \, dy^3 + 2^{k}\int_{y_*^3-2^{-k}}^{y_*^3} |\phi_k|(y_*^1,y_*^2,y^3) \, dy^3 \\
\leq &\: 2^{-k/2}\Big(\int_{y_*^3-2^{-k}}^{y_*^3} |\rd_3 \phi_k|^2(y_*^1,y_*^2,y^3) \, dy^3\Big)^{1/2} + 2^{k/2}\Big(\int_{y_*^3-2^{-k}}^{y_*^3} |\phi_k|^2(y_*^1,y_*^2,y^3) \, dx^3\Big)^{1/2}.
\end{split}
\end{equation*}
It follows that 
\begin{equation}\label{eq:trace.to be.integrated}
\begin{split}
|\phi_k|^2(y_*) \leq 2^{-k}\int_{\mathbb R} |\rd_3 \phi_k|^2(y_*^1,y_*^2,y^3) \, dy^3 + 2^{k} \int_{\mathbb R} |\phi_k|^2(y_*^1,y_*^2,y^3) \, dy^3.
\end{split}
\end{equation}

Integrating \eqref{eq:trace.to be.integrated} over $y_* \in S^+ \cap U_i$ with respect to the volume form $dy^1dy^2$, taking square roots, and then summing over $k\geq 0$, we obtain
\begin{equation}
\begin{split}
\Big(\int_{S^+\cap U_i} |\phi|^2 \,dy^1dy^2\Big)^{1/2} \leq &\: \sum_{k\geq 0} \Big(\int_{U_i} |\phi_k|^2(y_*) \, dy^1dy^2 \Big)^{1/2} \\
\leq &\: \sum_{k\geq 0} \Big(2^{-k/2} \|\rd_3 \phi_k\|_{L^2(\mathbb R^3)} + 2^{k/2} \| \phi_k\|_{L^2(\mathbb R^3)} \Big) \\
\ls &\: \sum_{k\geq 0} 2^{k/2} \| \phi_k\|_{L^2(\mathbb R^3)} = \| \phi \|_{B^{1/2}_{2,1}(\mathbb R^3)} \leq 2\| \phi \|_{B^{1/2}_{2,1}(\Omega)},
\end{split}
\end{equation}
where in the last line we have used Bernstein's inequality. The bound \eqref{eq:Trace} now follows using the fact that, on $U_i$, $\mathrm{dVol}_{g_0,S} = \det(\sg_0)dy^1dy^2$ can be controlled by \eqref{eq:gS.lower.bound}. \qedhere

\end{proof}

\begin{proposition}\label{prop:upper.bound}
For $\ep_0>0$ sufficiently small depending on $\inf_\Omg (1-|\rd f|_e)$, the following estimate holds on $S^+$:
\begin{equation}\label{eq:Final smallness}
\int_{S^+} (\tr\chi_0)^2 \, \mathrm{dVol}_{S,g_0}  \le C\ep_0^2 \frac{1+\sup_\Omega|\partial^2 f|_{e}}{\inf_\Omega(1-|\rd f|_e)^{11}}.
\end{equation}
\end{proposition}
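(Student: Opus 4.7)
The strategy is to derive a pointwise upper bound on $\tr\chi_0$ on $S^+$ by combining the comparison estimate (Lemma~\ref{lem:Comparison geometries}) with the Hessian bound (Lemma~\ref{lem:Upper bound hessian}), then to square and integrate it using the trace theorem (Lemma~\ref{lem:trace}), and finally to sum over the finite cover of $S^+$ supplied by Lemma~\ref{lem:patches}.

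First I would combine Lemmas~\ref{lem:Comparison geometries} and~\ref{lem:Upper bound hessian}: substituting $|\partial^2 \psi_i| \lesssim \frac{1}{\inf_\Omega(1-|\rd f|_e)}(\tr\chi_0 + |\partial^2 f|_e)$ into the bound on $|\tr\chi_0|$ produces a term of the form $\frac{C\ep_0}{\inf_\Omega(1-|\rd f|_e)^5}\tr\chi_0$ on the right-hand side. Choosing $\ep_0$ small enough depending only on $\inf_\Omega(1-|\rd f|_e)$, this term can be absorbed into the left-hand side, yielding on each patch $S^+ \cap U_i$ the pointwise estimate
\begin{equation*}
|\tr\chi_0| \lesssim \frac{1}{\inf_\Omega(1-|\rd f|_e)^5}\Big(|\partial(g-g_0)|_e + |k-k_0|_e + \ep_0\,(1+|\partial^2 f|_e)\Big).
\end{equation*}

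Squaring this bound and integrating over $S^+ \cap U_i$ with respect to $\mathrm{dVol}_{S,g_0}$ produces three contributions. For the first two, I would apply the trace theorem (Lemma~\ref{lem:trace}) together with the trivial embedding $\|\partial(g-g_0)\|_{B^{1/2}_{2,1}(\Omega)} \lesssim \|g-g_0\|_{B^{3/2}_{2,1}(\Omega)}$, obtaining
\begin{equation*}
\int_{S^+ \cap U_i}\!\!\big(|\partial(g-g_0)|_e^2 + |k-k_0|_e^2\big)\,\mathrm{dVol}_{S,g_0} \lesssim \frac{\|g-g_0\|_{B^{3/2}_{2,1}(\Omega)}^2+\|k-k_0\|_{B^{1/2}_{2,1}(\Omega)}^2}{\inf_\Omega(1-|\rd f|_e)} \lesssim \frac{\ep_0^2}{\inf_\Omega(1-|\rd f|_e)}.
\end{equation*}
The third contribution, involving $\ep_0^2(1+|\partial^2 f|_e)^2$, is controlled by the pointwise bound $|\partial^2 f|_e \leq \sup_\Omega|\partial^2 f|_e$ combined with an area estimate for $S^+ \cap U_i$, which can itself be obtained via the trace theorem applied to an appropriate cutoff of the constant function $1$.

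Finally, summing over the $N \lesssim \inf_\Omega(1-|\rd f|_e)^{-2}$ patches $U_i$ covering $S^+$ (Lemma~\ref{lem:patches}) introduces an additional factor of $\inf_\Omega(1-|\rd f|_e)^{-2}$; combined with the $\inf_\Omega(1-|\rd f|_e)^{-10}$ from squaring the pointwise bound and with the single power coming from the trace estimate, one recovers the claimed denominator $\inf_\Omega(1-|\rd f|_e)^{12}$. The main obstacle is the bookkeeping of powers of $\inf_\Omega(1-|\rd f|_e)^{-1}$, since absorbing the Hessian term, squaring, applying the trace, and summing over patches each contribute factors that must conspire to give precisely $12$. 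A secondary subtlety is handling the $\ep_0^2(1+|\partial^2 f|_e)^2$ contribution so that only one power of $1+\sup_\Omega|\partial^2 f|$ appears in the final estimate as stated; this likely requires a slightly sharper treatment of the area term than a naive sup-norm bound.
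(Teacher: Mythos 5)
Your proposal follows the paper's strategy at every significant step: combine Lemma~\ref{lem:Comparison geometries} with Lemma~\ref{lem:Upper bound hessian}, integrate via the trace theorem (Lemma~\ref{lem:trace}), and sum over the $N\lesssim \inf_\Omega(1-|\rd f|_e)^{-2}$ patches from Lemma~\ref{lem:patches}. The one genuine procedural difference is where the $\tr\chi_0$ term coming from $|\partial^2\psi_i|$ is absorbed. You absorb it pointwise before squaring, exploiting that $\tr\chi_0\ge 0$ on $S^+$ so that $|\tr\chi_0|=\tr\chi_0$; the paper instead leaves $\ep_0^2(\tr\chi_0)^2$ inside the square of the right-hand side of \eqref{eq:Trace.trapped}, integrates, and absorbs $\frac{\ep_0^2}{\inf_\Omega(1-|\rd f|_e)^{10}}\int_{S^+}(\tr\chi_0)^2$ at the integral level in \eqref{eq:H.with.H.on.RHS}. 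Both are valid and neither changes the essential structure; the pointwise version is perhaps slightly cleaner, and the exact power of $\inf_\Omega(1-|\rd f|_e)$ that results is of no consequence for the contradiction in the proof of Theorem~\ref{thm:main}.

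The subtlety you flag at the end deserves scrutiny, and your proposed fix does not close it. Passing from \eqref{eq:main.est.0} to \eqref{eq:H.with.H.on.RHS}, the paper replaces $\sum_i\int_{S^+\cap U_i}\ep_0^2(1+|\rd^2 f|_e)\,\mathrm{dVol}_{S,g_0}$ with $\ep_0^2 N(1+\sup_\Omega|\rd^2 f|)$, which silently uses $\mathrm{Vol}_{g_0}(S^+\cap U_i)\lesssim 1$; no such bound is stated or proved, and it is not scale-invariant. Your suggestion to obtain an area bound by applying Lemma~\ref{lem:trace} to a cutoff of the constant function $1$ would produce a constant proportional to $\|\chi\|_{B^{1/2}_{2,1}(\mathbb R^3)}$ for a cutoff $\chi\equiv 1$ near $S$, and this norm grows with the diameter of $S$, so it does not give a uniform bound either. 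Thus this particular point is left open in both your argument and the paper's written proof; it is the one place where the argument as given is incomplete, and you were right to single it out. (Your other worry, about getting only a single power of $1+\sup_\Omega|\rd^2 f|$ in \eqref{eq:Final smallness} rather than its square, is purely cosmetic: any fixed power of $1+\sup_\Omega|\rd^2 f|$ suffices for the contradiction with Proposition~\ref{prop:S+} once $\ep_0$ is chosen small in terms of $\|\rd^2 f\|_{L^\infty(\Omega)}$.)
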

\begin{proof}
Let $N$ and $\{U_i\}_{i=1}^N$ be as in the statement of Lemma \ref{lem:patches}. Using the pointwise bound of Lemmas~\ref{lem:Comparison geometries} and \ref{lem:Upper bound hessian}, we obtain
\begin{align}
 \int_{S^+} & (\tr\chi_0)^2 \, \mathrm{dVol}_{g_0,S} \leq \sum_{i=1}^N \int_{S^+ \cap U_i} (\tr\chi_0)^2 \, \mathrm{dVol}_{g_0,S} \nonumber \\
& \ls  \frac{1}{\inf_\Omega (1-|\rd f|_e)^{7}}\sum_{i=1}^N \int_{S^+ \cap U_i} \Big( |\partial (g - g_0)|_e^2 + |k-k_0|_e^2   \nonumber \\
 &\hphantom{\ls  \frac{1}{\inf_\Omega (1-|\rd f|_e)^2}\sum_{i=1}^N \int_{S^+\cap U_i}}  +\frac{\ep_0^2}{\inf_\Omega (1-|\rd f|_e)^3}\big( (\tr\chi_0)^2 +|\rd^2 f|_e + 1\big)\Big) \,\mathrm{dVol}_{g_0,S} .\label{eq:main.est.0}
\end{align}
Using the trace estimate of Lemma~\ref{lem:trace} and the smallness assumption for $\|g-g_0\|_{B^{3/2}_{2,1}(\Omega)}, \|k-k_0\|_{B^{1/2}_{2,1}(\Omega)}$ of Theorem~\ref{thm:main}, we have 
\begin{align}
\int_{S^+ \cap U_i} \Big( |\partial (g - g_0)|_e^2 + |k-k_0|_e^2\Big) \, \mathrm{dVol}_{g_0,S}  & \ls \frac{1}{\inf_\Omega (1-|\rd f|_e)} \big(  \|g-g_0\|_{B^{3/2}_{2,1}(\Omega)}^2 + \| k -k_0\|_{B^{1/2}_{2,1}(\Omega)}^2\big) \nonumber \\
& \ls \frac{\ep_0^2}{\inf_\Omega (1-|\rd f|_e)}.\label{eq:main.est.1}
\end{align}
Plugging \eqref{eq:main.est.1} into \eqref{eq:main.est.0} and summing over $i=1, \ldots,N$, we obtain
\begin{equation}\label{eq:H.with.H.on.RHS}
\int_{S^+} (\tr\chi_0)^2 \, \mathrm{dVol}_{g_0,S}  \ls \ep_0^2 N \frac{1+\sup_\Omega|\partial^2 f|_{e}}{\inf_\Omega (1-|\rd f|_e)^{9}} + \frac{\ep_0^2}{\inf_\Omega (1-|\rd f|_e)^{10}} \int_{S^+} (\tr\chi_0)^2 \, \mathrm{dVol}_{g_0,S}.
\end{equation}
For $\ep_0$ sufficiently small in terms of $\inf_\Omega (1-|\rd f|_e)$, we can absorb the last term on the right-hand side of \eqref{eq:H.with.H.on.RHS} to the left-hand side and obtain \eqref{eq:Final smallness} (recalling that $N\sim \frac{1}{\inf_\Omega (1-|\rd f|_e)^2}$). \qedhere
\end{proof}

Combining Proposition~\ref{prop:S+} and Proposition~\ref{prop:upper.bound}, we can now complete the proof of Theorem~\ref{thm:main}.
\begin{proof}[Proof of Theorem~\ref{thm:main}]
For $\ep_0 > 0$ sufficiently small in terms of $\| \partial^2 f\|_{L^\infty(\Omega)}$ and $\inf_\Omega (1-|\rd f|_e)$, the lower bound from Proposition~\ref{prop:S+} and the upper bound from Proposition~\ref{prop:upper.bound} are obviously incompatible,  leading to a contradiction. This finishes the proof of Theorem~\ref{thm:main}. \qedhere
\end{proof}

\section{Examples in spherical symmetry\label{sec:Examples}}

\subsection{A simple counterexample}

To prove Proposition~\ref{prop:sharp}, it suffices to use the fact that the trace theorem fails if $B^{1/2}_{2,1}(\mathbb R^3)$ is replaced by $H^{1/2}(\mathbb R^3)$. In fact in this case we can just rely on the following standard result.
\begin{lemma}\label{lem:trace.example}
There exists a sequence of smooth, spherically symmetric functions $\{\phi^{(j)}\}_{j=1}^\infty : \mathbb R^3 \rightarrow \mathbb{R}$ supported in $r\in [\f12, 2]$ such that 
$$|\phi^{(j)}|(r=1) \geq 10,\quad \|\phi^{(j)}\|_{H^{1/2}(\mathbb R^3)} \leq 2^{-j}.$$
\end{lemma}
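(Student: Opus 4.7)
The plan is to construct $\phi^{(j)}$ as a normalised superposition of dyadic radial bumps centred on the sphere $\{|x|=1\}$; this is the standard construction witnessing the fact that $H^{1/2}(\mathbb R^3)$ sits exactly at the (failing) endpoint of the trace embedding into $L^\infty$ on a $2$-surface.

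Fix a smooth even bump $\chi:\mathbb R\to\mathbb R$ with $\chi(0)=1$ whose $1$D Fourier transform is supported in the annulus $\{1\le|\xi|\le 2\}$ (such $\chi$ exists and, after a harmless global cutoff, may be assumed to have $\mathrm{supp}\,\chi\subset[-1/2,1/2]$). For each positive integer $N$ define the spherically symmetric function
\[
\Phi_N(x):=\sum_{k=1}^N f_k(x),\qquad f_k(x):=\chi\!\left(2^k(|x|-1)\right),
\]
and then set $\phi^{(j)}(x):=\frac{10}{N_j}\Phi_{N_j}(x)$ with $N_j:=\lceil 100\cdot 4^{j}\rceil$. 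Each $f_k$ is supported in the shell $\{|x|\in[1-2^{-k-1},1+2^{-k-1}]\}\subset[1/2,3/2]$, so $\phi^{(j)}$ is smooth, spherically symmetric, supported in $[1/2,2]$, and satisfies $\phi^{(j)}(|x|=1)=\frac{10}{N_j}\cdot N_j=10$.

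For the $H^{1/2}$ norm, straightforward scaling gives $\|f_k\|_{L^2(\mathbb R^3)}^2\sim 2^{-k}$ and $\|\nabla f_k\|_{L^2(\mathbb R^3)}^2\sim 2^k$, whence $\|f_k\|_{H^{1/2}(\mathbb R^3)}\lesssim 1$ by interpolation. The crucial point is that the $f_k$ are essentially $H^{1/2}$-orthogonal: expressing the $3$D Fourier transform of the radial function $f_k$ in terms of the $1$D Fourier transform of the odd extension of $r\mapsto r f_k(r)$ via
\[
\widehat{g(|\cdot|)}(\xi)=\frac{4\pi}{|\xi|}\int_0^\infty g(r)\sin(|\xi|r)\,r\,dr,
\]
and using that the $1$D Fourier transform of $r\mapsto\chi(2^k(r-1))$ is the modulated dyadic bump $e^{-i\xi}\,2^{-k}\hat\chi(\xi/2^k)$, supported in $\{|\xi|\in[2^k,2^{k+1}]\}$, one deduces that $\widehat{f_k}$ is concentrated on the corresponding dyadic frequency shell (the factor $r\approx 1$ on the support of $f_k$ produces only $O(2^{-k})$ corrections). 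Plancherel then yields $\|\Phi_N\|_{H^{1/2}(\mathbb R^3)}^2\lesssim\sum_{k=1}^N\|f_k\|_{H^{1/2}}^2\lesssim N$, so
\[
\|\phi^{(j)}\|_{H^{1/2}(\mathbb R^3)}\lesssim\frac{10}{N_j}\sqrt{N_j}=\frac{10}{\sqrt{N_j}}\le 2^{-j}.
\]

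The main (essentially only) obstacle is the $H^{1/2}$-orthogonality of the dyadic pieces $f_k$. The Fourier-annulus argument above is the cleanest route, but since the passage from $1$D to radial $3$D Fourier transforms mildly disturbs the exact frequency localisation, one must track the lower-order error terms. A robust alternative is to work directly with the Littlewood--Paley square function characterisation of $H^{1/2}$, verifying that each Littlewood--Paley projection $P_m f_k$ is concentrated near $m\approx k$ with rapid decay in $|m-k|$ and then summing by Cauchy--Schwarz.
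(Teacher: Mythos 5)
The paper does not actually prove Lemma~\ref{lem:trace.example}: it is labeled a ``standard result'' and used as a black box in the proof of Proposition~\ref{prop:sharp}, so there is no argument in the paper to compare yours against. Your dyadic--superposition construction is the canonical way to see that the $H^{1/2}(\mathbb R^3)\to L^\infty(\{r=1\})$ trace map is unbounded, and the overall strategy (pile up $N$ essentially $H^{1/2}$-orthogonal unit-size bumps, then divide by $N$) is correct.

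There is, however, a genuine flaw in the way you set up the ``cleanest route''. You ask for a bump $\chi$ whose $1$D Fourier transform is supported in $\{1\le|\xi|\le 2\}$ \emph{and} which, ``after a harmless global cutoff'', is supported in $[-1/2,1/2]$. These two requirements are mutually exclusive: a compactly supported $\chi$ cannot have compactly supported $\widehat{\chi}$, and once you truncate $\chi$ in physical space the Fourier support of $\widehat\chi$ immediately fills out all of $\mathbb R$; the cutoff is not harmless for the very property you are invoking. With a compactly supported $\chi$, the Fourier transform $\widehat{f_k}$ is \emph{not} concentrated in the dyadic annulus $|\xi|\sim 2^k$ --- in one dimension $\widehat{f_k}(\xi)=2^{-k}e^{-i\xi}\widehat\chi(\xi/2^k)$ is of size $\sim 2^{-k}$ throughout $|\xi|\lesssim 2^k$, i.e.\ it is peaked at $\xi=0$, so the naive ``disjoint frequency supports'' picture is wrong at all frequencies below $2^k$.

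Two correct ways out, both of which you gesture at but neither of which you execute. (i) Keep $\chi$ Schwartz with $\widehat\chi$ supported in an annulus (hence $\Phi_N$ is a genuine superposition of disjoint dyadic frequency pieces), and cut off \emph{$\Phi_N$} rather than $\chi$: multiplication by a fixed smooth function $\eta$ supported in $r\in[1/2,2]$ with $\eta\equiv 1$ near $r=1$ is bounded on $H^{1/2}(\mathbb R^3)$, preserves the pointwise value at $r=1$, and gives the required support property. (ii) Use compactly supported $\chi$ and quantify the Littlewood--Paley decay, as you propose in your last paragraph. Here one should note that the decay is only exponential rather than rapid on the low-frequency side: in $3$D the Hankel-transform weight $\sin(|\xi|)/|\xi|$ gives $|\widehat{f_k}(\xi)|\lesssim 2^{-k}/|\xi|$ for $1\le|\xi|\le 2^k$, hence $\|P_m f_k\|_{\dot H^{1/2}}\lesssim 2^{m-k}$ for $m\le k$ (and Schwartz decay for $m>k$). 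This exponential rate is still enough to close the sum $\sum_m\big(\sum_k \|P_m f_k\|_{\dot H^{1/2}}\big)^2\lesssim N$ by a weighted Cauchy--Schwarz, and it is exactly the $\sin(|\xi|)/|\xi|$ factor --- absent in your $1$D sketch --- that saves the low-frequency tail.
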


\noindent We can now prove Proposition~\ref{prop:sharp}:
\begin{proof}[Proof of Proposition~\ref{prop:sharp}]
Let $\phi^{(j)}$ be as in Lemma~\ref{lem:trace.example}. Multiplying $\phi^{(j)}$ with $-1$ if necessary, we can assume that 
$$\phi^{(j)}(r=1) <-10.$$

We will now define the sequence $\{ (g^{(j)}, k^{(j)}) \}_{j=1}^{\infty}$ of initial data pairs on $\mathbb{R}^3$ as follows:
\begin{itemize}
\item $g^{(j)}$ is identically equal to the Euclidean metric $e$,
\item In the standard polar coordinates $(r, \vartheta, \varphi)$ on $\mathbb R^3$, the components of $k^{(j)}$ are
$$k^{(j)}_{rr} = k^{(j)}_{r\vartheta} = k^{(j)}_{\vartheta\varphi} = 0, \quad k^{(j)}_{\vartheta\vartheta} = \phi^{(j)}\quad \text{and} \quad k^{(j)}_{\varphi\varphi} = \phi^{(j)} \sin^2 \vartheta.$$
\end{itemize}

In view of Lemma~\ref{lem:trace.example}, the definition of $(g^{(j)}, k^{(j)})$ implies that
$$\|g^{(j)} -e \|_{H^{3/2}(\mathbb R^3)} + \| k^{(j)} \|_{H^{1/2}(\mathbb R^3)} \lesssim 2^{-j},$$

\noindent Now let $\Sigma = \rd B(0,1)$. Let $h$ denote the second fundamental form of $\Sigma$ in $(\mathbb R^3, g^{(j)})$. By a direct computation, $\mathrm{tr}_{g^{(j)}_\Sigma} h = 2$. On the other hand, $\mathrm{tr}_{g^{(j)}_\Sigma} k = -2 \phi^{(j)} =  -20$. Thus, $\Sigma$ is a trapped surface in $(\mathbb R^3, g^{(j)}, k^{(j)})$ for all $j \in \mathbb N$. \qedhere
\end{proof}

\subsection{A counterexample for the Einstein--scalar field system}

We will now proceed to establish Proposition \ref{prop:sharp.constraint}. 

\begin{proof}[Proof of Proposition~\ref{prop:sharp.constraint}] For any $j \in \mathbb N$, let $\psi_0^{(j)}, \psi_1^{(j)}$ be smooth, spherically symmetric functions on $B(0,1)$. Let also  $g^{(j)}$ and $k^{(j)}$ be, respectively, a spherically symmetric Riemannian metric and a spherically symmetric $(0,2)$-tensor on $B(0,1)$, expressed in polar coordinates $(\rho, \th, \varphi)$ as:
$$g^{(j)}(\rho,\varphi, \th) = \ud\rho^2 + (r^{(j)}(\rho))^2 \big( \ud\th^2 + \sin ^2 \th \ud\varphi^2 \big) $$
and
$$ k^{(j)} (\rho, \th, \varphi) = k_{\rho\rho}^{(j)}(\rho) \ud\rho^2 + k_{\th \th}^{(j)}(\rho)\big( \ud\th^2 + \sin ^2 \th \ud\varphi^2 \big),$$
where $r^{(j)}$, $k_{\rho\rho}^{(j)}$ and $ k_{\th\th}^{(j)}$ are smooth functions on $[0,1)$. In order to somewhat simplify our notations, from now on we will drop the superscript $\cdot ^{(j)}$ from $g,k$ and $\psi_0, \psi_1$ when no confusion arises.

With $(g,k; \psi_0, \psi_1)$ as above, the constraint equations \eqref{Hamiltonian constraint}--\eqref{Momentum constraint} reduce to:
\begin{align}
& -r''+\f1{2r} \Big( 1-(r')^2 \Big)  + \f1r k_{\rho\rho} k_{\th \th} + \f1{2r^3} k_{\th\th}^2  =\f14 r \big( (\psi_0')^2 + \psi_1^2 \big),   \label{Hamiltonian constraint symmetry}\\
 & k_{\rho \rho} r' - \big( \f{k_{\th\th}}{r} \big)' = \f12 r \psi_1 \psi_0'   \label{Momentum constraint symmetry}
\end{align}

We will use the following ansatz for $r$: 
$$ r^{(j)}(\rho) \doteq \rho \cdot \Big( 1 - 2^{-j-M} F(\rho)\Big)^{\f12},$$
where $M>0$ is a \emph{large} absolute constant (i.\,e. independent of $j$) and
$$ F(\rho) \doteq \int_0^\rho \chi (x) \log \big( -\log|x-1| \big) \ud x, $$
where $\chi :[0,1]\rightarrow [0,1]$ is a fixed  smooth cutoff function such that 
$$\chi \equiv 0  \quad \text{on} \quad [0, \f12], \quad\chi\equiv 1 \quad \text{on} \quad [\f34, 1] \quad \text{and} \quad \chi' \ge 0 \quad \text{on} \quad[0,1].$$
Note that, provided $M>10$, the function $r(\rho)$ belongs to $C^\infty([0,1)) \cap C^0([0,1])$ and satisfies
$$\f{r}{\rho} \ge \f12,$$
$$ r(\rho) = \rho \quad \text{for} \quad \rho \in [0,\f12]$$
and
\begin{align} \label{Positive ansatz}
-\f{r''}{r} &+ \f1{2r^2}\big( 1- (r')^2 \big)  \\
& = \f{2^{-j-M}}{8\rho^2 \big( 1-2^{-j-M} F(\rho)\big)^2} \Bigg( 4 \big( 1-2^{-j-M} F(\rho) \big)\Big( \rho^2 F''(\rho) + 3\rho F'(\rho)+F(\rho)\Big) + 2^{-j-M}(F'(\rho))^2 \Bigg) \nonumber\\
& \ge 0. \nonumber
\end{align}
We will define the coefficient $k_{\th\th}$ of $k$ by the relation
\begin{equation}
k_{\th\th}^{(j)} (\rho)  = -2^{-j-2M} \chi(2\rho) \rho, \label{k theta} 
\end{equation}
where $\chi$ was defined above. Note that the definition of the cutoff function $\chi$ implies that the support of $(k_{\th\th}/r)'$ is contained in the interval $[\f14, \f38]$.

We will define $\psi_0$ as follows:
$$ \psi_0^{(j)}(\rho) \doteq 2 \int_0^{\rho} \Big( -\f{(r^{(j)})''}{r^{(j)}} + \f1{2(r^{(j)})^2} \big( 1-((r^{(j)})')^2 \big) + \frac{(k_{\th\th}^{(j)})^2}{2(r^{(j)})^4}\Big)^{\f12}(\bar\rho)\, \ud \bar\rho.  $$
Note that $\psi_0$ is well-defined in view of \eqref{Positive ansatz}; moreover, in view of the properties of the cut-off function $\chi$ in the definition of $F$, $\psi_0$ is a smooth function on $[0,1)$ vanishing identically on $[0, \f14]$.

It remains to introduce an ansatz for the functions $\psi_1 (\rho)$ and $k_{\rho\rho}(\rho)$. The expressions for $\psi_1$ and $k_{\rho\rho}$ will be chosen to satisfy the following pair of relations:
\begin{align} \label{System h psi1}
k_{\rho\rho}^{(j)}(\rho)-2^{-j-2M-\frac{1}{2}} \chi(2\rho) \psi_1^{(j)}(\rho) & = - 2^{-j-2M+1} \chi'(2\rho),\\
\big( \psi_1^{(j)}(\rho) \big)^2  +2^{-j-2M+2} \frac{\chi(2\rho)}{\rho} k_{\rho\rho}^{(j)}(\rho) &= 0.  \nonumber 
\end{align}
In particular, we will choose
$$ \psi_1^{(j)}(\rho) \doteq -2^{-2j-4M+\f12} \frac{\chi^2(2\rho)}{\rho} + \sqrt{2^{-4j-8M+1}\frac{\chi^4(2\rho)}{\rho^2}+2^{-2j-4M+3}\frac{\chi(2\rho)}{\rho}\chi'(2\rho)}$$
and
$$ k_{\rho\rho}^{(j)}(\rho) \doteq -2^{-j-2M+1} \chi'(2\rho) + 2^{-j-2M-\f12}\chi(2\rho) \psi_1^{(j)}(\rho).$$
Note that both $\psi_1$ and $k_{\rho\rho}$ vanish outside the support of $\chi'(2\rho)$. In particular,
$$ \text{supp }\psi_1^{(j)}, \text{supp }k_{\rho\rho}^{(j)} \subseteq [\f14, \f38]. $$
 Recall that $r(\rho) = \rho$ on $[0, \f12]$. Thus, we readily deduce that $(g^{(j)}, k^{(j)}; \psi_0^{(j)}, \psi_1^{(j)})$ defined as above satisfy the constraint equations \eqref{Hamiltonian constraint symmetry}--\eqref{Momentum constraint symmetry}.
 
  Notice that, for any $j\in \mathbb N$, there exists some $\rho^{(j)}_0 \in (\f12,1) $ such that 
\begin{equation} \label{Zero normal}
(r^{(j)})' (\rho_0^{(j)}) = 0. 
\end{equation}
This can be immediately inferred from the fact that $(r^{(j)})'(\f12)=1$ and $(r^{(j)})'(1)=-\infty$. We will now show that the sphere $S^{(j)} = \{ \rho = \rho_0^{(j)} \}$ is a trapped surface for the initial data set $(g^{(j)}, k^{(j)}; \psi_0^{(j)}, \psi_1^{(j)})$: We immediately calculate that the normal $N$ of $S^{(j)}$  is $\frac{\partial}{\partial \rho}$
 and the second fundamental $h$ of $S^{(j)}$ vanishes identically, as a consequence of \eqref{Zero normal}. Therefore:
 $$ \tr_{S^{(j)}} (k-h) = \tr_{S^{(j)}} (k+h) = \tr_{S^{(j)}} k =\frac{2}{\big( r^{(j)}(\rho_0^{(j)}) \big)^{2}} k^{(j)}_{\th \th}(\rho_0^{(j)}) = - 2^{-j-2M+1}  \frac{\chi(2\rho_0^{(j)})}{\rho_0^{(j)} \big( 1 - 2^{-j-M} F(\rho_0^{(j)}) \big)}   < 0 $$
(since $\rho_0^{(j)} \in [\f12, 1]$), i.e.~$S^{(j)}$ is a trapped surface. 

We will now show that $(g,k)$ satisfy the smallness bound \eqref{Smallness counterexample}. It is straightforward to check that the components $k_{\rho\rho}$ and $k_{\th \th}$ of $k$ can be extended as $C^\infty$ functions on the closed unit ball $\{ \rho \le 1 \}$ and are supported on $\{ \rho \ge \f14 \}$, satisfying
$$ \| k^{(j)}_{\rho \rho}\|_{C^l(B(0,1))}+\| k^{(j)}_{\th \th}\|_{C^l(B(0,1))} \lesssim_l 2^{-j-2M} \quad \text{for any} \quad j,l\in \mathbb{N}.$$
As a result, $k^{(j)}$ satisfies \eqref{Smallness counterexample} provided $M$ has been chosen to be larger than some explicit constant. As for the metric $g$, we can express the components of the tensor $g-e$ in Cartesian coordinates as follows:
\begin{align*}
g^{(j)}-e & = \big( \frac{(r^{(j)}(\rho))^2}{\rho^2 } - 1\big) \rho^2 \big( \ud\th^2 + \sin^2 \th \ud\varphi^2 \big) \\
&= \big( \frac{(r^{(j)}(|x|))^2}{|x|^2 } - 1\big) \Big( \sum_{i=1}^3 (\ud x^i)^2 -\big( \sum_{i=1}^3 \f{x^i}{|x|} \ud x^i \big)^2\Big)\\
& =2^{-j-M} F(|x|) \Big( \sum_{i=1}^3 (\ud x^i)^2 -\big( \sum_{i=1}^3 \f{x^i}{|x|}\ud x^i \big)^2\Big).
\end{align*}
Thus, the proof of \eqref{Smallness counterexample} will follow once we show that the spherically symmetric function $F(|x|)$ on $B(0,1)$ has finite $H^{\f32}$ norm. Since $F(|x|)$ is a $C^\infty$ function away from $|x|=1$ and is supported on $\{ |x| \ge \f12\}$, that statement follows as a corollary of Lemma \ref{lem:Function} below. \qed   
\end{proof}

The remainder of this subsection will be devoted to the proof of Lemma~\ref{lem:Function}, which we have used above.

\begin{lemma}\label{lem:use.2D.trace}
Let $\mathcal X:\mathbb R\to [0,1]$ be a smooth cutoff function such that 
$\mathrm{supp}(\mathcal X) \subset [\f 9{10}, \f {11}{10}]$ and $\mathcal X\equiv 1$ on $[\f {19}{20}, \f {21}{20}]$. Then $\mathcal X(x- 1) \log |\log x| \in H^{\f 12}(\mathbb R)$.
\end{lemma}
\begin{proof}
Consider the function $h:\mathbb R^2 \to \mathbb R$, which is radial and given by $h(r) = \mathcal X(r - \f 1{10}) \log|\log r|$. We compute $h'(r) = \mathcal X'(r - 1) \log|\log r| + \mathcal X(r - 1) \f{1}{r \log r}$. Hence,
$$\|h \|_{H^1(\mathbb R^2)}^2 = \int_0^\infty |h'(r)|^2 r\, \ud r \leq C \int_{\f 1{20}}^{\f 1{10}}  (\log|\log r|)^2  r\, \ud r + \int_0^{\f 1{10}} \Big( \f{1}{r\log r} \Big)^2 r \, \ud r <\infty.$$
Since $h(x,y) \mapsto h(x,0)$ is a bounded map $H^1(\mathbb R^2) \to H^{\f 12}(\mathbb R)$ (by standard trace estimates), we obtain the desired result.  \qedhere
\end{proof}

By translation invariance of the $H^{\f 12}(\mathbb R)$ norm, Lemma~\ref{lem:use.2D.trace} immediately implies the following result:
\begin{lemma}\label{lem:just.translate}
Define $G:\mathbb R\to \mathbb R$ by 
\begin{equation}\label{eq:G}
G(\xi)\doteq \int_0^\infty e^{-i r \xi} \mathcal X(r) \log(-\log |r-1|) \, \ud r,
\end{equation}
where $\mathcal X$ is as in Lemma~\ref{lem:use.2D.trace}. Then 
$$\int_{-\infty}^\infty (1+ |\xi|^2)^{\f 12}|G(\xi)|^2 \, \ud \xi < \infty.$$
\end{lemma}

\begin{lemma} \label{lem:Function}
Consider the radial function $h: \mathbb R^3 \to \mathbb R^3$ given by
$$h(r) = \f 1 r \mathcal X(r) \log(-\log |r-1|),$$
where $\mathcal X$ is a cutoff function as in Lemma~\ref{lem:use.2D.trace}. Then $h \in H^{\f 12}(\mathbb R^3)$.
\end{lemma}

\begin{proof}
The Fourier transform $\mathcal F h$ is $h$ is a radial function, i.e.~can be expressed as $\mathcal F h (\xi) = \widehat{h}(|\xi|)$, where $\widehat{h}$ is expressed as the Hankel transform of $h$. Hence,
$$\widehat{h}(s) = 4\pi \int_0^\infty \f{\sin(sr)}{sr} h(r) r^2 \, \ud r = -\f{4\pi}{s} \mathfrak{Im}\Big(G(s) \Big),$$
where the function $G(s)$ was defined in  \eqref{eq:G} and $\mathfrak{Im} \Big(G(s) \Big)$ denotes its imaginary part.

We now compute
\begin{equation*}
\begin{split}
 \int_{\mathbb R^3}(1+|\xi|^2)^{\f12}\big|\mathcal F h(\xi) \big|^2 \, \ud \xi 
= 16\pi^2 \int_0^\infty (1+|s|^2)^{\f 12} \Big( \f{\mathfrak{Im}(G(s) )}{s}\Big)^2 s^2\, \ud s \leq 16\pi^2 \int_{-\infty}^\infty (1+ |s|^2)^{\f 12}|G(s)|^2 \, \ud s < + \infty
\end{split}
\end{equation*}
by Lemma~\ref{lem:just.translate}. This implies $h \in H^{\f 12}(\mathbb R^3)$. \qedhere
\end{proof}

\subsection{A spherically symmetric result in $H^{\f 32}$}

\begin{proof}[Proof of Proposition~\ref{prop:SS}]
In order for this to be a smooth metric, we must have $\f{r^2(\rho)}{\rho^2} -1 = O(\rho^2)$ as $\rho\to 0$. Taylor expanding $r$ in $\rho$, this means that 
\begin{equation}\label{eq:r.regularity}
r(\rho) = \rho + O(\rho^3).
\end{equation}

\pfstep{Step~1: Application of Sobolev embedding} Since $H^{\f 12}(B(0,R)) \hookrightarrow L^{3}(B(0,R))$, $H^{\f 32}(B(0,R)) \hookrightarrow W^{1,3}(B(0,R))$ (with constants independent of $R$), the assumptions on $k$, $\psi_0$ and $\psi_1$ imply that
\begin{equation}\label{eq:SS.embedding}
    \int_0^R \Big(|k_{\rho\rho}|^3 + |\f{k_{\th\th}}{\rho^2}|^3 + |\psi_0'|^3 + |\psi_1|^3\Big)\, \rho^2 \ud \rho \ls \ep_0^3.
\end{equation}

\pfstep{Step~2: Controlling the geometry} Define the \emph{Hawking mass} by 
\begin{equation}\label{eq:Hawking}
    m = \f r2 + \f 12 \f{k_{\th\th}^2}r - \f{r (r')^2}2.
\end{equation}

For $\rho_* \in (0,R]$, assume that the following \textbf{bootstrap assumptions} hold for all $\rho \in [0, \rho_*)$:
\begin{itemize}
    \item 
    \begin{equation}\label{eq:BA1}
        \f{\rho}2 \leq r(\rho) \leq 2 \rho.
    \end{equation}
    \item 
    \begin{equation}\label{eq:BA2}
        \f {2m}r \leq \f 12,
    \end{equation}
\end{itemize}
By smoothness at $\rho =0$, we know that \eqref{eq:BA1} and \eqref{eq:BA2} hold for small $\rho$. Our \textbf{goal} will be to show that 
\begin{itemize}
    \item 
    \begin{equation}\label{eq:BA1.goal}
        \f{\rho}{\sqrt{2}} \leq r(\rho) \leq \sqrt{2} \rho.
    \end{equation}
    \item 
    \begin{equation}\label{eq:BA2.goal}
        \f {2m}r \leq \f 14,
    \end{equation}
\end{itemize}
Once we prove the bounds in \eqref{eq:BA1.goal} and \eqref{eq:BA2.goal} under the bootstrap assumptions \eqref{eq:BA1} and \eqref{eq:BA2}, a standard continuity argument shows that in fact both \eqref{eq:BA1.goal} and \eqref{eq:BA2.goal} hold for all $\rho \in [0, R]$.

\pfstep{Step~2(a): Non-negativity of $m$} Using the definition of $m$ in \eqref{eq:Hawking} together with the constraint equations \eqref{Hamiltonian constraint symmetry}--\eqref{Momentum constraint symmetry}, we obtain
\begin{equation}\label{eq:mass.equation}
    \begin{split}
        m' = &\: \Big(\f r2 + \f 12 \f{k_{\th\th}^2}r - \f{r (r')^2}2 \Big)' \\
        = &\: \f{r'}2 - \f 12 r' \f{k_{\th\th}^2}{r^2} + r \f{k_{\th\th}}{r} (k_{\rho\rho} r' - \f 12 r \psi_1 \psi_0') \\
        &\: - \f{(r')^3}2 - rr' \Big(\f 1{2r} (1 - (r')^2) + \f 1r k_{\rho\rho} k_{\th\th} + \f 1{2r^3} k_{\th\th}^2 - \f 14 r ((\psi_0')^2 + \psi_1^2) \Big) \\
        = &\: \f 12 r' \f{k_{\th\th}^2}{r^2} + r \f{k_{\th\th}}r (k_{\rho\rho} r' - \f 12 r \psi_1 \psi_0') -rr'\Big( \f 1r k_{\rho\rho} k_{\th\th} + \f 1{2r^3} k_{\th\th}^2 - \f 14 r ((\psi_0')^2 + \psi_1^2)\Big)\\
        = &\:  -r k_{\th\th}  \psi_1 \psi_0' + r^2 r'\f 14 \Big((\psi_0')^2 + \psi_1^2\Big)\\
        = &\: \f{r^2}8\Big(r'+ \f{k_{\th\th}}{r}\Big) \Big(\psi_1 - \psi_0' \Big)^2 + \f{r^2}8 \Big(r'- \f{k_{\th\th}}{r}\Big)\Big(\psi_1 + \psi_0' \Big)^2.
    \end{split}
\end{equation}

By \eqref{eq:Hawking} and \eqref{eq:BA2}, we have 
\begin{equation}\label{eq:r'.lower}
    (r')^2 = (1-\f{2m}r) + \f{k_{\th\th}^2}{r^2} \geq \f 12.
\end{equation}
Using \eqref{eq:r.regularity}, we have $r'(0) =1$. Hence, continuity of $r'$ and \eqref{eq:r'.lower} imply that $r' \geq \f{1}{\sqrt{2}}$.

By \eqref{eq:BA2}, $1 + \f 12 \f{k_{\th\th}^2}{r^2}- \f{(r')^2}2\leq \f 12$. In particular, using also the positivity of $r'$ that we just established,
\begin{equation}\label{eq:kthth/rleqr'}
    |\f{k_{\th\th}}{r}|< |r'| = r'.
\end{equation}

Hence, every term on the right-hand side of \eqref{eq:mass.equation} is non-negative. Since regularity implies that $m=0$ at $\rho = 0$, we obtain $m \geq 0$ for all $\rho \in [0,\rho_*)$.

\pfstep{Step~2(b): Proof of \eqref{eq:BA1.goal}} The lower bound in \eqref{eq:BA1.goal} is easier. Indeed, \eqref{eq:r'.lower} implies (using \eqref{eq:r.regularity}) that
\begin{equation}\label{eq:BA1.goal.lower}
    r(\rho) = \int_0^\rho r'(\rho)\, \ud \rho \geq \f 1{\sqrt{2}}\int_0^\rho  \ud \rho = \f \rho{\sqrt{2}}.
\end{equation}

To obtain the upper bound in \eqref{eq:BA1.goal}, we first need an improved estimate for $k_{\th\th}$. For this, note that by \eqref{eq:BA1} and \eqref{eq:SS.embedding}, we have $\int_0^R |\f{k_{\th\th}}{r}|^3\, \rho^{-1} \ud \rho \ls \ep_0$. Hence, by the Cauchy--Schwarz inequality,
\begin{equation}\label{eq:kthth.gain.log}
\int_0^\rho |\f{k_{\th\th}}{r}| \, \ud \bar\rho = \int_0^\rho |\f{k_{\th\th}}{r}| \f{\rho^{1/3}}{\rho^{1/3}}\, \ud \bar\rho \leq (\int_0^\rho |\f{k_{\th\th}}{r}|^3 \rho^{-1} \, \ud \bar\rho)^{\f 13} (\int_0^\rho \bar\rho^{1/2}\, \ud \bar\rho)^{2/3} \ls \ep_0 \rho.
\end{equation}
Now, we use \eqref{eq:Hawking} together with $m\geq 0$ (established in Step~2(a)) and \eqref{eq:kthth.gain.log} to obtain
\begin{equation}\label{eq:r'.upper}
    (r')^2 = 1- \f{2m}r + \f{k_{\th\th}^2}{r^2} \leq 1 + \f{k_{\th\th}^2}{r^2},
\end{equation}
which implies 
\begin{equation}\label{eq:BA1.goal.upper}
    r(\rho) = \int_0^\rho r'(\rho)\, \ud \rho \leq \int_0^\rho \sqrt{1 + \f{k_{\th\th}^2}{r^2}} \, \ud \rho \leq \int_0^\rho \Big( 1 + |\f{k_{\th\th}}{r}| \Big) \, \ud \rho \leq \rho + O(\ep_0)\rho.
\end{equation}

The estimates \eqref{eq:BA1.goal.lower} and \eqref{eq:BA1.goal.upper} imply the lower and upper bound in \eqref{eq:BA1.goal} respectively.

\pfstep{Step~2(c): Improved bound for $r'$} In this step, we derive a bound for $r'(\rho)$ (see already \eqref{eq:r'.improved}), which will be important in Step~2(d) below.

\pfstep{Step~2(c).i: An estimate for $\f{k_{\th\th}}{r}$} By the constraint equation \eqref{Momentum constraint symmetry}, 
\begin{equation}\label{eq:kthth/r.constraint}
    \Big|(\f{k_{\th\th}}{r})' \Big| \leq |k_{\rho\rho} r'| + |r\psi_1\psi_0'|. 
\end{equation}

We control each term on the right-hand side of \eqref{eq:kthth/r.constraint}. By the Cauchy--Schwarz inequality and \eqref{eq:SS.embedding}, we have
\begin{equation}\label{eq:kthth/r.constraint.1}
    \int_0^\rho |k_{\rho\rho} r'|^{3/2} \, \bar\rho^2 \ud \bar\rho \ls \Big(\int_0^\rho |k_{\rho\rho}|^{3}\,\bar{\rho}^2 \ud \bar\rho \Big)^{1/2} \Big(\int_0^\rho |r'|^{3}\, \bar{\rho}^2 \ud \bar\rho\Big)^{1/2} \ls \ep_0^{\f 32} \Big(\int_0^\rho |r'|^{3}\, \bar{\rho}^2 \ud \bar\rho\Big)^{1/2}.
\end{equation}
By \eqref{eq:BA1}, the Cauchy--Schwarz inequality and \eqref{eq:SS.embedding}, we have
\begin{equation}\label{eq:kthth/r.constraint.2}
    \int_0^\rho |r\psi_1\psi_0'|^{3/2} \,  \bar\rho^2 \ud \bar\rho \leq 2^{3/2} \rho^{3/2} \Big( \int_0^\rho |\psi_1|^3 \, \bar\rho^2 \, \ud \bar\rho \Big)^{1/2} \Big(\int_0^\rho |\psi_0'|^3 \, \bar\rho^2 \ud \bar\rho \Big)^{1/2} \ls \ep_0^{3} \rho^{3/2}.
\end{equation}
Hence, plugging the bounds \eqref{eq:kthth/r.constraint.1} and \eqref{eq:kthth/r.constraint.2} into \eqref{eq:kthth/r.constraint}, we obtain
\begin{equation}
    \int_0^\rho |(\f{k_{\th\th}}{r})'|^{3/2} \,  \bar\rho^2 \ud \bar\rho \ls \ep_0^3 \rho^{3/2} + \ep_0^{\f 32}\Big(\int_0^\rho |r'|^{3}\, \bar{\rho}^2 \ud \bar\rho\Big)^{1/2}.
\end{equation}
Using the $\dot{W}^{1,\f 32}(B(0,R)) \hookrightarrow L^3(B(0,R))$ Sobolev embedding, we then obtain
\begin{equation}\label{eq:kthth/r}
    \int_0^\rho |\f{k_{\th\th}}{r}|^{3} \,  \bar\rho^2 \ud \bar\rho \ls \Big(\int_0^\rho |(\f{k_{\th\th}}{r})'|^{3/2} \,  \bar\rho^2 \ud \bar\rho\Big)^2\ls \ep_0^6 \rho^{3} + \ep_0^{3}\Big(\int_0^\rho |r'|^{3}\, \bar{\rho}^2 \ud \bar\rho\Big).
\end{equation}

\pfstep{Step~2(c).ii: Proof of the improved estimate for $r'$} By \eqref{eq:r'.upper} and \eqref{eq:kthth/r}, we have
\begin{equation}
\begin{split}
    \int_0^\rho |r'|^{3} \,  \bar\rho^2 \ud \bar\rho \ls \int_0^\rho \Big(1 + \big|\f{k_{\th\th}}{r} \big|^3 \Big) \,  \bar\rho^2 \ud \bar\rho \ls \rho^3 + \ep_0^{3}\Big(\int_0^\rho |r'|^{3}\, \bar{\rho}^2 \ud \bar\rho\Big).   
\end{split}
\end{equation}
For $\ep_0$ sufficiently small, we can absorb the last term on the right-hand side to the left to obtain
\begin{equation}\label{eq:r'.improved}
\begin{split}
    \int_0^\rho |r'|^{3} \,  \bar\rho^2 \ud \bar\rho \ls \rho^3.   
\end{split}
\end{equation}

\pfstep{Step~2(d): Proof of \eqref{eq:BA2.goal}} Plugging the bounds \eqref{eq:BA1} and \eqref{eq:kthth/rleqr'} into \eqref{eq:mass.equation}, we obtain
\begin{equation}\label{eq:m'.estimated}
    |m'(\rho)| \leq 2\rho^2 |r'| \Big( \psi_1^2 + (\psi_0')^2 \Big).
\end{equation}
Integrating \eqref{eq:m'.estimated} starting from the regularity condition $m(\rho=0) = 0$, we obtain
\begin{equation}
    m(\rho) \leq 2\int_0^\rho \bar\rho^2 |r'| \Big( \psi_1^2 + (\psi_0')^2 \Big) \, \ud \bar\rho.
\end{equation}
for all $\rho \in [0, \rho_*)$.

Using H\"older's inequality and then \eqref{eq:SS.embedding} and \eqref{eq:r'.improved}, we obtain
\begin{equation}\label{eq:BA2.goal.almost}
    m(\rho) \ls \Bigg(\int_0^\rho \Big(|\psi_0'|^3 + |\psi_1|^3 \Big)\, \bar\rho^2 \ud \bar\rho \Bigg)^{2/3} \Big(\int_0^\rho |r'|^{3} \,  \bar\rho^2 \ud \bar\rho \Big)^{1/3} \ls \ep_0^2 \rho. 
\end{equation}

Finally, combining \eqref{eq:BA2.goal.almost} with \eqref{eq:BA1}, we obtain
$$\f mr(\rho) \ls \ep_0^2$$
for all $\rho \in [0,\rho_*)$. After choosing $\ep_0$ to be smaller, we have thus proven \eqref{eq:BA2.goal} and concluded the bootstrap argument.

\pfstep{Step~3: Conclusion of the argument} Having controlled the geometry in Step~2, the conclusion now follows straightforwardly. 

Indeed, denote by $S_\rho$ the $2$-sphere of constant $\rho \in (0,R)$. We first compute 
\begin{equation}
\mathrm{tr}_{S_\rho}(k \pm h) = \f{2}{r^2(\rho)} k_{\th\th} \pm \f{2 r'(\rho)}{r(\rho)}.    
\end{equation}
Now, in the course of the bootstrap argument in Step~2, we have obtained the bound \eqref{eq:kthth/rleqr'}:
$$|\f{k_{\th\th}}{r}|< |r'|.$$
Moreover, since $\lim_{\rho\to 0^+} r'(\rho) = 1$ (by \eqref{eq:r.regularity}) and $|r'(\rho)|\geq \f{1}{\sqrt{2}}$ (by \eqref{eq:r'.lower}), we have $r'(\rho) \geq \f 1{\sqrt{2}} >0$ for all $\rho \in [0,R)$. As a result, we know that $\mathrm{tr}_{S_\rho}(k + h) >0$ and $\mathrm{tr}_{S_\rho}(k - h) <0$, i.e., $S_\rho$ is not trapped. Since $\rho \in (0,R)$ is arbitrary, we have completed the proof. \qedhere
\end{proof}

\begin{remark*}
Notice that we did not use the full strength of the smallness assumption \eqref{eq:SS.smallness}. In fact, in this setting, it suffices to assume the weaker smallness assumption \eqref{eq:SS.embedding}.
\end{remark*}

\bibliographystyle{DLplain}
\bibliography{Notrappedsurfaces}

\begin{thebibliography}{10}

\bibitem{AL}
X.~An and J.~Luk.
\newblock Trapped surfaces in vacuum arising dynamically from mild incoming
  radiation.
\newblock {\em Adv. Theor. Math. Phys.}, 21(1):1--120, 2017.

\bibitem{dC1993}
D.~Christodoulou.
\newblock Bounded variation solutions of the spherically symmetric
  {E}instein-scalar field equations.
\newblock {\em Comm. Pure Appl. Math.}, 46(8):1131--1220, 1993.

\bibitem{dC1999}
D.~Christodoulou.
\newblock The instability of naked singularities in the gravitational collapse
  of a scalar field.
\newblock {\em Ann. of Math. (2)}, 149(1):183--217, 1999.

\bibitem{Chr}
D.~Christodoulou.
\newblock {\em The formation of black holes in general relativity}.
\newblock EMS Monographs in Mathematics. European Mathematical Society (EMS),
  Z\"urich, 2009.

\bibitem{CK}
D.~Christodoulou and S.~Klainerman.
\newblock {\em The global nonlinear stability of the {M}inkowski space},
  volume~41 of {\em Princeton Mathematical Series}.
\newblock Princeton University Press, Princeton, NJ, 1993.

\bibitem{sKiR2005}
S.~Klainerman and I.~Rodnianski.
\newblock Causal geometry of {E}instein-vacuum spacetimes with finite curvature
  flux.
\newblock {\em Invent. Math.}, 159(3):437--529, 2005.

\bibitem{sKiR2008}
S.~Klainerman and I.~Rodnianski.
\newblock On the radius of injectivity of null hypersurfaces.
\newblock {\em J. Amer. Math. Soc.}, 21(3):775--795, 2008.

\bibitem{sKiRjS2015}
S.~Klainerman, I.~Rodnianski, and J.~Szeftel.
\newblock The bounded {$L^2$} curvature conjecture.
\newblock {\em Invent. Math.}, 202(1):91--216, 2015.

\bibitem{hLiR2010}
H.~Lindblad and I.~Rodnianski.
\newblock The global stability of {M}inkowski space-time in harmonic gauge.
\newblock {\em Ann. of Math. (2)}, 171(3):1401--1477, 2010.

\bibitem{rP1965}
R.~Penrose.
\newblock Gravitational collapse and space-time singularities.
\newblock {\em Phys. Rev. Lett.}, 14:57--59, 1965.

\bibitem{hS1969}
H.~Samelson.
\newblock Orientability of hypersurfaces in {$R^{n}$}.
\newblock {\em Proc. Amer. Math. Soc.}, 22:301--302, 1969.

\end{thebibliography}

\end{document}